\renewcommand{\epsilon}{\varepsilon}
\renewcommand{\phi}{\varphi}
\newcommand{\C}{\mathbb{C}}
\DeclareMathOperator{\Tr}{Tr}
\newcommand{\ketbra}[2]{|#1\rangle\langle#2|} 
\newtheorem{theorem}{Theorem}[section]
\newtheorem{definition}[theorem]{Definition}
\newtheorem{proposition}[theorem]{Proposition}
\newtheorem{corollary}[theorem]{Corollary}
\newtheorem{lemma}[theorem]{Lemma}
\DeclareMathAlphabet\mathbfcal{OMS}{cmsy}{b}{n}
\begin{document}
\title{Properties of computational entanglement measures}

\author{Ilia Ryzov, Faedi Loulidi, David Elkouss}
\email{ilia.ryzov@oist.jp}
\email{faedi.loulidi@oist.jp}
\email{david.elkouss@oist.jp}
\address{Okinawa Institute of Science Technology,Okinawa, Japan}
\date{\today}

\begin{abstract}
Quantum entanglement is a useful resource for implementing communication tasks. However, for the resource to be useful in practice, it needs to be accessible by parties with bounded computational resources. 
Computational entanglement measures quantify the usefulness of entanglement in the presence of limited computational resources.
In this paper, we analyze systematically some basic properties of two recently introduced computational entanglement measures, the computational distillable entanglement and entanglement cost. 
To do so, we introduce lower bound and upper bound extensions of basic properties to address the case when entanglement measures are not defined by a scalar value but when only lower or upper function bounds are available.
In particular, we investigate the lower bound convexity and upper bound concavity properties of such measures, and the upper and lower bound additivity with respect to the tensor product. We also observe that these measures are not invariant with local unitaries, although invariance is recovered for efficient unitaries. As a consequence, we obtain that these measures are only LOCC monotones under efficient families of  LOCC channels.
Our analysis covers both the one-shot scenario and the uniform setting, with properties established for the former naturally extending to the latter.
\end{abstract}
\maketitle
\tableofcontents
\newpage
\section{Introduction}
Entanglement is a distinctive quantum correlation that empowers tasks impossible in the classical world. It serves as the key resource in various tasks, including quantum teleportation \cite{bennett1993teleporting}, super-dense coding \cite{PhysRevLett.69.2881}, and quantum cryptography \cite{Pirandola:20}. Traditional entanglement theory idealizes parties furnishing them with unbounded local computational power: so called local operations and classical communication (LOCC) protocols may call on arbitrarily large coherent unitaries and classical post-processing. Yet every real device is constrained by circuit depth, qubit count and classical computing time.

To bridge this gap Arnon-Friedman et al. \cite{arnon2023computational} proposed a computational resource theory of entanglement restricting LOCC protocols to polynomial-size quantum circuits. Subsequent works showed how computational constraints radically reshape entanglement manipulations—revealing qualitative gaps in accessible entanglement, inspiring pseudo-entangled state constructions \cite{leone2025entanglement}, and even providing efficient-entanglement tools for probing AdS/CFT complexity \cite{akers2024holographic}. Despite this growing relevance, the structure of computational entanglement measures remains largely unexplored.

We take first steps toward clarifying how classical entanglement‐measure axioms translate to the two computational quantities introduced in \cite{arnon2023computational}: the computational distillable entanglement and the computational entanglement cost. Because these measures are specified by function bounds rather than scalar values, the familiar notions of convexity, additivity and LOCC or Local Unitary (LU) invariance must be reformulated. In particular, we
\begin{itemize}
    \item introduce lower-bound convexity and upper-bound concavity to capture how efficient distillation or dilution behaves under mixing;
    \item establish lower-bound superadditivity for distillable entanglement and upper-bound subadditivity for entanglement cost under tensor products;
    \item show that full LU/LOCC monotonicity can fail, yet holds whenever the operations themselves are computationally efficient;
    \item and lift all one-shot statements to uniform families of instances, matching the keyed settings common in cryptography.
\end{itemize}

These results supply a first structural toolkit for reasoning about entanglement when local computation is bounded. The remainder is organized as follows. In Section \ref{sec:main results} we summarize our main theorems. In Section \ref{sec: intro information theory} we review quantum information theory preliminaries. In Section \ref{sec: comp entanglement}, 
we recall the computational measures introduced in \cite{arnon2023computational} and we introduce extended notions of some entanglement measures properties to their lower bound/upper bound counterparts. In Section, \ref{sec: convex concave}, we analyze the lower bound convexity and the upper bound concavity property for each of the computational distillable entanglement and cost. In Section \ref{sec: subb and super distillation and cost}, we analyze the upper bound subadditivity and lower bound superadditivity properties. In Section \ref{sec: invariance local unitaries}, analyze the invariance under local unitaries and LOCC monotonicity for each of the computational distillable entanglement and cost, respectively.

\section{Main Results} \label{sec:main results}
In this work, we analyze the mathematical properties of the \emph{computational distillable entanglement} and the \emph{computational entanglement cost} \cite{arnon2023computational}, which quantify, under computational constraints, the number of EPR pairs one can extract from a given state and the number of EPR pairs needed to prepare it, respectively. We begin by presenting properties that entanglement measures may satisfy. We then recall informal definitions of the computational entanglement measures and summarize the main results.

Let $\mathcal{D}(\mathcal{H}_{AB})$ denote the set of quantum states shared between Alice and Bob, where $\mathcal{H}_{AB} := \mathcal{H}_A \otimes \mathcal{H}_B$ are their respective Hilbert spaces. Let $E$ be a generic entanglement measure:
\begin{align*}
E : \mathcal{D}(\mathcal{H}_{AB}) &\to \mathbb{R}^+ \\
\rho_{AB} &\mapsto E(\rho_{AB}) \geq 0.
\end{align*}
Entanglement measures are typically evaluated under the paradigm of \emph{local operations and classical communication} (LOCC). An LOCC protocol consists of an arbitrary sequence of local quantum operations performed independently by each party followed by classical communication. 

An entanglement measure $E$ may satisfy some of the following properties:
\begin{itemize}
    \item \emph{Convexity/Concavity}: $E$ is convex (concave), if for any finite set of bipartite states $\{\rho^x_{AB}\}_{x\in X}$ and for any probability distribution $p$  over $X$ the following holds:   
    \begin{equation*}
        E\left(\sum_{x\in X}p(x)\rho^x_{AB}\right)\underset{\displaystyle (\geq)}{\leq} \sum_{x\in X}p(x)E(\rho^x_{AB}).
    \end{equation*}

    \item \emph{Superadditivity/Subadditivity}: $E$ is superadditive (subadditive) if for any given states shared between Alice and Bob $\rho_{AB}$ and $\sigma_{AB}$ the following holds:
    \begin{equation*}
        E(\rho_{AB}\otimes\sigma_{AB})\underset{\displaystyle (\leq)}{\geq} E(\rho_{AB})+E(\sigma_{AB}).
    \end{equation*}
    \item \emph{LOCC monotonicity}: $E$ is an LOCC monotone if for any LOCC protocol $\Gamma$ and state $\rho_{AB}$,
    \begin{equation*}
    E(\rho_{AB}) \geq E(\Gamma(\rho_{AB})).
    \end{equation*}
    
\item \emph{Invariance under local unitaries}: $E$ is invariant under local unitaries if for any shared state $\rho_{AB}$ and for any unitary $U_{AB}:=U_A\otimes U_B$ we have 
\begin{equation*}
E(U_{AB}\,\rho_{AB}\,U^{\dagger}_{AB})=E(\rho_{AB}).
\end{equation*}
\end{itemize}

While these properties are well-studied for entanglement measures, many open problems remain (see \cite[Table 1]{brandao2011faithful}).

Computational entanglement theory \cite{arnon2023computational} introduces entanglement measures under computational assumptions. Specifically, it restricts LOCC operations to those implementable by circuits with a polynomial number of gates. 

\begin{definition}(Informal of Definition \ref{def: definition efficient LOCC}) A family of LOCC channels $\{\hat{\Gamma}^\lambda\}$ is efficient if every LOCC channel in the family admits a circuit representation and the total number of gates in these circuit representations grows at most polynomially with $\lambda$.    
\end{definition}

Figure~\ref{fig:circuit LOCC} illustrates such circuit representations. The definitions of computational distillable entanglement and entanglement cost introduced in \cite{arnon2023computational} (see also Definitions~\ref{def: comp distil vidick} and~\ref{def: comp entang cost vidick}) give bounds on the number of EPR pairs that can be extracted from, or needed to reconstruct, a given state.

\begin{definition}[Informal]
Let $\{\rho^\lambda_{AB}\}$ be a family of bipartite quantum states. 
A function \(m(\lambda)\) is a valid lower bound on the computational distillable entanglement \(\hat{E}_D^\epsilon\) for the family \(\{\rho^\lambda_{AB}\}\) if there exists an efficient family of LOCC channels that extract at least \(m(\lambda)\) EPR pairs from each \(\rho^\lambda_{AB}\) with error at most \(\epsilon(\lambda)\).

A function \(n(\lambda)\) is a valid upper bound on the computational entanglement cost \(\hat{E}_C^\epsilon\) for the family \(\{\rho^\lambda_{AB}\}\) if there exists an efficient family of LOCC channels that prepare each \(\rho^\lambda_{AB}\) from at most \(n(\lambda)\) EPR pairs with error at most \(\epsilon(\lambda)\).
\end{definition}

The main goal of this work is to analyze different properties of both the computational distillable entanglement $\hat E^{\epsilon}_D$ and the computational entanglement cost $\hat E^{\epsilon}_C$. However, the conventional desired properties of entanglement measures (such as convexity (concavity), superadditivity (subadditivity) and so on) are defined assuming that entanglement measures have a scalar value. The computational distillable entanglement and cost, however, are given by a valid function lower and upper bounds, respectively. Therefore, to analyze different mathematical properties we need to extend classical properties to accommodate for such measures. For this we define \emph{lower and upper bound} counterparts of some selected properties of entanglement measures. We refer to Section \ref{sec: comp entanglement} for more details.

Our first results establish that, in both the one-shot and uniform settings, $\hat E^{\epsilon}_D$ is \emph{lower bound convex}, and $\hat E^{\epsilon}_C$ is \emph{upper bound concave}. See Section~\ref{sec: convex concave}.

\begin{theorem}
    (Informal): Let $\{\rho_x^{\lambda}\}_{x\in X}$ shared quantum states between Alice and Bob and let $\sigma^\lambda$ be a convex mixture of states in $\{\rho_x^{\lambda}\}_{x\in X}$. Then the following holds:
        \begin{equation*}
        \hat{E}^{\epsilon}_D(\{\sigma^\lambda\}) \geq m(\lambda) \implies \sum_{x\in X}p(x)\hat{E}^{\epsilon_x}_D(\{\rho_x^{\lambda}\})\geq m(\lambda),
    \end{equation*}
where $\sigma^\lambda = \sum_{x\in X}p(x)\rho_x^{\lambda}$ and $\varepsilon_x(\lambda)$ denotes the individual error in distilling from each $\rho_x^\lambda$.
    \end{theorem}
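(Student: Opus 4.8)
The plan is to recycle the efficient distillation family that witnesses the premise and run it, unchanged, on every component state. Unpacking the hypothesis, $\hat E^{\epsilon}_D(\{\sigma^\lambda\}) \geq m(\lambda)$ means there is an efficient family of LOCC channels $\{\hat\Gamma^\lambda\}$ (in the sense of Definition~\ref{def: definition efficient LOCC}) for which $\hat\Gamma^\lambda(\sigma^\lambda)$ is $\epsilon(\lambda)$-close to $m(\lambda)$ EPR pairs. First I would feed $\rho^\lambda_x$ into the same channel. Since $\{\hat\Gamma^\lambda\}$ is a single family of polynomial-size circuits, with a fixed output register of $m(\lambda)$ qubits on each side, it is simultaneously an efficient distillation family for each $\{\rho^\lambda_x\}$, with $x$ ranging over the finite index set $X$.

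Second, I would define the component errors $\epsilon_x(\lambda)$ to be the error that this inherited protocol actually achieves on $\rho^\lambda_x$, i.e. the distance between $\hat\Gamma^\lambda(\rho^\lambda_x)$ and $m(\lambda)$ EPR pairs in whichever metric underlies the definition of $\hat E_D$. With this definition the assertion ``$m(\lambda)$ is a valid lower bound for $\hat E^{\epsilon_x}_D(\{\rho^\lambda_x\})$'' is immediate for every $x\in X$: the inherited family certifies it by construction. It is worth noting that the errors must be tracked per component: a low-probability $\rho^\lambda_x$ may only be distillable with a poor error, so the $\epsilon_x(\lambda)$ cannot in general be collapsed to a single common $\epsilon(\lambda)$. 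When the error is measured by the overlap (infidelity) with the target, linearity of $\hat\Gamma^\lambda$, $\hat\Gamma^\lambda(\sigma^\lambda)=\sum_{x\in X}p(x)\,\hat\Gamma^\lambda(\rho^\lambda_x)$, does give the clean relation $\sum_{x\in X}p(x)\,\epsilon_x(\lambda)\le\epsilon(\lambda)$.

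Finally, I would read off the claim. Since the constant $m(\lambda)$ is admissible as the valid lower bound for each $\hat E^{\epsilon_x}_D(\{\rho^\lambda_x\})$, the weighted combination is at least $\sum_{x\in X}p(x)\,m(\lambda)=m(\lambda)$, which is exactly the asserted inequality; conceptually this is the lower-bound reformulation of convexity of distillable entanglement. The whole argument runs parameter by parameter, so the one-shot statement transfers to the uniform setting without change.

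The main obstacle is not a hard estimate but the bookkeeping inside the lower-bound formalism: one has to verify carefully that the inherited family really certifies ``$m(\lambda)$ is a valid lower bound for $\hat E^{\epsilon_x}_D$'' --- polynomial circuit size in $\lambda$, unchanged register sizes, error at most $\epsilon_x(\lambda)$ --- and to settle the right notion of distance for the $\epsilon_x(\lambda)$ and whatever relation they satisfy with $\epsilon(\lambda)$. Everything else reduces to linearity of quantum channels and the definition of the weighted sum of lower bounds.
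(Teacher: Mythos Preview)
Your proposal is correct and follows essentially the same approach as the paper: take the efficient LOCC family $\{\hat\Gamma^\lambda\}$ witnessing the premise, apply it unchanged to each $\rho^\lambda_x$, define $\epsilon_x(\lambda):=1-F(\hat\Gamma^\lambda(\rho^\lambda_x),\Phi^{\otimes m(\lambda)})$, and conclude $\hat E^{\epsilon_x}_D(\{\rho^\lambda_x\})\geq m(\lambda)$ for each $x$, hence the averaged bound. The only refinement the paper adds is that, since the target $\Phi^{\otimes m(\lambda)}$ is pure, fidelity is \emph{linear} (not merely concave) in the first argument, so one actually gets the equality $\sum_{x}p(x)\epsilon_x(\lambda)=\epsilon(\lambda)$ rather than just your inequality.
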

We refer to Theorem \ref{prop: convexity one shot dist} for the statement and the proof for the one-shot setting and Corollary \ref{th: conv uniform distill} for the uniform setting.

\begin{theorem}
    (Informal): Let $\{\rho_x^{\lambda}\}_{x\in X}$ shared quantum states between Alice and Bob and let $\sigma^\lambda$ be a convex mixture of states in $\{\rho_x^{\lambda}\}_{x\in X}$. Then the following holds:
        \begin{equation*}
        \hat{E}^{\epsilon}_C(\{\sigma^\lambda\}) \leq n(\lambda) \implies \sum_{x\in X}p(x)\hat{E}^{\epsilon_x}_C(\{\rho_x^{\lambda}\})\leq n(\lambda),
    \end{equation*}
where $\sigma^\lambda = \sum_{x\in X}p(x)\rho_x^{\lambda}$ and $\varepsilon_x(\lambda)$ is the error of dilution of $\{\rho^\lambda_x\}_{\lambda\in\mathbb N_+}$ for a fixed $x\in X$.
    \end{theorem}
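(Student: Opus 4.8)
The plan is to run a \emph{dual} of the argument proving lower-bound convexity of $\hat E_D^\epsilon$ (Theorem~\ref{prop: convexity one shot dist}). There, one exploits that a distillation protocol has a \emph{fixed output} ($2^{m(\lambda)}$-dimensional maximal entanglement) while its input may be any bipartite state, so the protocol distilling from $\sigma^\lambda=\sum_x p(x)\rho_x^\lambda$ can be fed each $\rho_x^\lambda$ in turn. For dilution the roles are reversed: a dilution protocol has a \emph{fixed input} ($2^{n(\lambda)}$-dimensional maximal entanglement, i.e.\ $n(\lambda)$ EPR pairs) and is only required to approximate a chosen target, so a single protocol targeting $\sigma^\lambda$ can be reread as targeting each $\rho_x^\lambda$.

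Concretely, I would first unpack the hypothesis via Definition~\ref{def: comp entang cost vidick}: ``$n(\lambda)$ is a valid upper bound on $\hat E_C^\epsilon(\{\sigma^\lambda\})$'' yields an efficient family of LOCC channels $\{\Lambda^\lambda\}$ with $\Lambda^\lambda(\Phi_{n(\lambda)})=:\tilde\sigma^\lambda$ and $d(\tilde\sigma^\lambda,\sigma^\lambda)\le\epsilon(\lambda)$, where $d$ is the closeness measure used in the definition and $\Phi_{n(\lambda)}$ denotes $n(\lambda)$ shared EPR pairs. Since the input $\Phi_{n(\lambda)}$ does not reference any target state, the same family $\{\Lambda^\lambda\}$ prepares, for each fixed $x\in X$, a state close to $\rho_x^\lambda$: by the triangle inequality for $d$,
\begin{equation*}
d\bigl(\Lambda^\lambda(\Phi_{n(\lambda)}),\rho_x^\lambda\bigr)\ \le\ d(\tilde\sigma^\lambda,\sigma^\lambda)+d(\sigma^\lambda,\rho_x^\lambda)\ \le\ \epsilon(\lambda)+d(\sigma^\lambda,\rho_x^\lambda)\ =:\ \epsilon_x(\lambda).
\end{equation*}
Thus $n(\lambda)$ is itself a valid upper bound on $\hat E_C^{\epsilon_x}(\{\rho_x^\lambda\})$ for every $x$: the circuits, hence their efficiency (and, in the uniform setting, the fact that the circuits are uniformly generated), are inherited unchanged, and only the error label is updated. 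Taking $n_x(\lambda):=n(\lambda)$ for all $x\in X$ gives $\sum_{x\in X}p(x)\,n_x(\lambda)=n(\lambda)$, which is exactly the content of the displayed bound $\sum_{x}p(x)\,\hat E_C^{\epsilon_x}(\{\rho_x^\lambda\})\le n(\lambda)$.

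I expect the work here to be definitional rather than quantitative. The points to nail down are: (i) making precise that a weighted sum of these function-valued quantities being ``$\le n(\lambda)$'' means the existence of valid upper bounds $n_x$ with $\sum_x p(x)n_x\le n$; (ii) checking against the formal dilution definition of \cite{arnon2023computational} that a protocol may be reused verbatim with a relabelled error; and (iii) recognising that this one-sided statement is the genuine computational surrogate of scalar concavity $E_C(\sigma)\ge\sum_x p(x)E_C(\rho_x)$, which transports precisely because only an achievability (upper) bound is being moved. The one mildly delicate observation is that nothing stronger than $\epsilon_x(\lambda)=\epsilon(\lambda)+d(\sigma^\lambda,\rho_x^\lambda)$ can be guaranteed for the branch errors in general --- for a mixture of near-orthogonal states $d(\sigma^\lambda,\rho_x^\lambda)$ stays bounded away from $0$ --- which is why the statement is phrased to carry the errors $\epsilon_x$ rather than to control them uniformly in $x$. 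Finally, because the whole argument merely reinterprets a fixed family of circuits, lifting the one-shot statement to the uniform setting is immediate, just as for the distillation case (Corollary~\ref{th: conv uniform distill}).
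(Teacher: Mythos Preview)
Your core strategy matches the paper's: take the efficient family $\{\hat\Gamma^\lambda\}$ witnessing $\hat E_C^\epsilon(\{\sigma^\lambda\})\le n(\lambda)$ and reuse it verbatim as a dilution protocol for each target $\rho_x^\lambda$, so that $n(\lambda)$ is a valid upper bound on every $\hat E_C^{\epsilon_x}(\{\rho_x^\lambda\})$; the averaged conclusion then follows with $n_x\equiv n$.

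One step, however, fails as written. The closeness measure in Definition~\ref{def: comp entang cost vidick} is $d=1-F$, and this is \emph{not} a metric: with $|\psi_\theta\rangle=\cos\theta\,|0\rangle+\sin\theta\,|1\rangle$ and $\rho=\ketbra{0}{0}$, $\sigma=\ketbra{\psi_{\pi/6}}{\psi_{\pi/6}}$, $\tau=\ketbra{\psi_{\pi/3}}{\psi_{\pi/3}}$, one gets $1-F(\rho,\tau)=3/4>1/2=(1-F(\rho,\sigma))+(1-F(\sigma,\tau))$. So the bound $\epsilon_x\le\epsilon+d(\sigma^\lambda,\rho_x^\lambda)$ via the triangle inequality is not justified. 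The detour is also unnecessary: simply \emph{define} $\epsilon_x(\lambda):=1-F\bigl(\hat\Gamma^\lambda(\Phi^{\otimes n(\lambda)}),\rho_x^\lambda\bigr)$ directly, and $n(\lambda)$ is tautologically a valid upper bound on $\hat E_C^{\epsilon_x}(\{\rho_x^\lambda\})$. The relation between the branch errors and the global error then comes from concavity of fidelity (Proposition~\ref{prop: concav fidelity}) rather than from a triangle inequality:
\[
\epsilon(\lambda)=1-F\Bigl(\hat\Gamma^\lambda(\Phi^{\otimes n(\lambda)}),\,\textstyle\sum_x p(x)\rho_x^\lambda\Bigr)\le\sum_x p(x)\bigl(1-F(\hat\Gamma^\lambda(\Phi^{\otimes n(\lambda)}),\rho_x^\lambda)\bigr)=\sum_x p(x)\,\epsilon_x(\lambda).
\]
This is exactly the paper's argument (Theorem~\ref{prop: concavity one shot cost}), and it yields the averaged control $\sum_x p(x)\epsilon_x\ge\epsilon$ stated there, which is sharper than a per-branch bound. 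Note the asymmetry with distillation: in Theorem~\ref{prop: convexity one shot dist} the fixed argument $\Phi^{\otimes m(\lambda)}$ is pure, so fidelity is linear in the other slot and one obtains equality; here the varying targets $\rho_x^\lambda$ need not be pure, whence only the inequality.
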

We refer to Theorem \ref{prop: concavity one shot cost} for the statement and the proof for the one-shot setting and Corollary \ref{th: comp cost concav uniform} for the uniform setting.

The second property we address is the additivity behavior of the computational distillable entanglement and cost with respect to the tensor product. We refer to Section \ref{sec: subb and super distillation and cost} for more details.

\begin{theorem}(Informal): Let the two families of bipartite state $\{\rho_1^{\lambda}\}$ and $\{\rho_2^{\lambda}\}$ shared between Alice and Bob where one can respectively  distill at least $m_1(\lambda)$ and $m_2(\lambda)$ EPR pairs with precision $\epsilon_1(\lambda)$ and $\epsilon_2(\lambda)$, then the following holds: 
    \begin{equation*}
   \hat{E}_D^{\epsilon}(\{\rho_1^{\lambda}\otimes\rho_2^{\lambda}\})\geq m_1(\lambda) + m_2(\lambda),
    \end{equation*}
    \end{theorem}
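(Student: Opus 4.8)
The plan is to exhibit an explicit efficient LOCC family witnessing $m_1(\lambda)+m_2(\lambda)$ distillable EPR pairs from $\{\rho_1^\lambda\otimes\rho_2^\lambda\}$, obtained by running the two given distillation protocols in parallel. By hypothesis there are efficient LOCC families $\{\hat\Gamma_1^\lambda\}$ and $\{\hat\Gamma_2^\lambda\}$ such that $\hat\Gamma_i^\lambda(\rho_i^\lambda)$ is $\epsilon_i(\lambda)$-close to $m_i(\lambda)$ EPR pairs. I would define $\hat\Gamma^\lambda := \hat\Gamma_1^\lambda\otimes\hat\Gamma_2^\lambda$, where Alice (resp.\ Bob) applies her part of $\hat\Gamma_1^\lambda$ to her share of $\rho_1^\lambda$ and her part of $\hat\Gamma_2^\lambda$ to her share of $\rho_2^\lambda$, and similarly for Bob; the classical communication of the two protocols is simply interleaved. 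This is manifestly an LOCC channel on the bipartition $A_1A_2\,|\,B_1B_2$.

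The key steps, in order, are: first, verify that $\{\hat\Gamma^\lambda\}$ is an efficient family — its circuit representation is the disjoint union of the circuits for $\hat\Gamma_1^\lambda$ and $\hat\Gamma_2^\lambda$, so the total gate count is the sum of two polynomials in $\lambda$, hence polynomial; this invokes the closure of the efficient-LOCC notion (Definition~\ref{def: definition efficient LOCC}) under tensor products. Second, bound the output error: $\hat\Gamma^\lambda(\rho_1^\lambda\otimes\rho_2^\lambda) = \hat\Gamma_1^\lambda(\rho_1^\lambda)\otimes\hat\Gamma_2^\lambda(\rho_2^\lambda)$, and each factor is within trace distance $\epsilon_i(\lambda)$ of the maximally entangled state $\Phi_{m_i(\lambda)}$ on $m_i(\lambda)$ qubit pairs. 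Using subadditivity of trace distance under tensor products — $\|\alpha_1\otimes\alpha_2 - \beta_1\otimes\beta_2\|_1 \le \|\alpha_1-\beta_1\|_1 + \|\alpha_2-\beta_2\|_1$ — together with $\Phi_{m_1}\otimes\Phi_{m_2} = \Phi_{m_1+m_2}$, the combined output is within $\epsilon_1(\lambda)+\epsilon_2(\lambda) =: \epsilon(\lambda)$ of $m_1(\lambda)+m_2(\lambda)$ EPR pairs. Third, conclude from the definition of computational distillable entanglement (Definition~\ref{def: comp distil vidick}) that $m_1(\lambda)+m_2(\lambda)$ is a valid lower bound on $\hat E_D^\epsilon(\{\rho_1^\lambda\otimes\rho_2^\lambda\})$ for this $\epsilon$, which is the claim.

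I do not expect a serious obstacle here; the content is essentially that "efficient" is preserved under parallel composition and that errors add. The one point requiring care is bookkeeping on the error parameter: the statement should be read with $\epsilon(\lambda) = \epsilon_1(\lambda) + \epsilon_2(\lambda)$ (or any upper bound thereof), and one must make sure the efficiency accounting includes not just the two subcircuits but also the classical interleaving overhead, which is only linear in the number of communication rounds and hence harmless. A secondary subtlety is that the two protocols act on disjoint registers so no re-indexing or SWAP gates are needed beyond a fixed relabeling of wires, which contributes no asymptotic cost. With these observations the proof is a short composition argument, and the same template will later yield the corresponding upper-bound subadditivity statement for $\hat E_C^\epsilon$.
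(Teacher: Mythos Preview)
Your approach---parallel composition $\hat\Gamma^\lambda=\hat\Gamma_1^\lambda\otimes\hat\Gamma_2^\lambda$ and additive error bookkeeping---is exactly the paper's proof of Theorem~\ref{prop: supperaditive one shot dist}. The one slip is the metric: in Definition~\ref{def: comp distil vidick} the error is $1-F(\cdot,\Phi^{\otimes m})$, not trace distance, so the hypothesis does \emph{not} give you ``trace distance $\epsilon_i$'' directly. The paper instead uses fidelity multiplicativity (Proposition~\ref{prop: fidelity factorisation}) to get $1-F_1F_2\le(1-F_1)+(1-F_2)=\epsilon_1+\epsilon_2$; swapping in that one-line computation for your trace-distance step makes the argument go through verbatim with the stated $\epsilon(\lambda)\le\epsilon_1(\lambda)+\epsilon_2(\lambda)$.
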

    We refer to Theorem \ref{prop: supperaditive one shot dist} for the statement and the proof for the one-shot setting and Corollary \ref{th: uniform distil superaditivity} for the uniform setting. 
    \begin{theorem}(Informal): 
    Let $n_1(\lambda)$ and $n_2(\lambda)$ be the upper bounds on the numbers of EPR pairs needed to recover the family of bipartite states $\{\rho_1^{\lambda}\}$ and $\{\rho^{\lambda}_2\}$ with precision $\epsilon_1(\lambda)$ and $\epsilon_2(\lambda)$, then the following holds: 
    \begin{equation*}
    \hat{E}_C^{\epsilon}(\{\rho_1^{\lambda}\otimes\rho_2^{\lambda}\})\leq n_1(\lambda) + n_2(\lambda),
    \end{equation*}
    \end{theorem}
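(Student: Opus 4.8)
The plan is to mirror the structure of the superadditivity proof for the computational distillable entanglement (Theorem~\ref{prop: supperaditive one shot dist}), but now running the dilution protocols in parallel rather than the distillation protocols. Concretely, suppose $n_1(\lambda)$ is a valid upper bound on $\hat E_C^{\epsilon_1}(\{\rho_1^\lambda\})$ and $n_2(\lambda)$ a valid upper bound on $\hat E_C^{\epsilon_2}(\{\rho_2^\lambda\})$. Unfolding the definition, this means there is an efficient family of LOCC channels $\{\hat\Gamma_1^\lambda\}$ that maps $n_1(\lambda)$ EPR pairs to a state $\epsilon_1(\lambda)$-close to $\rho_1^\lambda$, and similarly an efficient family $\{\hat\Gamma_2^\lambda\}$ producing $\rho_2^\lambda$ from $n_2(\lambda)$ EPR pairs up to error $\epsilon_2(\lambda)$. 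The candidate dilution protocol for $\{\rho_1^\lambda\otimes\rho_2^\lambda\}$ is $\hat\Gamma_1^\lambda\otimes\hat\Gamma_2^\lambda$ acting on $n_1(\lambda)+n_2(\lambda)$ EPR pairs (Alice taking the first registers of each, Bob the second), and I must check three things: that it is LOCC, that the combined family is efficient, and that the output error is controlled.

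First I would verify that $\hat\Gamma_1^\lambda\otimes\hat\Gamma_2^\lambda$ is itself an LOCC channel: a tensor product of two LOCC protocols is implemented by simply concatenating the two rounds of local operations and classical messages, so the product lies in LOCC. Next, efficiency: since each $\hat\Gamma_i^\lambda$ admits a circuit representation whose gate count is polynomial in $\lambda$, the product admits the circuit obtained by juxtaposing the two, whose gate count is the sum of two polynomials, hence still polynomial in $\lambda$; this is exactly the same bookkeeping used in Theorem~\ref{prop: supperaditive one shot dist}. Finally, the error bound: writing $\sigma_i^\lambda := \hat\Gamma_i^\lambda(\Phi^{\otimes n_i(\lambda)})$ with $\tfrac12\|\sigma_i^\lambda-\rho_i^\lambda\|_1 \le \epsilon_i(\lambda)$, I would use subadditivity of trace distance under tensor products, $\tfrac12\|\sigma_1^\lambda\otimes\sigma_2^\lambda - \rho_1^\lambda\otimes\rho_2^\lambda\|_1 \le \epsilon_1(\lambda)+\epsilon_2(\lambda)$ (via the triangle inequality with the hybrid $\sigma_1^\lambda\otimes\rho_2^\lambda$ and the fact that trace distance does not increase under tensoring with a fixed state). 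Hence $n_1(\lambda)+n_2(\lambda)$ is a valid upper bound on $\hat E_C^{\epsilon_1+\epsilon_2}(\{\rho_1^\lambda\otimes\rho_2^\lambda\})$, which is the stated $\epsilon(\lambda)$.

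The only genuinely delicate point is the handling of the error parameter: the statement as written uses a single symbol $\epsilon$ on the left-hand side, so I would make explicit in the formal version that $\epsilon(\lambda) = \epsilon_1(\lambda)+\epsilon_2(\lambda)$ (or, if the definition of $\hat E_C^\epsilon$ fixes $\epsilon$ a priori, that one should take $\epsilon_1,\epsilon_2$ with $\epsilon_1+\epsilon_2\le\epsilon$), exactly paralleling how the error is tracked in the superadditivity theorem. A secondary bookkeeping subtlety is the register grouping: the $n_1(\lambda)+n_2(\lambda)$ input EPR pairs must be partitioned so that Alice holds the $A$-halves feeding both $\hat\Gamma_1^\lambda$ and $\hat\Gamma_2^\lambda$ and Bob the $B$-halves, which is a relabeling that requires no communication and no gates. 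I expect no real obstacle beyond making these conventions precise; the argument is the natural dual of the distillation case and reuses the same two lemmas (closure of LOCC and of efficiency under tensor products, and subadditivity of trace distance).

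For the uniform setting, the one-shot statement lifts verbatim by applying it at each security parameter $\lambda$ and observing that the resulting family of protocols is still uniformly efficient, which is recorded in the corresponding corollary.
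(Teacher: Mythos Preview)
Your high-level strategy---tensor the two efficient LOCC dilution protocols and observe that efficiency and the LOCC structure are preserved---is exactly what the paper does. The gap is in your error analysis: you write $\tfrac12\|\sigma_i^\lambda-\rho_i^\lambda\|_1\le\epsilon_i(\lambda)$, but the error in Definition~\ref{def: comp entang cost vidick} is $p_{\text{err}}=1-F(\hat\Gamma^\lambda(\Phi^{\otimes n}),\rho^\lambda)$, not trace distance. From $F\ge 1-\epsilon_i$ Fuchs--van de Graaf only gives $\tfrac12\|\sigma_i-\rho_i\|_1\le\sqrt{\epsilon_i}$, so your hybrid argument yields at best $\sqrt{\epsilon_1}+\sqrt{\epsilon_2}$ in trace distance, and converting back to fidelity produces a bound strictly worse than $\epsilon_1+\epsilon_2$.

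The paper avoids this detour entirely by using the multiplicativity of fidelity under tensor products (Proposition~\ref{prop: fidelity factorisation}): since $F(\sigma_1\otimes\sigma_2,\rho_1\otimes\rho_2)=F(\sigma_1,\rho_1)F(\sigma_2,\rho_2)\ge(1-\epsilon_1)(1-\epsilon_2)$, one gets directly $1-F\le 1-(1-\epsilon_1)(1-\epsilon_2)\le\epsilon_1+\epsilon_2$. Replace your trace-distance step with this one-line fidelity computation and the rest of your argument (LOCC closure, polynomial gate count, register bookkeeping, lift to the uniform setting) is fine and matches the paper.
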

We refer to Theorem \ref{prop: subbativity of one shot cost} for the statement and the proof for the one-shot setting and Corollary \ref{th: subadditivity unif cost} for the uniform setting.

A desired property for entanglement measures is the invariance under generic local unitaries channels. We observe that this basic property does not hold for the new computational measures of entanglement. 
As we describe in Section \ref{sec: invariance local unitaries}, both the computational distillable entanglement and cost remain invariant only under the action of \emph{efficient families of local unitaries}, see Definition \ref{def: efficient local unitaries} for more details.
\begin{theorem}(Informal)
    Let the family of bipartite state $\{\rho^{\lambda}_{AB}\}$ shared between Alice and Bob, and define the efficient family of  local unitaries $\{U^{\lambda}_{AB}\}$ then the following holds: 
    \begin{align*}
                 \hat{E}^{\epsilon}_D(\{\rho^{\lambda}_{AB}\})\geq m(\lambda)&\iff\hat{E}^{\epsilon}_D(\{U^{\lambda}_{AB}\rho^{\lambda}_{AB}U^{\lambda\dagger}_{AB}\}\geq m(\lambda),
\\
                \hat E^{\epsilon}_C(\{\rho^{\lambda}_{AB}\})\leq n(\lambda) &\iff\hat E^{\epsilon}_C(\{U^{\lambda}_{AB}\,\rho^{\lambda}_{AB}\,U^{\lambda\dagger}_{AB}\})\leq n(\lambda).
    \end{align*}
However, the measures are not invariant under generic families of local unitaries.
\end{theorem}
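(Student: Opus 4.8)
The plan is to split the statement into two parts: the positive direction (invariance under \emph{efficient} families of local unitaries) and the negative direction (failure of invariance for \emph{generic} families). For the positive direction I would argue by a composition-of-circuits argument, symmetrically for $\hat E^{\epsilon}_D$ and $\hat E^{\epsilon}_C$. Suppose $\hat E^{\epsilon}_D(\{\rho^{\lambda}_{AB}\}) \geq m(\lambda)$, witnessed by an efficient family of LOCC channels $\{\hat\Gamma^{\lambda}\}$ distilling $m(\lambda)$ EPR pairs from $\rho^{\lambda}_{AB}$ up to error $\epsilon(\lambda)$. Given an efficient family of local unitaries $\{U^{\lambda}_{AB} = U^{\lambda}_A \otimes U^{\lambda}_B\}$, I would consider the new family $\{\tilde\Gamma^{\lambda}\}$ that first applies $U^{\lambda\dagger}_A$ on Alice's side and $U^{\lambda\dagger}_B$ on Bob's side (a local, hence LOCC, operation) and then runs $\hat\Gamma^{\lambda}$. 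Then $\tilde\Gamma^{\lambda}(U^{\lambda}_{AB}\rho^{\lambda}_{AB}U^{\lambda\dagger}_{AB}) = \hat\Gamma^{\lambda}(\rho^{\lambda}_{AB})$, so the output is $\epsilon(\lambda)$-close to $m(\lambda)$ EPR pairs. The key point is efficiency: since $\{U^{\lambda}_{AB}\}$ is an efficient family, each $U^{\lambda\dagger}_{AB}$ has a circuit representation of size polynomial in $\lambda$ (the inverse of a poly-size circuit is poly-size), and composing with the poly-size circuits of $\{\hat\Gamma^{\lambda}\}$ keeps the total gate count polynomial; hence $\{\tilde\Gamma^{\lambda}\}$ is an efficient family of LOCC channels and $\hat E^{\epsilon}_D(\{U^{\lambda}_{AB}\rho^{\lambda}_{AB}U^{\lambda\dagger}_{AB}\}) \geq m(\lambda)$. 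The reverse implication follows by replacing $U^{\lambda}_{AB}$ with $U^{\lambda\dagger}_{AB}$, and the argument for $\hat E^{\epsilon}_C$ is the mirror image: prepend, rather than append, the local unitary $\{U^{\lambda}_{AB}\}$ to the efficient dilution family so that the target state $U^{\lambda}_{AB}\rho^{\lambda}_{AB}U^{\lambda\dagger}_{AB}$ is reached from $n(\lambda)$ EPR pairs with the same error.

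For the negative direction I would exhibit a counterexample family. The idea is to start from a trivially distillable/dilutable family — for instance $\{\rho^{\lambda}_{AB}\}$ equal to $m(\lambda)$ fresh EPR pairs padded with a product state, so $\hat E^{\epsilon}_D$ is obviously at least $m(\lambda)$ — and then scramble it by a family of local unitaries $\{U^{\lambda}_{AB}\}$ that is \emph{not} efficient: concretely, choose $U^{\lambda}_A$ (and trivial $U^{\lambda}_B$) so that the orbit encodes something that no polynomial-size local circuit can undo. The cleanest route is a complexity-theoretic or pseudorandomness argument: let $U^{\lambda}_A$ be drawn so that $U^{\lambda}_A\rho^{\lambda}_{AB}U^{\lambda\dagger}_A$ is computationally indistinguishable, by efficient LOCC, from a state with no extractable entanglement — this is exactly the kind of pseudo-entangled / computationally-hard instance constructed in \cite{arnon2023computational} and \cite{leone2025entanglement}. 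Then $\hat E^{\epsilon}_D$ of the rotated family is small (no efficient family of LOCC channels can distill from it, by the hardness assumption) while it was large before the rotation, contradicting invariance; the symmetric construction handles $\hat E^{\epsilon}_C$. Alternatively, if the paper has already established in Section \ref{sec: invariance local unitaries} a concrete separation, I would simply invoke that instance.

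The main obstacle is the negative direction, and within it the step of \emph{certifying} that no efficient family of LOCC channels can recover the entanglement of the rotated family — this is where a genuine computational assumption (existence of pseudorandom states or pseudo-entanglement, as in \cite{arnon2023computational,leone2025entanglement}) must enter, since without such an assumption local unitary invariance would plausibly survive. The positive direction is essentially bookkeeping: the only thing to be careful about is that inverting and composing circuits preserves the efficiency bound and that prepending/appending local unitaries to an LOCC protocol yields an LOCC protocol, both of which are routine once the formal definition (Definition \ref{def: definition efficient LOCC}) of an efficient family is in hand. I would also make sure the error parameter $\epsilon(\lambda)$ is literally unchanged under the composition, so that the implication holds for the same $\epsilon$ on both sides, as stated.
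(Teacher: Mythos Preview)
Your positive direction is correct and matches the paper's argument: compose the given efficient LOCC family with the (efficient) inverse local unitary, observe that the composition remains efficient and LOCC, and that the fidelity is unchanged. This is exactly what the paper does in Theorems~\ref{th:comp one shot cost inva uni} and~\ref{th: local dist unita invar epsilon diff zero}.

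The negative direction, however, diverges substantially from the paper, and your belief that a computational assumption ``must enter'' is mistaken. The paper's argument is \emph{unconditional}. It proceeds by a counting/pigeonhole argument rather than by pseudorandomness: one constructs an iterative family $S^\lambda$ of locally-rotated EPR states using an $\eta$-net on $\mathbb U(2^{m(\lambda)})$, so that $|S^\lambda|\ge (1/\eta)^{\Omega(2^{2m(\lambda)})}$, whereas the number of efficient LOCC circuits is at most $2^{\operatorname{Poly}(\lambda)}$ (Lemma~\ref{lemma:cardinality comparison LOCC}). By pigeonhole, any putative efficient dilution (resp.\ distillation) family would have to service two well-separated states of $S^\lambda$ with the \emph{same} channel. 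For the cost, a triangle-inequality plus Fuchs--van~de~Graaf argument then forces $\eta\le 2\varepsilon(\lambda)$, a contradiction for suitable $\eta$. For the distillable entanglement, one takes the uniform mixture of the two colliding states and bounds the one-shot distillable entanglement above via squashed entanglement (Proposition~\ref{prop: distillation and squashed}), producing a contradiction provided $m(\lambda)\sqrt{\varepsilon(\lambda)}\to 0$; note this extra hypothesis, which your proposal does not mention.

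So the gap is conceptual: your route through pseudo-entanglement would at best give a \emph{conditional} counterexample, whereas the paper obtains the failure of LU invariance outright, with no hardness assumption, by comparing the doubly-exponential size of the state family against the merely exponential number of efficient circuits. That counting insight is the missing idea.
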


We refer to Theorem \ref{th:comp one shot cost inva uni} for the invariance of computational cost under efficient families of local unitaries. For the computational distillable entanglement, we refer to Theorem \ref{th: local dist unita invar epsilon diff zero}. These results can be extended to uniform setting as a direct consequence of results regarding invariance in the one-shot setting, and we refer to Corollary \ref{corr: uniform cost inv} and Corollary \ref{corr: unif distil inva}, respectively, for the uniform computational distillable entanglement and cost.

As a corollary, we observe that computational distillable entanglement and cost are not LOCC monotone. However, LOCC monotonicity is recovered  under the action of efficient families of LOCC channels.
\begin{theorem}(Informal)
    Let the family of bipartite state $\{\rho^{\lambda}_{AB}\}$ shared between Alice and Bob, and define the efficient family of LOCC channels $\{\Lambda^{\lambda}\}$ then the following holds: 
    \begin{align*}
        \hat E_D^\varepsilon(\{\Lambda^\lambda(\rho^\lambda_{AB})\}) \geq m(\lambda) &\implies \hat E_D^\varepsilon(\{\rho^\lambda_{AB}\}) \geq m(\lambda)\\
        \hat E_C^\varepsilon(\{\rho^\lambda_{AB}\}) \leq n(\lambda)&\implies \hat E_C^\varepsilon(\{\Lambda^\lambda(\rho^\lambda_{AB})\}) \leq n(\lambda).
    \end{align*}
    However, the computational entanglement measures are not LOCC monotone for arbitrary families of LOCC channels.
\end{theorem}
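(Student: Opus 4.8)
The plan is to reduce both implications to one fact about the definitions in Section~\ref{sec: comp entanglement}: concatenating two efficient families of LOCC channels yields an efficient family of LOCC channels. Let $\Phi$ denote the two-qubit maximally entangled state. For the distillation implication, assume $\hat E_D^\varepsilon(\{\Lambda^\lambda(\rho^\lambda_{AB})\}) \geq m(\lambda)$. By Definition~\ref{def: comp distil vidick} there is an efficient family $\{\hat\Gamma^\lambda\}$ of LOCC channels such that $\hat\Gamma^\lambda(\Lambda^\lambda(\rho^\lambda_{AB}))$ is $\epsilon(\lambda)$-close to $\Phi^{\otimes m(\lambda)}$. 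I would then run $\{\hat\Gamma^\lambda \circ \Lambda^\lambda\}$ on $\{\rho^\lambda_{AB}\}$: this family is LOCC since the concatenation of LOCC protocols is LOCC, and it is efficient since wiring the circuit for $\Lambda^\lambda$ into the circuit for $\hat\Gamma^\lambda$ produces a circuit whose gate count is the sum of two polynomially bounded functions of $\lambda$, hence still polynomial. Because $(\hat\Gamma^\lambda \circ \Lambda^\lambda)(\rho^\lambda_{AB}) = \hat\Gamma^\lambda(\Lambda^\lambda(\rho^\lambda_{AB}))$, the output state — and therefore the error $\epsilon(\lambda)$ — is literally the same, so $m(\lambda)$ is a valid lower bound on $\hat E_D^\varepsilon$ for $\{\rho^\lambda_{AB}\}$.

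The cost implication is dual, with the one extra ingredient of data processing. Assume $\hat E_C^\varepsilon(\{\rho^\lambda_{AB}\}) \leq n(\lambda)$, and let $\{\hat\Gamma^\lambda\}$ be the efficient family from Definition~\ref{def: comp entang cost vidick} with $\hat\Gamma^\lambda(\Phi^{\otimes n(\lambda)})$ within $\epsilon(\lambda)$ of $\rho^\lambda_{AB}$. I would use $\{\Lambda^\lambda \circ \hat\Gamma^\lambda\}$, efficient LOCC by the same concatenation argument; on $\Phi^{\otimes n(\lambda)}$ it outputs $\Lambda^\lambda(\hat\Gamma^\lambda(\Phi^{\otimes n(\lambda)}))$, which by monotonicity of the trace distance (equivalently of the fidelity via the Fuchs--van de Graaf inequalities, to match whichever error metric Definition~\ref{def: comp entang cost vidick} fixes) under the channel $\Lambda^\lambda$ stays within $\epsilon(\lambda)$ of $\Lambda^\lambda(\rho^\lambda_{AB})$. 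Hence $n(\lambda)$ is a valid upper bound on $\hat E_C^\varepsilon$ for $\{\Lambda^\lambda(\rho^\lambda_{AB})\}$. Both statements lift verbatim to the uniform setting, replacing ``efficient family'' by its keyed analogue exactly as in the other one-shot-to-uniform passages in the paper, giving the corollaries for $\hat E_D^\varepsilon$ and $\hat E_C^\varepsilon$ there (cf.\ Corollary~\ref{corr: unif distil inva}, Corollary~\ref{corr: uniform cost inv}).

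For the negative part I would invoke the failure of local-unitary invariance already established in Theorem~\ref{th: local dist unita invar epsilon diff zero} (for $\hat E_D^\varepsilon$) and Theorem~\ref{th:comp one shot cost inva uni} (for $\hat E_C^\varepsilon$). A local unitary channel $\rho\mapsto (U_A\otimes U_B)\,\rho\,(U_A\otimes U_B)^\dagger$ is a (communication-free) LOCC channel, so if, say, $\hat E_D^\varepsilon$ were monotone under \emph{all} families of LOCC channels then applying the monotonicity implication to the family $\{U^\lambda_{AB}\}$ and then to the inverse family $\{U^{\lambda\dagger}_{AB}\}$ would force $\hat E_D^\varepsilon(\{\rho^\lambda_{AB}\})\geq m(\lambda)\iff \hat E_D^\varepsilon(\{(U^\lambda_{AB})\rho^\lambda_{AB}(U^\lambda_{AB})^\dagger\})\geq m(\lambda)$ for every $m$, i.e.\ exact invariance under arbitrary families of local unitaries — contradicting the non-invariance half of those theorems. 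The concrete counterexample behind these theorems can be taken of the form $\rho^\lambda_{AB}=V^\lambda_{AB}\,\Phi^{\otimes k(\lambda)}\,V^{\lambda\dagger}_{AB}$ for a scrambling local-unitary family that cannot be efficiently inverted: it has small $\hat E_D^\varepsilon$, yet the (inefficient) LOCC family $\{V^{\lambda\dagger}_{AB}\}$ maps it to $\Phi^{\otimes k(\lambda)}$, from which $k(\lambda)$ EPR pairs are trivially distillable; symmetrically for $\hat E_C^\varepsilon$.

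The only steps that need genuine care rather than bookkeeping are: (i) pinning down that ``efficient'' is closed under composition in the exact sense of Definition~\ref{def: definition efficient LOCC} — polynomiality of the summed gate count is immediate, but one should verify that the wired-together circuit (ancilla registers, classical registers, and the requirement that the composite still be a legal LOCC protocol as formalized in Figure~\ref{fig:circuit LOCC}) is indeed of the admissible form; and (ii) in the cost direction, applying the data-processing step against the precise metric in which $\epsilon$ is measured. The non-monotonicity claim introduces no new difficulty once the local-unitary results are in hand.
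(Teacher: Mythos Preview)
Your proposal is correct and matches the paper's approach essentially line for line: the paper proves the two implications by composing $\hat\Gamma^\lambda\circ\Lambda^\lambda$ (distillation, same output hence same error) and $\Lambda^\lambda\circ\hat\Gamma^\lambda$ (cost, error controlled via the data-processing inequality for fidelity, Proposition~\ref{prop: data-processing fidelity}), and obtains the negative part exactly as you do, by noting that local-unitary channels are LOCC and invoking Theorems~\ref{th:comp one shot cost inva uni} and~\ref{th: local dist unita invar epsilon diff zero}. The only cosmetic difference is that the paper applies data processing directly to the fidelity rather than routing through Fuchs--van de Graaf, and for the distillation counterexample it writes out the explicit family $\rho^\lambda=(I\otimes U)\Phi^{\otimes m(\lambda)}(I\otimes U^\dagger)$ with $\Lambda^\lambda(\cdot)=(I\otimes U^\dagger)(\cdot)(I\otimes U)$ rather than phrasing it as ``monotonicity plus monotonicity for the inverse implies invariance'' --- but your version is equivalent and arguably cleaner.
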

We refer to Theorem \ref{thm: locc not monotone comp ent cost} for the  result regarding LOCC monotonicity of computational entanglement cost under efficient and generic families of LOCC channels and to Theorem \ref{thm: locc not monotone comp dist ent} for the LOCC monotonicity of computational distillable entanglement. These results can be extended to the uniform setting, and we refer to Corollary \ref{cor: unif locc monotone cost} and \ref{cor: unifrom locc monotone dist}, respectively, for LOCC monotonicity of the uniform computational distillable entanglement and cost.

\section{Preliminaries}\label{sec: intro information theory}
In this section, we recall relevant definitions and results. Subsection \ref{subsec:States, LOCC channels and fidelity} regards notation and quantum information, while Subsection \ref{subsec: distillable entanglement and entanglement cost} reviews definitions of entanglement theory. We refer to the textbooks \cite{nielsen2001quantum,watrous2018theory,aubrun2017alice,khatri2024principles} for more details.

\subsection{Quantum Information Theory}\label{subsec:States, LOCC channels and fidelity} 
In this subsection, we fix the notation and recall the elements of quantum information theory needed. 

First, we recall the Landau notation; we use the Big-Omega notation $f(n) = \Omega(g(n))$ if $f(n)$ is bounded below by $g(n)$ asymptotically, more precisely if there exists $c>0,n_0\in\mathbb N$ such that $f(n)\geq c\cdot g(n)$ for all $n\geq n_0$. We denote by $\operatorname{Poly}(\lambda)$ the set of real-valued polynomials with $\lambda$ as dependent variable. We also use the convention of $[N]:=\{1,\cdots,N\}$.

In the following, we assume the dimension of Hilbert spaces is finite $\mathcal H\cong\C^n$. Denote by $\mathcal{M}_n(\C)$ the space of $n\times n $ matrices. One can introduce metrics on this space using (Schatten) $p$ norms $\|\cdot\|_p$  defined for $1\leq p\leq\infty$: 
\begin{equation*}
\forall A\in\mathcal{M}_n(\C)\to\|A\|_p:=\Tr\left(|A|^p\right)^{\frac{1}{p}}\quad\text{where}\quad|A|:=\sqrt{A^{\dagger}A}.
\end{equation*}
We denote by $\mathcal L(\mathcal H)\cong\mathcal{M}_n(\C)$ the set of linear operators on $\mathcal H$. In quantum information theory, a quantum state is described by density matrices $\rho\in\mathcal D(\mathcal H)$ where: 
\begin{equation*}
\mathcal{D}(\mathcal{H}):=\{\rho\in\mathcal{M}_n(\C): \,\rho\geq 0,\,\Tr\rho=1\},
\end{equation*}
and denote by $\mathbb U(n)$ the space of unitary matrices in $\C^n$.

A quantum state $\rho$ is pure if $\rho=\ketbra{\psi}{\psi}$ where $\ket{\psi}\in\mathcal H$. We use the notation $\psi$ to denote pure states of the form $\ketbra{\psi}{\psi}$. Moreover, we  denote by $\rho_{AB}\in\mathcal D(\mathcal H_A\otimes\mathcal H_B)$ the set of \emph{bipartite quantum states}. We  use the shorthand notation $\mathcal H_{AB}\equiv\mathcal H_A\otimes\mathcal H_B$. A state $\rho_{AB}\in\mathcal D(\mathcal{H}_{AB})$ is \emph{separable} if it can be expressed as a convex combination of product states $\{\sigma^x_A\otimes\tilde{\sigma}^x_B\}_{x\in X}$:
\begin{equation*}
    \rho_{AB}=\sum_{x\in X}p(x)\sigma^x_A\otimes\tilde{\sigma}^x_B.
\end{equation*}
If a state is not separable, we say it is \emph{entangled}.
Recall that a \emph{partial trace} is a map from a bipartite system to one of the subsystems. Formally we  denote by $\Tr_B$ the partial trace on the subsystem $B$ given by: 
\begin{align*}
\Tr_B:\mathcal{D}(\mathcal H_A\otimes\mathcal H_B)&\to\mathcal D(\mathcal H_A)\\
\Tr_B:\rho_{AB}&\to\sigma_A:=\Tr_B\rho_{AB},
\end{align*}
where $\Tr_B(\cdot):=I_A\otimes \Tr(\cdot)$. 

Another relevant notion that we make use of in this work is the \emph{LOCC channel}. LOCC channels represent transformations of a state in the scenario where two physically separated parties, Alice and Bob, each perform some \emph{local operations} and exchange  \emph{classical information} based on a measurement of their part of resulting state.

\begin{definition}
    An LOCC channel $\Gamma_{AB\to A'B'}$, is a quantum channel where: 
    \begin{equation*}
        \Gamma_{AB\to A'B'}:\mathcal{D}(\mathcal H_{AB})\to \mathcal{D}(\mathcal H_{A'B'})\quad\text{where}\quad \Gamma_{AB\to A'B'}:=\sum_{y\in Y}\mathcal{S}^y_A\otimes \mathcal{W}^y_B,
    \end{equation*}
    for some finite alphabet $Y$ and  $\{\mathcal S_A^y\}$ and $\{\mathcal W_B^y\}$ are complete positive trace-non-increasing maps, i.e. $\Tr S^y_A(\cdot)\leq\Tr(\cdot)$ and the same for $W^y_B$, such that $\sum_{y\in Y}\mathcal{S}^y_A\otimes \mathcal{W}^y_B$ is trace preserving. 
\end{definition}
From now on, to simplify the notation, we  denote by $\Gamma$ any LOCC channels instead of $\Gamma_{AB\to A'B'}$. 

Another important quantity is the \emph{fidelity} \cite{khatri2024principles}. Intuitively, the fidelity measures the similarity between two given quantum states. 
\begin{definition}
    Let the quantum states $\rho$ and $\sigma$ in $\mathcal{D}(\mathcal H)$, the fidelity is defined as: 
    \begin{equation*}
    F(\rho,\sigma):= \left\|\sqrt{\rho}\sqrt{\sigma}\right
    \|_1^2
    \end{equation*}
\end{definition}
It is easy to check if one of the quantum states is pure, the fidelity reduces to: 
\begin{equation*}\label{remark: fidelity of expectation}
            F(\rho,\ketbra{\psi}{\psi})=\bra{\psi}\rho\ket{\psi}. 
\end{equation*}
We recall some of the properties of fidelity.  
\begin{proposition}\label{prop: concav fidelity}
Let a set of states $\{\rho_x\}_{x\in X}$ with $X$ a finite register and let $\sigma$ quantum states in $\mathcal D(\mathcal H)$, the fidelity is a concave function in one of its argument: 
\begin{equation*}
F\Big(\sum_{x\in X}\,p(x)\,\rho_x,\sigma\Big)\geq \sum_{x\in X}\,p(x)\,F(\rho_x,\sigma).
\end{equation*}
\end{proposition}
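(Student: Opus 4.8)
The plan is to read the inequality off from Alberti's variational formula for the fidelity, which exhibits $F(\cdot\,,\sigma)$ as a pointwise infimum of affine functions and so makes concavity immediate. Recall that for all $\rho,\sigma\in\mathcal{D}(\mathcal{H})$ one has
\begin{equation*}
F(\rho,\sigma)=\inf_{\Lambda>0}\;\Tr(\Lambda\rho)\,\Tr(\Lambda^{-1}\sigma),
\end{equation*}
the infimum being over positive definite $\Lambda\in\mathcal{L}(\mathcal{H})$; this is a standard characterization (see e.g.\ \cite{khatri2024principles}). With $\sigma$ fixed and $\Lambda>0$ fixed, the map $\rho\mapsto\Tr(\Lambda\rho)\,\Tr(\Lambda^{-1}\sigma)$ is affine in $\rho$, since $\Tr(\Lambda^{-1}\sigma)\geq 0$ is a constant; an infimum of affine functions over the convex set $\mathcal{D}(\mathcal{H})$ is concave, hence $\rho\mapsto F(\rho,\sigma)$ is concave.

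Concretely, I would: (i) state Alberti's formula, citing it or appending its short proof (polar decomposition together with an optimization over $\Lambda$); (ii) write $\bar\rho:=\sum_{x\in X}p(x)\rho_x$ and observe that for every $\Lambda>0$,
\begin{equation*}
\Tr(\Lambda\bar\rho)\,\Tr(\Lambda^{-1}\sigma)=\sum_{x\in X}p(x)\,\Tr(\Lambda\rho_x)\,\Tr(\Lambda^{-1}\sigma)\;\geq\;\sum_{x\in X}p(x)\,F(\rho_x,\sigma),
\end{equation*}
using linearity of the trace and, termwise, $\Tr(\Lambda\rho_x)\Tr(\Lambda^{-1}\sigma)\geq F(\rho_x,\sigma)$; (iii) take the infimum over $\Lambda>0$ on the left to get $F(\bar\rho,\sigma)\geq\sum_{x\in X}p(x)F(\rho_x,\sigma)$, as claimed. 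Concavity in the second slot follows identically from $F(\rho,\sigma)=F(\sigma,\rho)$, although only the first slot is used in the rest of the paper.

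The main obstacle is supplying the variational formula; once it is in hand the argument is routine. It is worth noting the natural but insufficient alternative: Uhlmann's theorem with purifications only delivers $\sqrt{F(\bar\rho,\sigma)}\geq\sum_{x\in X}p(x)\sqrt{F(\rho_x,\sigma)}$, i.e.\ concavity of the \emph{root} fidelity, and since $t\mapsto t^{2}$ is convex, Jensen's inequality gives $\bigl(\sum_{x\in X}p(x)\sqrt{F(\rho_x,\sigma)}\bigr)^{2}\leq\sum_{x\in X}p(x)F(\rho_x,\sigma)$, so squaring that bound is strictly too weak for Proposition~\ref{prop: concav fidelity}. An alternative that avoids Alberti's formula is to instead invoke the equality of the fidelity with its measured version, $F(\rho,\sigma)=\min_{\{M_y\}}\bigl(\sum_{y}\sqrt{\Tr(M_y\rho)\,\Tr(M_y\sigma)}\bigr)^{2}$ over POVMs $\{M_y\}$: for fixed $\sigma$ and fixed POVM the bracketed quantity is a concave function of the vector $\bigl(\Tr(M_y\rho)\bigr)_y$ (a nonnegative combination of pairwise geometric means of its entries plus a linear term) precomposed with a linear map in $\rho$, hence concave in $\rho$, and a minimum of concave functions is concave.
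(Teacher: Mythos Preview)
Your argument via Alberti's variational characterization is correct and is a standard route to joint concavity of $F$; the observation that an infimum of affine functions of $\rho$ is concave is exactly the right structural reason, and your termwise bound followed by the infimum over $\Lambda$ is clean. The side remark about Uhlmann purifications only yielding concavity of $\sqrt{F}$ is also accurate and a useful caution.

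There is nothing to compare against, however: the paper does not supply its own proof of Proposition~\ref{prop: concav fidelity}. It is stated in the preliminaries as a recalled property of fidelity, alongside multiplicativity under tensor products, unitary invariance, Fuchs--van de Graaf, and data processing, all of which are quoted without proof and implicitly deferred to the cited textbooks (e.g.\ \cite{khatri2024principles,watrous2018theory}). So your proposal goes beyond what the paper itself provides; if you were to include it, the Alberti-based argument is the cleaner of your two options and matches the level of the surrounding material.
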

\begin{proposition}\label{prop: fidelity factorisation}
    Let $\{\rho_i\}_{i\in\{1,2\}}$ and $\{\sigma_i\}_{i\in\{1,2\}}$ quantum states in $\mathcal D(\mathcal H_i)$ with $i\in\{1,2\}$, the fidelity function factorizes as: 
    \begin{equation*}
F(\rho_1\otimes\rho_2,\sigma_1\otimes\sigma_2)=F(\rho_1,\sigma_1)F(\rho_2,\sigma_2).
    \end{equation*}
\end{proposition}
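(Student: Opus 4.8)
The plan is to reduce the identity to two elementary facts about the Kronecker product: that the positive square root commutes with $\otimes$, and that the Schatten $1$-norm is multiplicative under $\otimes$. First I would note that $\sqrt{\rho_1}\otimes\sqrt{\rho_2}$ is positive semidefinite and its square equals $(\sqrt{\rho_1})^2\otimes(\sqrt{\rho_2})^2=\rho_1\otimes\rho_2$; by uniqueness of the positive square root of a positive operator this gives $\sqrt{\rho_1\otimes\rho_2}=\sqrt{\rho_1}\otimes\sqrt{\rho_2}$, and likewise $\sqrt{\sigma_1\otimes\sigma_2}=\sqrt{\sigma_1}\otimes\sqrt{\sigma_2}$. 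Substituting into the definition $F(\rho,\sigma)=\|\sqrt{\rho}\sqrt{\sigma}\|_1^2$ and using the mixed-product property $(A_1\otimes A_2)(B_1\otimes B_2)=(A_1B_1)\otimes(A_2B_2)$, I obtain
\begin{equation*}
\sqrt{\rho_1\otimes\rho_2}\,\sqrt{\sigma_1\otimes\sigma_2}=\bigl(\sqrt{\rho_1}\,\sqrt{\sigma_1}\bigr)\otimes\bigl(\sqrt{\rho_2}\,\sqrt{\sigma_2}\bigr).
\end{equation*}

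Next I would apply multiplicativity of the trace norm: for $A\in\mathcal M_{n_1}(\C)$ and $B\in\mathcal M_{n_2}(\C)$ one has $\|A\otimes B\|_1=\|A\|_1\|B\|_1$. This follows by the same square-root uniqueness argument, since $(A\otimes B)^\dagger(A\otimes B)=(A^\dagger A)\otimes(B^\dagger B)$ forces $|A\otimes B|=|A|\otimes|B|$, together with $\Tr(|A|\otimes|B|)=\Tr|A|\cdot\Tr|B|$. Taking $A=\sqrt{\rho_1}\sqrt{\sigma_1}$ and $B=\sqrt{\rho_2}\sqrt{\sigma_2}$ and squaring the resulting norm identity yields
\begin{equation*}
F(\rho_1\otimes\rho_2,\sigma_1\otimes\sigma_2)=\|\sqrt{\rho_1}\sqrt{\sigma_1}\|_1^2\,\|\sqrt{\rho_2}\sqrt{\sigma_2}\|_1^2=F(\rho_1,\sigma_1)\,F(\rho_2,\sigma_2).
\end{equation*}

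I do not anticipate any real obstacle: the argument is a short computation, and the only steps deserving comment are the two invocations of uniqueness of the positive square root and the factorization of the trace over a tensor product. One could alternatively derive the statement from Uhlmann's characterization of fidelity by maximization over purifications, observing that purifications of $\rho_1\otimes\rho_2$ factorize, but the direct computation above is self-contained given the definition stated above and is therefore the route I would take.
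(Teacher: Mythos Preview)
Your proof is correct and self-contained; the two uses of uniqueness of the positive square root and the multiplicativity of the trace over tensor products are exactly what is needed. Note, however, that the paper does not actually prove this proposition: it appears in the preliminaries as a recalled standard fact about fidelity, stated without argument, so there is no paper proof to compare against.
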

\begin{proposition}\label{prop: unitary inv fidelity}
    Let $\rho$ and $\sigma$ be quantum states in $\mathcal D(\mathcal{H})$. Let $U$ be a unitary operator on $\mathcal H$. Then the fidelity is unitary invariant
    \begin{equation*}
        F(U\rho U^\dagger, U\sigma U^\dagger)=F(\rho,\sigma).
    \end{equation*}
\end{proposition}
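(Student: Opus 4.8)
The plan is to reduce the claim to two standard facts: that conjugation by a unitary commutes with taking the positive semidefinite square root, and that the Schatten $1$-norm is unitarily invariant. Note first that $U\rho U^\dagger$ and $U\sigma U^\dagger$ are again density operators, since conjugation by a unitary preserves positivity and the trace, so all square roots below are well defined.

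First I would establish $\sqrt{U\rho U^\dagger}=U\sqrt{\rho}\,U^\dagger$. The operator $U\sqrt{\rho}\,U^\dagger$ is positive semidefinite because $\sqrt{\rho}\geq 0$, and squaring it gives $U\sqrt{\rho}\,U^\dagger\, U\sqrt{\rho}\,U^\dagger=U\rho U^\dagger$ using $U^\dagger U=I$; by uniqueness of the positive semidefinite square root it therefore equals $\sqrt{U\rho U^\dagger}$. The identical argument applies to $\sigma$. Substituting into the definition of fidelity and cancelling $U^\dagger U=I$,
\begin{equation*}
F(U\rho U^\dagger,U\sigma U^\dagger)=\bigl\|\sqrt{U\rho U^\dagger}\,\sqrt{U\sigma U^\dagger}\bigr\|_1^2=\bigl\|U\sqrt{\rho}\,U^\dagger U\sqrt{\sigma}\,U^\dagger\bigr\|_1^2=\bigl\|U\sqrt{\rho}\sqrt{\sigma}\,U^\dagger\bigr\|_1^2.
\end{equation*}

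Next I would invoke unitary invariance of the trace norm: for any $A\in\mathcal{M}_n(\C)$ one has $|UAU^\dagger|=\sqrt{U A^\dagger A U^\dagger}=U|A|U^\dagger$, hence $\|UAU^\dagger\|_1=\Tr\!\bigl(U|A|U^\dagger\bigr)=\Tr|A|=\|A\|_1$ by cyclicity of the trace. Applying this with $A=\sqrt{\rho}\sqrt{\sigma}$ gives $\bigl\|U\sqrt{\rho}\sqrt{\sigma}\,U^\dagger\bigr\|_1=\bigl\|\sqrt{\rho}\sqrt{\sigma}\bigr\|_1$, and therefore $F(U\rho U^\dagger,U\sigma U^\dagger)=\bigl\|\sqrt{\rho}\sqrt{\sigma}\bigr\|_1^2=F(\rho,\sigma)$. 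There is no genuine obstacle here; the only points that warrant a line of justification are the commutation $\sqrt{U\rho U^\dagger}=U\sqrt{\rho}\,U^\dagger$ via uniqueness of the positive square root, and the unitary invariance of $\|\cdot\|_1$, both of which are elementary spectral-theory facts (alternatively, the latter follows immediately from the characterization of $\|\cdot\|_1$ as the sum of singular values, which are invariant under two-sided unitary multiplication).
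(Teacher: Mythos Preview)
Your proof is correct and complete. The paper states this proposition without proof, as a standard preliminary fact recalled from the quantum information literature, so there is nothing to compare against; your argument via $\sqrt{U\rho U^\dagger}=U\sqrt{\rho}\,U^\dagger$ and unitary invariance of the trace norm is the canonical one.
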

Another interesting property of the fidelity is relation between the fidelity and the Schatten $1$-norm $\|\cdot\|_1$, which is also known as Fuchs \& van de Graaf inequality:
\begin{proposition}\label{prop: fidelity and one norm}
Let $\rho$ and $\sigma$ be quantum states in $\mathcal D(\mathcal{H})$. Then the following inequalities hold
\begin{equation*}
            1-\sqrt{F(\rho,\sigma)}\leq\frac{1}{2}\|\rho-\sigma\|_1\leq
        \sqrt{1-F(\rho,\sigma)}.
\end{equation*}
\end{proposition}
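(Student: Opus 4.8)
The plan is to prove the two inequalities separately; in each case I reduce the quantum statement to an elementary one — for pure states on the right, for classical probability distributions on the left — and transfer back using standard structural facts about $F$ and $\|\cdot\|_1$.

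\emph{Upper bound $\tfrac12\|\rho-\sigma\|_1\le\sqrt{1-F(\rho,\sigma)}$.} First I invoke Uhlmann's theorem to choose purifications $\psi,\phi$ on an enlarged space $\mathcal H\otimes\mathcal H'$ with $|\langle\psi|\phi\rangle|^2=F(\rho,\sigma)$. Since $\rho-\sigma$ is the partial trace of $\ketbra{\psi}{\psi}-\ketbra{\phi}{\phi}$ and the trace norm is non-increasing under partial trace, $\|\rho-\sigma\|_1\le\|\ketbra{\psi}{\psi}-\ketbra{\phi}{\phi}\|_1$. The operator $\ketbra{\psi}{\psi}-\ketbra{\phi}{\phi}$ is supported on the (at most two-dimensional) span of $\psi$ and $\phi$; diagonalising the corresponding traceless Hermitian $2\times2$ matrix gives eigenvalues $\pm\sqrt{1-|\langle\psi|\phi\rangle|^2}$, hence $\|\ketbra{\psi}{\psi}-\ketbra{\phi}{\phi}\|_1=2\sqrt{1-|\langle\psi|\phi\rangle|^2}=2\sqrt{1-F(\rho,\sigma)}$. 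Combining the last two bounds yields the claim; the only computation here is the routine $2\times2$ diagonalisation.

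\emph{Lower bound $1-\sqrt{F(\rho,\sigma)}\le\tfrac12\|\rho-\sigma\|_1$.} Here I use the variational characterisation of the fidelity as the smallest classical (Bhattacharyya) overlap over measurements: there is a POVM $\{M_k\}$ with $\sqrt{F(\rho,\sigma)}=\sum_k\sqrt{p_kq_k}$, where $p_k:=\Tr(M_k\rho)$ and $q_k:=\Tr(M_k\sigma)$. Since $\sum_kp_k=\sum_kq_k=1$, the identity $1-\sum_k\sqrt{p_kq_k}=\tfrac12\sum_k(\sqrt{p_k}-\sqrt{q_k})^2$ holds, and using $\sqrt{p_k}+\sqrt{q_k}\ge|\sqrt{p_k}-\sqrt{q_k}|$ gives $1-\sum_k\sqrt{p_kq_k}\le\tfrac12\sum_k|p_k-q_k|$. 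Finally, the measurement map $\rho\mapsto\sum_k\Tr(M_k\rho)\,\ketbra{k}{k}$ is a quantum channel and the trace norm is monotone (data processing) under channels, so $\tfrac12\sum_k|p_k-q_k|\le\tfrac12\|\rho-\sigma\|_1$, which closes the chain.

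The genuine content — essentially the only step that is not bookkeeping — is the measurement characterisation of $F$ used in the lower bound (equivalently, the Fuchs--Caves theorem that measured fidelity equals fidelity); I would either cite it from \cite{khatri2024principles,watrous2018theory} or, to stay self-contained, replace it by the Powers--Størmer inequality $\|\sqrt\rho-\sqrt\sigma\|_2^2\le\|\rho-\sigma\|_1$ together with $\Tr(\sqrt\rho\sqrt\sigma)\le\|\sqrt\rho\sqrt\sigma\|_1=\sqrt{F(\rho,\sigma)}$, since then $\|\sqrt\rho-\sqrt\sigma\|_2^2=2-2\Tr(\sqrt\rho\sqrt\sigma)\ge 2\bigl(1-\sqrt{F(\rho,\sigma)}\bigr)$ gives the lower bound directly.
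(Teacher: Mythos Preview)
Your proof is correct; both the Uhlmann-plus-partial-trace argument for the upper bound and the measured-fidelity (or Powers--St{\o}rmer) argument for the lower bound are standard and sound. The paper itself does not prove this proposition---it is quoted without proof in the preliminaries as a known textbook fact---so there is nothing to compare against, and your write-up would serve well as a self-contained justification.
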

Finally, we recall the data-processing inequality:
\begin{proposition}\label{prop: data-processing fidelity}
    Let $\rho$ and $\sigma$ in $\mathcal{D}(\mathcal H)$, and let $\tilde\Phi: \mathcal{D}(\mathcal H_A) \rightarrow \mathcal D(\mathcal H_B)$ be a quantum channel. Then the following holds 
    \begin{equation*}
        F(\rho, \sigma) \leq F(\tilde \Phi(\rho), \tilde \Phi (\sigma)).
    \end{equation*}
\end{proposition}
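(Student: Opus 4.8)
The plan is to derive the data-processing inequality from Uhlmann's theorem together with a Stinespring dilation of the channel $\tilde\Phi$. Recall Uhlmann's theorem (see e.g.\ \cite{khatri2024principles,watrous2018theory}): if $\ket{\psi_\rho}$ and $\ket{\psi_\sigma}$ in $\mathcal H_A\otimes\mathcal H_R$ are \emph{any} purifications of $\rho$ and $\sigma$, then $|\scalar{\psi_\rho}{\psi_\sigma}|^{2}\le F(\rho,\sigma)$, and equality is attained for a suitable choice of purifications. Only the inequality together with the existence of an optimal pair is needed.

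First I would fix Uhlmann-optimal purifications $\ket{\psi_\rho},\ket{\psi_\sigma}\in\mathcal H_A\otimes\mathcal H_R$ of $\rho$ and $\sigma$, so that $|\scalar{\psi_\rho}{\psi_\sigma}|^{2}=F(\rho,\sigma)$. Next I would write $\tilde\Phi$ in Stinespring form: there is an isometry $V:\mathcal H_A\to\mathcal H_B\otimes\mathcal H_E$ with $V^{\dagger}V=I_A$ such that $\tilde\Phi(X)=\Tr_E\!\big(VXV^{\dagger}\big)$ for all $X$. Applying $V\otimes I_R$ to the two purifications yields the unit vectors $\ket{\tilde\psi_\rho}:=(V\otimes I_R)\ket{\psi_\rho}$ and $\ket{\tilde\psi_\sigma}:=(V\otimes I_R)\ket{\psi_\sigma}$ in $\mathcal H_B\otimes(\mathcal H_E\otimes\mathcal H_R)$; using $\Tr_R\ketbra{\psi_\rho}{\psi_\rho}=\rho$ one checks $\Tr_{ER}\ketbra{\tilde\psi_\rho}{\tilde\psi_\rho}=\Tr_E\big(V\rho V^{\dagger}\big)=\tilde\Phi(\rho)$, and likewise for $\sigma$, so these are purifications of the channel outputs.

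Then I would apply the inequality part of Uhlmann's theorem to the (generally non-optimal) purifications $\ket{\tilde\psi_\rho},\ket{\tilde\psi_\sigma}$ of $\tilde\Phi(\rho),\tilde\Phi(\sigma)$:
\[
F\big(\tilde\Phi(\rho),\tilde\Phi(\sigma)\big)\ \ge\ \big|\scalar{\tilde\psi_\rho}{\tilde\psi_\sigma}\big|^{2}=\big|\bra{\psi_\rho}(V^{\dagger}V\otimes I_R)\ket{\psi_\sigma}\big|^{2}=\big|\scalar{\psi_\rho}{\psi_\sigma}\big|^{2}=F(\rho,\sigma),
\]
where the third equality uses $V^{\dagger}V=I_A$ and the last uses the Uhlmann-optimality of the chosen purifications. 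This is exactly the claimed inequality.

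The only nontrivial obstacle is that the argument leans on two facts not established in the excerpt — Uhlmann's theorem and the Stinespring dilation — so the write-up must cite them from the listed textbooks or include short proofs. An essentially equivalent route that recycles the propositions already at hand is to decompose $\tilde\Phi$ as: adjoin an ancilla in a fixed pure state, apply a global unitary, and then trace out a subsystem. The first step preserves fidelity by Proposition~\ref{prop: fidelity factorisation} (with $F(\omega,\omega)=1$ for the pure ancilla $\omega$), the second by Proposition~\ref{prop: unitary inv fidelity}, and the third is monotonicity of fidelity under partial trace, which is again a one-line consequence of the inequality part of Uhlmann's theorem (purify further and compare overlaps). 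Either way, Uhlmann's inequality is the single substantive input and everything else is bookkeeping about purifications.
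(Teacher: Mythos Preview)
Your argument is correct and is essentially the standard textbook proof of the data-processing inequality for fidelity via Uhlmann's theorem and a Stinespring dilation (or, equivalently, the ancilla--unitary--partial-trace decomposition you sketch in the second paragraph). There is no gap: the only external inputs are Uhlmann's theorem and the existence of a Stinespring dilation, both of which are available in the references already cited in the paper (\cite{watrous2018theory,khatri2024principles,nielsen2001quantum}).

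As for comparison with the paper: there is nothing to compare. Proposition~\ref{prop: data-processing fidelity} is stated in the preliminaries as a recalled fact and is not proved in the paper; it is simply quoted alongside the other standard fidelity properties (Propositions~\ref{prop: concav fidelity}--\ref{prop: fidelity and one norm}). Your write-up therefore supplies a proof where the paper gives none, and the route you take is the one a reader would find in the cited textbooks.
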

\subsection{Entanglement Measures}\label{subsec: distillable entanglement and entanglement cost}
In this subsection, we recall different entanglement measures that are relevant for the rest of this work.

The first two entanglement measures we recall, are the \emph{distillable entanglement} and \emph{entanglement cost} denoted respectively by $E_D$ and $E_C$. The distillable entanglement and entanglement cost represent, respectively, the asymptotic rate at which one can extract EPR pairs from a generic bipartite state and the asymptotic rate at which one can fully reconstruct a given bipartite state from EPR pairs. We recall that EPR pair state is given by $|\varphi\rangle = \frac{1}{\sqrt{2}}(|00\rangle + |11\rangle)$. We also denote corresponding density matrix of EPR pair as $\Phi = \ketbra{\varphi}{\varphi}$.

We  first start with distillable entanglement \cite{bennett1996mixed, bennett1996purification}. The \emph{one-shot distillable entanglement} denoted by $E^{\epsilon}_D$ and the \emph{asymptotic distillable entanglement} denoted by $E_D$.  
\begin{definition}\label{def: inf-th dist ent}
    Let $\mathcal{H}_A=\mathcal H_B=(\C^2)^{\otimes n}$ and $\mathcal{H}_{\bar A}=\mathcal H_{\bar B}=(\C^2)^{\otimes m}$. Define the LOCC channel $\Gamma:\mathcal D(\mathcal H_A\otimes \mathcal H_B)\to \mathcal{D} (\mathcal H_{\bar A}\otimes \mathcal H_{\bar B})$. The \emph{one-shot distillable} entanglement of a bipartite state $\rho_{AB}\in \mathcal D(\mathcal H_A\otimes \mathcal H_B)$ is given by: 
    \begin{equation*}
    \forall \epsilon\in[0,1],\quad E^{\epsilon}_D(\rho_{AB}):=\sup_{m,\Gamma}\{\,m\,|\,p_{\text{err}}(\Gamma,\rho_{AB})\leq \epsilon\},\end{equation*}
    where
    \begin{equation*}
    p_{\emph{err}}(\Gamma,\rho_{AB})=1-\bra{\phi^{\otimes m}}\Gamma(\rho_{AB})\ket{\phi^{\otimes m}}.
    \end{equation*}
\end{definition}
From the definition above, the one-shot distillable entanglement $E^{\epsilon}_D$ represents the number of EPR pairs $\ket{\phi}$ that can be extracted from a given state $\rho_{AB}$ with a given precision $\epsilon$. 
\begin{definition}
    Let $\rho_{AB}$ a bipartite state on $\mathcal D(\mathcal{H}_A\otimes \mathcal H_B)$, the \emph{asymptotic distillable entanglement} is given by: 
    \begin{equation*}
    E_D(\rho_{AB}):=\inf_{\epsilon\in(0,1]}\liminf_{t\to\infty}\frac{1}{t}E^{\epsilon}_D(\rho_{AB}^{\otimes t}).
    \end{equation*}
\end{definition}
The distillable entanglement is LOCC monotone \cite{bennett1996mixed} and superadditive on tensor product \cite{Bandyopadhyay2005}. At the same time, it is known that distillable entanglement, if one assumes validity of necessary conjectures, is not convex \cite{Shor2001}.

Similar to the distillable entanglement, we first introduce the \emph{one-shot entanglement cost} and then the \emph{asymptotic entanglement cost} denoted respectively by $E_C^{\epsilon}$ and $E_{C}$. 
\begin{definition}\label{def: inf-th cost ent}
    Let $\mathcal{H}_A=\mathcal H_B=(\C^2)^{\otimes m}$ and $\mathcal{H}_{\bar A}=\mathcal H_{\bar B}=(\C^2)^{\otimes n}$. Define the LOCC channel $\Gamma:\mathcal D(\mathcal H_A\otimes \mathcal H_B)\to\mathcal D( \mathcal H_{\bar A}\otimes \mathcal H_{\bar B})$. The \emph{one-shot entanglement cost} of a bipartite state $\rho_{AB}\in\mathcal D(\mathcal H_A\otimes \mathcal H_B)$ is given by: 
\begin{equation*}
\forall \epsilon \in[0,1],\quad E^{\epsilon}_C(\rho_{AB}):=\inf_{n,\Gamma}\{\,n\,|\,p_{\text{err}}(\Gamma,\rho_{AB})\leq \epsilon\},
\end{equation*}
where 
\begin{equation*}
p_{\emph{err}}(\Gamma, \rho_{AB}):=1-F(\Gamma(\Phi^{\otimes n}),\rho_{AB}).
\end{equation*}
\end{definition}
The one-shot entanglement cost $E^{\epsilon}_C$ represents the ``minimal'' number of copies of EPR pairs needed to recover a bipartite state with a given precision $\epsilon$.
\begin{definition}
    Let $\rho_{AB}$ a bipartite quantum state, the asymptotic entanglement cost is given by: 
    \begin{equation*}
E_C(\rho_{AB}):=\inf_{\epsilon\in(0,1]}\limsup_{t\to\infty}\frac{1}{t}E^{\epsilon}_C(\rho_{AB}^{\otimes t}).
    \end{equation*}
\end{definition}
The entanglement cost is LOCC monotone due to connection to regularized entanglement of formation \cite{PatrickMHayden_2001} and LOCC monotonicity of the latter \cite{bennett1996mixed}. Entanglement cost is known to be convex \cite{10.1063/1.1495917}.

Another major entanglement measure is \emph{squashed entanglement}. We recall its definition and relation to distillable entanglement and entanglement cost \cite{christandl2004squashed}. 

We first recall the notion of conditional mutual information.  
\begin{definition}\label{def:cond mutual information}
    Let $\rho_{ABE}\in\mathcal{D}(\mathcal{H}_{ABE})$ be a tripartite state. The  \emph{conditional mutual information} of the state $\rho_{AB}$ is defined as follows:
    \begin{equation*}
        I(A;B|E)_{\rho}:= H(AE)_\rho + H(BE)_\rho - H(ABE)_\rho - H(E)_\rho ,
        \end{equation*}
        where $H(\cdot)_\rho$ is the von Neumann entropy.
        \end{definition}
\begin{definition}\label{def:squashed entanglement}
    Let $\rho_{AB}\in\mathcal{D}(\mathcal{H}_{AB})$ be a bipartite state. The \emph{squashed entanglement} of the state $\rho_{AB}$ is defined as follows:
    \begin{equation*}
        E_{sq}(\rho_{AB}):=\inf\left\{\frac{1}{2}I(A;B|E)_{\rho_{ABE}}\,|\,\rho_{ABE}\right\},
    \end{equation*}
    where $\rho_{ABE}$ is an extension of $\rho_{AB}:\,\operatorname{Tr}_E(\rho_{ABE}) = \rho_{AB}$.
\end{definition}
The squashed entanglement measure is LOCC monotone, convex and additive on tensor product \cite{winter2004}. 

In the following we recall relation of squashed entanglement with entanglement cost and distillable entanglement. First of all, it is lower bounded by the one-shot distillable entanglement.
\begin{proposition}\cite[Theorem 13.9]{khatri2024principles}\label{prop: distillation and squashed}
    Let $\rho_{AB}$ a bipartite state in $\mathcal D(\mathcal H_{AB})$. Then the following holds: 
\begin{equation*}
    E^{\epsilon}_D(\rho_{AB})\leq\frac{1}{1-\sqrt{\varepsilon(\lambda)}}\left(E_{sq}(\rho_{AB}) + g_2\left(\sqrt{\varepsilon(\lambda)}\right)\right),
\end{equation*}
     where $g_2(\delta):=(\delta+1)\operatorname{log}_2(\delta+1) - \delta\operatorname{log}_2\delta$.
\end{proposition}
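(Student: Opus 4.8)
The plan is to exhibit a concrete extension of the distilled state and exploit the monotonicity and near-additivity properties of conditional mutual information. Suppose $E_D^\epsilon(\rho_{AB}) = m$, witnessed by an LOCC channel $\Gamma : \mathcal D(\mathcal H_{AB}) \to \mathcal D(\mathcal H_{\bar A \bar B})$ with $\bra{\phi^{\otimes m}}\Gamma(\rho_{AB})\ket{\phi^{\otimes m}} \geq 1-\epsilon$, so that $\sigma_{\bar A\bar B} := \Gamma(\rho_{AB})$ is $\epsilon$-close in fidelity to $\Phi^{\otimes m}$. The first step is to recall that squashed entanglement is an LOCC monotone, hence $E_{sq}(\sigma_{\bar A\bar B}) \leq E_{sq}(\rho_{AB})$; this reduces the problem to lower bounding $E_{sq}$ of a state that is fidelity-close to $m$ perfect EPR pairs.

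The second step is a continuity argument: since $F(\sigma_{\bar A\bar B}, \Phi^{\otimes m}) \geq 1-\epsilon$, Proposition~\ref{prop: fidelity and one norm} gives $\tfrac12\|\sigma_{\bar A\bar B} - \Phi^{\otimes m}\|_1 \leq \sqrt{\epsilon}$, and one invokes the (known) asymptotic-continuity / Alicki--Fannes-type bound for squashed entanglement, which states that two states within trace distance $\delta$ on a system of $m$ qubits per party have squashed entanglements differing by at most $O(m\delta) + h(\delta)$-type terms; the precise form quoted here, with the factor $\tfrac{1}{1-\sqrt\epsilon}$ and the $g_2(\sqrt\epsilon)$ correction, is exactly the continuity estimate used in the cited reference. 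Combining with $E_{sq}(\Phi^{\otimes m}) = m$ (squashed entanglement is additive on tensor products and equals $1$ on a single EPR pair, both recalled in the text) yields
\begin{equation*}
m = E_{sq}(\Phi^{\otimes m}) \leq \frac{1}{1-\sqrt\epsilon}\left(E_{sq}(\sigma_{\bar A\bar B}) + g_2(\sqrt\epsilon)\right) \leq \frac{1}{1-\sqrt\epsilon}\left(E_{sq}(\rho_{AB}) + g_2(\sqrt\epsilon)\right).
\end{equation*}
Taking the supremum over all valid $(m,\Gamma)$ gives the claimed bound on $E_D^\epsilon(\rho_{AB})$.

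The main obstacle is the continuity estimate for squashed entanglement with the sharp constants: one cannot simply cite generic asymptotic continuity, because the statement has a specific multiplicative factor $(1-\sqrt\epsilon)^{-1}$ rather than an additive $O(m\sqrt\epsilon)$ term. The cleanest route is to follow \cite{khatri2024principles} directly — bound $E_{sq}(\Phi^{\otimes m})$ from above using a near-optimal extension of $\sigma_{\bar A\bar B}$ together with the fact that, on the large-overlap subspace, the CMI of the extension is controlled; the $g_2$ term arises from the entropy of the ``error branch'' in a gentle-measurement decomposition, and the $(1-\sqrt\epsilon)^{-1}$ prefactor from renormalizing the post-selected good branch. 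Since the statement is quoted verbatim from \cite[Theorem 13.9]{khatri2024principles}, in the paper itself it suffices to cite it; the sketch above records why it holds.
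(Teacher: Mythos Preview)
The paper does not prove this proposition at all: it is stated with a citation to \cite[Theorem~13.9]{khatri2024principles} and used as a black box. Your sketch---LOCC monotonicity of $E_{sq}$, the Fuchs--van de Graaf conversion to trace distance, the Alicki--Fannes/Winter-type continuity bound with the $(1-\sqrt\epsilon)^{-1}$ prefactor and $g_2$ correction, and the normalization $E_{sq}(\Phi^{\otimes m})=m$---is exactly the standard argument behind the cited theorem, and you correctly note at the end that in the present paper a citation suffices.
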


Moreover, the squashed entanglement is upper bounded by entanglement cost and lower bounded by distillable entanglement \cite{christandl2004squashed}.

\begin{proposition}
    Let $\rho_{AB}\in\mathcal{D}(\mathcal{H}_{AB})$, the following relations hold:
    \begin{equation*}
        E_D(\rho_{AB})\leq E_{\text{sq}}(\rho_{AB})\leq E_C(\rho_{AB}).
    \end{equation*}
\end{proposition}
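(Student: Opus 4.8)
The final statement is the chain of inequalities $E_D(\rho_{AB})\le E_{\mathrm{sq}}(\rho_{AB})\le E_C(\rho_{AB})$, attributed to Christandl--Winter. My plan is to prove each inequality separately, using the already-recalled properties of squashed entanglement — namely that $E_{\mathrm{sq}}$ is an LOCC monotone, is convex, is additive on tensor products, is asymptotically continuous, and normalizes correctly on EPR pairs ($E_{\mathrm{sq}}(\Phi)=1$, $E_{\mathrm{sq}}$ of a separable state is $0$). These are exactly the axioms that make any such functional sandwiched between $E_D$ and $E_C$, so the proof is really an instance of the general principle that distillable entanglement is the smallest and entanglement cost the largest ``reasonable'' entanglement measure.

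\medskip

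\noindent\textbf{Upper bound $E_{\mathrm{sq}}\le E_C$.} First I would fix $\epsilon>0$ and a number of input EPR pairs $n$ together with an LOCC channel $\Gamma$ achieving $F(\Gamma(\Phi^{\otimes n}),\rho_{AB})\ge 1-\epsilon$, so that $n$ is (essentially) $E_C^\epsilon(\rho_{AB})$ up to the blocklength normalization. Applying this simultaneously to $t$ copies and using additivity $E_{\mathrm{sq}}(\Phi^{\otimes nt})=nt$ together with LOCC monotonicity gives $E_{\mathrm{sq}}(\Gamma^{\otimes t}(\Phi^{\otimes nt}))\le nt$; then asymptotic continuity of $E_{\mathrm{sq}}$ lets me replace $\Gamma^{\otimes t}(\Phi^{\otimes nt})$ by $\rho_{AB}^{\otimes t}$ at the cost of an error term that is $o(t)$ as $t\to\infty$ (for $\epsilon$ fixed) and then $\to 0$ as $\epsilon\to 0$. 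Dividing by $t$, taking $\limsup_{t\to\infty}$ and then $\inf_{\epsilon}$ reproduces exactly the definition of $E_C(\rho_{AB})$, yielding $E_{\mathrm{sq}}(\rho_{AB})\le E_C(\rho_{AB})$. (Convexity enters only if one wants to handle the general dilution protocol where the target is approached by a mixture; it can be folded into the continuity estimate.)

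\medskip

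\noindent\textbf{Lower bound $E_D\le E_{\mathrm{sq}}$.} This is the direction already essentially recorded as Proposition~\ref{prop: distillation and squashed}: for every $\epsilon$ and every distillation protocol extracting $m$ EPR pairs from $\rho_{AB}$ with error $\le\epsilon$, one has $m=E_{\mathrm{sq}}(\Phi^{\otimes m})\le$ (LOCC monotonicity) $E_{\mathrm{sq}}$ of the distilled state, which by asymptotic continuity is within $g_2(\sqrt\epsilon)/(1-\sqrt\epsilon)$ of $\tfrac{1}{1-\sqrt\epsilon}E_{\mathrm{sq}}(\rho_{AB})$ — precisely the stated bound. Running this on $\rho_{AB}^{\otimes t}$, using additivity $E_{\mathrm{sq}}(\rho_{AB}^{\otimes t})=t\,E_{\mathrm{sq}}(\rho_{AB})$, dividing by $t$, taking $\liminf_{t\to\infty}$ (the correction is $o(t)$) and then $\inf_\epsilon$ gives $E_D(\rho_{AB})\le E_{\mathrm{sq}}(\rho_{AB})$.

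\medskip

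\noindent\textbf{Main obstacle.} The nontrivial ingredient is not the LOCC-monotonicity bookkeeping but the quantitative \emph{asymptotic continuity} of $E_{\mathrm{sq}}$ — i.e. that $|E_{\mathrm{sq}}(\rho)-E_{\mathrm{sq}}(\sigma)|$ is controlled by $\|\rho-\sigma\|_1$ times a $\log$ of the dimension with a vanishing additive term — which is what converts a high-fidelity approximation (via Fuchs--van de Graaf, Proposition~\ref{prop: fidelity and one norm}) into a genuine bound on the measure and makes the $o(t)$ claims legitimate. The cleanest route is to cite this as a known property of $E_{\mathrm{sq}}$ (Alicki--Fannes-type continuity, established in \cite{christandl2004squashed,winter2004}) rather than reprove it; with that in hand both inequalities follow by the template above, and the statement is the Christandl--Winter sandwich theorem.
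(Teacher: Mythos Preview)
The paper does not actually prove this proposition; it merely states it and cites \cite{christandl2004squashed}. Your sketch is essentially the standard Christandl--Winter argument and is correct in spirit, so you are providing strictly more than the paper does.

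One small point on the $E_{\mathrm{sq}}\le E_C$ direction: as written you take a single-copy dilution $\Gamma$ achieving $E_C^\epsilon(\rho_{AB})$ and tensor it up to $\Gamma^{\otimes t}$. That route only yields $E_{\mathrm{sq}}(\rho)\le E_C^\epsilon(\rho)+o(1)$, i.e.\ a bound by the \emph{unregularized} one-shot cost, which is in general larger than $E_C(\rho)=\inf_\epsilon\limsup_t \tfrac{1}{t}E_C^\epsilon(\rho^{\otimes t})$. The fix is immediate: for each $t$ work directly with the optimal $t$-copy dilution $\Gamma_t$ with $F(\Gamma_t(\Phi^{\otimes n_t}),\rho^{\otimes t})\ge 1-\epsilon$, apply LOCC monotonicity and normalization to get $E_{\mathrm{sq}}(\Gamma_t(\Phi^{\otimes n_t}))\le n_t$, then asymptotic continuity and additivity give $t\,E_{\mathrm{sq}}(\rho)\le n_t + o(t)$; dividing by $t$ and taking $\limsup_t$ then $\inf_\epsilon$ recovers the regularized $E_C$. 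Your $E_D\le E_{\mathrm{sq}}$ argument via Proposition~\ref{prop: distillation and squashed} plus additivity is fine as stated.
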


\section{Computational Entanglement}\label{sec: comp entanglement}
In this section we introduce extensions of some entanglement measures properties. These extensions address entanglement measures that admit function lower/upper bounds. We then recall key notions from computational entanglement theory as introduced in \cite{arnon2023computational}. 

In Subsection \ref{subsec:entanglement properties} we introduce extensions of some desirable entanglement measures properties. In Subsection \ref{subsec: computational entanglement}, we recall the notion of computational distillable entanglement and entanglement cost in the one-shot setting — computational analogues of the one-shot distillable entanglement and one-shot entanglement cost in Section \ref{sec: intro information theory}. Finally, in Subsection \ref{subsec: uniform computational entanglement}, we recall the computational distillable entanglement and entanglement cost in the uniform setting.

\subsection{Computational Entanglement Measures Properties}\label{subsec:entanglement properties}
In this subsection, we introduce refined formulations of well-known entanglement measure properties. These generalized definitions capture the essential features of properties such as convexity, concavity, subadditivity, superadditivity, invariance under local unitaries, and LOCC monotonicity. In particular, computational entanglement theory provides only function upper or lower bounds rather than scalar values. This motivates the definitions below, which adapt the properties to measures that admit only such bounds.
\begin{definition}
    Let $E$ be an entanglement measure and let $\{\rho_{x}^\lambda\}_{\lambda\in\mathbb N_+, x\in X}$ be a family of bipartite states indexed by a growth parameter $\lambda\in\mathbb N_+$ and finite alphabet $X$. We say that $E$ is \emph{lower bound convex} (\emph{upper bound concave}) if
    \begin{equation*}
        E(\{\sigma^\lambda\})\underset{\displaystyle (\leq)}{\geq} n(\lambda)\implies \sum_{x\in X}p(x)E(\{\rho^\lambda_x\})\underset{\displaystyle (\leq)}{\geq} n(\lambda) \quad\text{with}\quad \sigma^\lambda:=\sum_{x\in X}p(x)\rho^\lambda_x.
    \end{equation*}
\end{definition}

\begin{definition}
    Let $\{\rho^\lambda_1\}_{\lambda\in\mathbb N_+}$ and $\{\rho^\lambda_2\}_{\lambda\in\mathbb N_+}$ be two families of bipartite states with entanglement measure $E$ admitting two lower bounds (upper bound) $n_1(\lambda)$ and $n_2(\lambda)$. We say the entanglement measure $E$ is a \emph{lower bound subadditive} (\emph{upper bound superraditive}) if the following holds
    \begin{equation*}
 E(\{\rho^\lambda_1\otimes\rho^\lambda_2\})\underset{\displaystyle (\leq)}{\geq}  n_1(\lambda)+n_2(\lambda).
    \end{equation*}
\end{definition}

\begin{definition}\label{def: invariance under LU}
    An entanglement measure $E$ with a lower bound (upper bound) $m(\lambda)$  for a family of states $\{\rho^\lambda_{AB}\}_{\lambda\in\mathbb N_+}$ is said to be invariant under family of local unitaries if the following holds
    \begin{equation*}
        E(\{U^\lambda_{AB}\rho^\lambda_{AB}U^{\lambda\dagger}_{AB}\})\underset{\displaystyle (\leq)}{\geq} m(\lambda)\iff E(\{\rho^\lambda_{AB}\})\underset{\displaystyle (\leq)}{\geq} m(\lambda),
    \end{equation*}
    for all families of local unitaries $\{U^\lambda_{AB}:=U^\lambda_A\otimes U^\lambda_B\}_{\lambda\in\mathbb N_+}$.
\end{definition}

\begin{definition}\label{def: upper bound LOCC monotonicity}
    Let an entanglement measure $E$ with a valid upper bound $n(\lambda)$ (or lower bound $m(\lambda)$) for a family of states $\{\rho^\lambda_{AB}\}_{\lambda\in\mathbb N_+}$. We say that $E$ is LOCC monotone if for any family of LOCC maps $\{\Lambda^\lambda\}_{\lambda\in\mathbb N_+}$ the following holds:
    \begin{align*}
        E(\{\rho^\lambda_{AB}\})&\leq n(\lambda)\implies E(\{\Lambda^\lambda(\rho^\lambda_{AB})\})\leq n(\lambda) \\
       E(\{\Lambda^\lambda(\rho^\lambda_{AB})\})&\geq m(\lambda)\implies E(\{\rho^\lambda_{AB}\})\geq m(\lambda).
    \end{align*}
\end{definition}

\subsection{Computational One-Shot Entanglement}\label{subsec: computational entanglement} In this subsection, we recall the definitions of computational one-shot distillable entanglement and computational one-shot entanglement cost.

We begin by recalling the definition of a circuit representation for LOCC channels, followed by the notion of an efficient family of such channels. An efficient family consists of LOCC maps whose circuit descriptions scale polynomially with the parameter $\lambda$. This formalizes the idea of \emph{efficient scaling} in a computational setting; see \cite{arnon2023computational} for details.

\begin{definition}\label{def: definition of circuit representation of LOCC}
    Let $\rho_{AB}$ be an arbitrary input state on $n_A+n_B$ qubits: 
    \begin{equation*}
        \rho_{AB}\in\mathcal{D}(\mathcal{H}_{AB})\quad\text{where}\quad\mathcal{H}_{AB}:=(\C^2)^{\otimes n_A}\otimes(\C^2)^{\otimes n_B}.
    \end{equation*}
    A LOCC channel $\Gamma$ with $r$ rounds admits a circuit description if it can be represented by two families of circuits $\{\mathcal{C}_{A,i}\}_{i \in [r]}$ and $\{\mathcal{C}_{B,i}\}_{i \in [r]}$. The family $\{\mathcal{C}_{A,i}\}$ acts on registers $A$, $A'$, and $C$, consisting of $n_A + t_A + q$ qubits: $A$ is Alice's input register, $A'$ is her ancilla register, and $C$ is a shared classical communication register. The circuits $\{\mathcal{C}_{B,i}\}$ are defined analogously for Bob.
    
    Initially, $A$ and $B$ hold the input state $\rho_{AB}$; $A'$ and $B'$ are initialized in $|0\rangle^{\otimes t_A}$ and $|0\rangle^{\otimes t_B}$, respectively, and $C$ is initialized in $|0\rangle^{\otimes q}$. At each round $i \in [r]$, Alice applies $\mathcal{C}_{A,i}$ to $(A, A', C)$ and measures $C$ in the computational basis. Bob then applies $\mathcal{C}_{B,i}$ to $(B, B', C)$ and measures $C$ again. The output state is stored in the first $m_A$ qubits of $(A, A')$ and the first $m_B$ qubits of $(B, B')$.
\end{definition} 

\begin{figure*}
	\centering
    \includegraphics[width=0.8\linewidth]{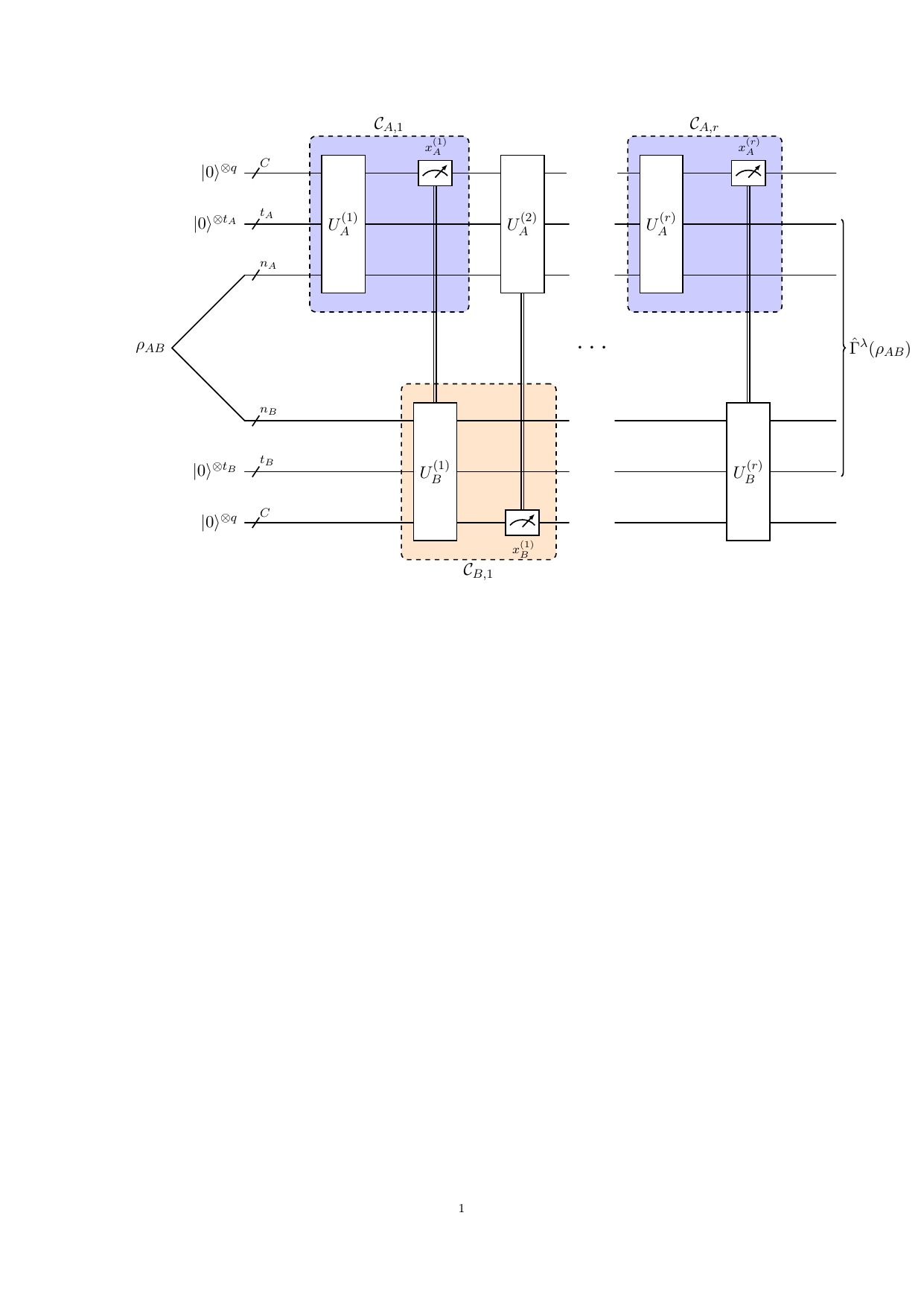}
    \caption{A circuit representation of a LOCC channel $\hat{\Gamma}$ according to Definition \ref{def: definition efficient LOCC}.}
    \label{fig:circuit LOCC}
\end{figure*}

\begin{definition}
     A LOCC channel $\Gamma$ is said to have \emph{circuit size} $c$ if it can be implemented by a circuit with at most $c$ total gates, including all unitaries, ancilla creation, and measurements.     
\end{definition}

From a computational viewpoint, it is natural to restrict to circuit representations whose size grows at most polynomially. This motivates the following definition:

\begin{definition}\label{def: definition efficient LOCC}
A family of LOCC channels $\{\hat{\Gamma}^{\lambda}\}_{\lambda \in \mathbb{N}^+}$ is called \emph{efficient} if each $\hat{\Gamma}^\lambda$ has a circuit description of size at most $c(\lambda)$ for some $c(\lambda) \in \operatorname{Poly}(\lambda)$.
\end{definition}

We now recall the definitions of the \emph{computational one-shot distillable entanglement} and \emph{computational one-shot entanglement cost}. Each quantifies the number of EPR pairs that can be distilled from—or required to reproduce—a given state family, under the assumption that only efficient families of LOCC protocols are permitted. These were introduced in \cite{arnon2023computational}.

\begin{definition}\label{def: comp distil vidick}
Let $\epsilon : \mathbb{N}^+ \to [0,1]$ and let  $n_A, n_B : \mathbb{N}^+ \to \mathbb{N}^+$ be polynomial functions. Consider a family of bipartite states $\{\rho_{AB}^\lambda\}_{\lambda \in \mathbb{N}^+}$ with
\[
\rho_{AB}^\lambda \in \mathcal{D}(\mathcal{H}_{AB}), \quad \text{where}\quad  \mathcal{H}_{AB} := (\mathbb{C}^2)^{\otimes n_A(\lambda)} \otimes (\mathbb{C}^2)^{\otimes n_B(\lambda)}.
\]

Then the \emph{computational one-shot distillable entanglement with error $\epsilon$} of the state family $\{\rho_{AB}^\lambda\}$ is denoted as
\[
\hat{E}_D^\epsilon\left(\left\{\rho_{AB}^\lambda\right\}\right).
\]

A function $m : \mathbb{N}^+ \to \mathbb{N}^+$ is a \emph{valid lower bound} on $\hat{E}_D^\epsilon\left(\{\rho_{AB}^\lambda\}\right)$ if there exists an efficient family of LOCC channels $\{\hat{\Gamma}^\lambda\}$ producing $2m(\lambda)$-qubit states such that
\[
\forall \lambda \in \mathbb{N}^+, \quad p_{\emph{err}}(\hat{\Gamma}^\lambda, \rho^\lambda_{AB}):=1 - F\left(\hat{\Gamma}^\lambda(\rho_{AB}^\lambda), \Phi^{\otimes m(\lambda)}\right) \leq \epsilon(\lambda).
\]
In this case, each channel $\hat{\Gamma}^\lambda$ maps the state $\rho_{AB}^\lambda$ to a state on $\tilde{\mathcal{H}}_{AB} := (\mathbb{C}^2)^{\otimes m(\lambda)} \otimes (\mathbb{C}^2)^{\otimes m(\lambda)}$.
\end{definition}

\begin{definition}\label{def: comp entang cost vidick}
Let $\epsilon : \mathbb{N}^+ \to [0,1]$ and let $n_A, n_B : \mathbb{N}^+ \to \mathbb{N}^+$ be polynomial functions. Consider a family of states $\{\rho_{AB}^\lambda\}_{\lambda \in \mathbb{N}^+}$ with
\[
\rho_{AB}^\lambda \in \mathcal{D}(\mathcal{H}_{AB}), \quad \text{where}\quad \mathcal{H}_{AB} := (\mathbb{C}^2)^{\otimes n_A(\lambda)} \otimes (\mathbb{C}^2)^{\otimes n_B(\lambda)}.
\]

Then the \emph{computational one-shot entanglement cost with error $\epsilon$} is denoted as
\[
\hat{E}_C^\epsilon\left(\left\{\rho_{AB}^\lambda\right\}\right).
\]

A function $n : \mathbb{N}^+ \to \mathbb{N}^+$ is a \emph{valid upper bound} on $\hat{E}_C^\epsilon\left(\{\rho_{AB}^\lambda\}\right)$ if there exists an efficient family of LOCC channels $\{\hat{\Gamma}^\lambda\}$ such that
\[
\forall \lambda \in \mathbb{N}^+, \quad  p_{\emph{err}}(\hat{\Gamma}^\lambda, \rho^\lambda_{AB}) :=1 - F\left(\hat{\Gamma}^\lambda(\Phi^{\otimes n(\lambda)}), \rho_{AB}^\lambda\right) \leq \epsilon(\lambda).
\]
Here, each channel $\hat{\Gamma}^\lambda$ maps the EPR input state on $\mathcal{H}_{\tilde{A}\tilde{B}} := (\mathbb{C}^2)^{\otimes n(\lambda)} \otimes (\mathbb{C}^2)^{\otimes n(\lambda)}$ to an approximation of the target state $\rho_{AB}^\lambda$.
\end{definition}
Finally, since the set of valid LOCCs in the computational quantities is restricted, we trivially have the following relations: 
    \begin{align*}
        \hat E^\varepsilon_D(\{\rho^\lambda\}) \geq m(\lambda) &\implies E^\varepsilon_D(\rho^\lambda)\geq m(\lambda),\,\forall\lambda\in\mathbb N_+ \\
        \hat E^\varepsilon_C(\{\rho^\lambda\}) \leq n(\lambda) &\implies E^\varepsilon_C(\rho^\lambda)\leq n(\lambda),\,\forall\lambda\in\mathbb N_+.
    \end{align*}

\subsection{Uniform Computational Entanglement}\label{subsec: uniform computational entanglement}
In this subsection, we recall the \emph{uniform} description for each computational distillable entanglement and cost. 

The concept of computational one-shot distillable entanglement/cost can be generalized to the case when one wants to distill/dilute multiple states corresponding to one value of the parameter $\lambda \in \mathbb N^+$. The  requirement for this quantity is that the maximal error of operation should remain the same for all states corresponding to the same $\lambda$. This requirement corresponds to the notion of \emph{uniformity}.


\begin{definition}\label{def: uniform computational distillable entanglement}
    Let $\epsilon:\mathbb N^+\to[0,1]$ and let $n_A,n_B,\kappa:\mathbb N^+\to \mathbb N^+$ be any polynomials. Let the encoded family of quantum states be $\{\rho^k_{AB}\}_{k\in\{0,1\}^{\kappa(\lambda)}}$ with
    
    \[
\rho_{AB}^k \in \mathcal{D}(\mathcal{H}_{AB}), \quad \text{where}\quad \mathcal{H}_{AB} := (\mathbb{C}^2)^{\otimes n_A(\lambda)} \otimes (\mathbb{C}^2)^{\otimes n_B(\lambda)}.
\]


    Then the \emph{computational uniform distillable entanglement with error $\epsilon$} is denoted as
\[
\hat{E}_D^\epsilon\left(\left\{k,\rho_{AB}^k\right\}\right).
\]

    A function $m(\lambda)$ is a valid lower bound on the computational distillable entanglement of the family $\{k,\rho^k_{AB}\}$
     if there exist an efficient family of LOCC channels $\{\hat{\Gamma}^{\lambda}\}$ such that 
    \begin{equation*}
        \forall \lambda\in\mathbb N^+,\,\forall k\in \{0,1\}^{\kappa(\lambda)},\quad p_{\emph{err}}(\hat{\Gamma}^{\lambda},k,\rho^k_{AB})\leq\epsilon(\lambda),
    \end{equation*}
    where
    \begin{equation*}
        p_{\emph{err}}(\hat{\Gamma}^{\lambda},k,\rho^k_{AB})=1-F\left(\hat{\Gamma}^{\lambda}(k,\rho^k_{AB}),\Phi^{\otimes m(\lambda)}\right)=1-\bra{\phi^{\otimes m(\lambda)}}\hat{\Gamma}^{\lambda}(k,\rho^k_{AB})\ket{\phi^{\otimes m(\lambda)}}.
    \end{equation*}
\end{definition}
It is clear from the definition above that the member of efficient family of LOCC channels is the  channel $\hat{\Gamma}(k,\rho^k_{AB})$  that acts on the bipartition $A'A:BB'$. More precisely 
    \begin{align*}
        \hat{\Gamma}^{\lambda}:\mathcal{D}(\mathcal{H}_{A'})\otimes \mathcal D(\mathcal{H}_{AB})\otimes \mathcal D(\mathcal H_{B'})&\to \mathcal{D}(\tilde{\mathcal{H}}_{AB}):=\mathcal{D}\left((\C^2)^{\otimes m(\lambda)}\otimes (\C^2)^{\otimes m(\lambda)}\right) \\
        \hat{\Gamma}^{\lambda}:\ketbra{k_{A'}}{k_{A'}}\otimes \rho^k_{AB}\otimes \ketbra{k_{B'}}{k_{B'}}&\to\hat{\Gamma}^{\lambda}(\ketbra{k_{A'}}{k_{A'}}\otimes \rho^k_{AB}\otimes \ketbra{k_{B'}}{k_{B'}}),
    \end{align*}
    where $\mathcal{D}(\mathcal{H}_{A'})$ and $\mathcal{D}(\mathcal{H}_{B'})$ represent respectively extra registers for Alice and Bob encoding the different keys used in the distillation protocol via the circuit representation of LOCC channel. 
    
As for the uniform computational distillable entanglement, we recall the definition of uniform computational entanglement cost.
\begin{definition}\label{def: uniform comp cost vidick}
    Let $\epsilon:\mathbb N^+\to[0,1]$ and let $n_A,n_B,\kappa:\mathbb N^+\to \mathbb N^+$ be any polynomials. Let the encoded family of quantum states be $\{\rho^k_{AB}\}_{k\in\{0,1\}^{\kappa(\lambda)}}$ with
    \[
\rho_{AB}^k \in \mathcal{D}(\mathcal{H}_{AB}), \quad \text{where}\quad\mathcal{H}_{AB} := (\mathbb{C}^2)^{\otimes n_A(\lambda)} \otimes (\mathbb{C}^2)^{\otimes n_B(\lambda)}.
\]

    Then the \emph{computational uniform entanglement cost with error $\epsilon$} is denoted as
\[
\hat{E}_C^\epsilon\left(\left\{k,\rho_{AB}^k\right\}\right).
\]

    A function $n(\lambda)$ is a valid upper bound on the uniform computational cost entanglement of the family $\{k,\rho^k_{AB}\}$
     if there exist an efficient family of LOCC channels $\{\hat{\Gamma}^{\lambda}\}$ such that 
    \begin{equation*}
        \forall \lambda\in\mathbb N^+,\,\forall k\in \{0,1\}^{\kappa(\lambda)},\quad p_{\emph{err}}(\hat{\Gamma}^{\lambda},k,\rho^k_{AB})\leq\epsilon(\lambda),
    \end{equation*}
    where
    \begin{equation*}
        p_{\emph{err}}(\hat{\Gamma}^{\lambda},k,\rho^k_{AB})=1-F\left(\hat{\Gamma}^{\lambda}(k,\Phi^{\otimes n(\lambda)}),\rho^k_{AB}\right).
    \end{equation*}
\end{definition}
Here the member of efficient family of LOCC channels is the  channel $\hat{\Gamma}(k,\Phi^{\otimes n(\lambda)})$  that acts on the bipartition $A'\tilde A:\tilde BB'$. More precisely 
    \begin{align*}
        \hat{\Gamma}^{\lambda}:\mathcal{D}(\mathcal{H}_{A'})\otimes \mathcal D(\mathcal H_{\tilde A\tilde B})\otimes \mathcal D(\mathcal H_{B'})&\to \mathcal{D}(\mathcal{H}_{AB}) \\
        \hat{\Gamma}^{\lambda}:\ketbra{k_{A'}}{k_{A'}}\otimes \Phi^{\otimes n}\otimes \ketbra{k_{B'}}{k_{B'}}&\to\hat{\Gamma}^{\lambda}(\ketbra{k_{A'}}{k_{A'}}\otimes \Phi^{\otimes n}\otimes \ketbra{k_{B'}}{k_{B'}}),
    \end{align*}
    where
    \begin{equation*}
        \mathcal H_{\tilde A\tilde B}:=(\C^2)^{\otimes n(\lambda)}\otimes (\C^2)^{\otimes n(\lambda)}.
    \end{equation*}
    Moreover, the spaces $\mathcal{D}(\mathcal{H}_{A'})$ and $\mathcal{D}(\mathcal{H}_{B'})$ represent, respectively, extra registers for Alice and Bob encoding the different keys used in the dilution protocol via the efficient circuit representation of LOCC channels. 

Abusing notation, in the following, we write $\{\rho^{\lambda}\}_{\lambda\in\mathbb N^+}$ and $\{\rho^k\}_{k\in\{0,1\}^{\kappa(\lambda)}}$ instead of  $\{\rho^{\lambda}_{AB}\}_{\lambda\in\mathbb N^+}$ and $\{\rho^{k}_{AB}\}_{k\in\{0,1\}^{\kappa(\lambda)}}$. 
\section{Convexity and Concavity}\label{sec: convex concave}
In this section we analyze the upper bound concavity and lower bound convexity of computational entanglement measures in the sense of definitions given in the Subsection \ref{subsec:entanglement properties}. 

In Subsection \ref{subsec: convexity concavity comp one shot}, we establish that the computational one-shot distillable entanglement is lower bound convex and that the computational one-shot entanglement cost is upper bound concave. These properties extend naturally to the uniform setting (Subsection \ref{subsec: convexity and concavity for uni comp}).

\subsection{Computational One-Shot Case}\label{subsec: convexity concavity comp one shot}
We begin by establishing structural properties of the computational one-shot entanglement measures. Specifically, we show that the computational one-shot distillable entanglement is \emph{lower bound convex}, while the computational one-shot entanglement cost is \emph{upper bound concave}. 

\begin{theorem}\label{prop: convexity one shot dist}
Let $\{\rho^\lambda_x\}_{x \in X, \lambda \in \mathbb{N}^+}$ be a family of bipartite quantum states, where $X$ is a finite set. Let $\sigma^\lambda := \sum_{x \in X} p(x) \rho^\lambda_x$ for each $\lambda$, where $p(x)$ is a probability distribution. Then,
\[
\hat{E}_D^{\varepsilon}(\{\sigma^\lambda\}) \geq m(\lambda) \implies \sum_{x \in X} p(x) \hat{E}_D^{\varepsilon_x}(\{\rho^\lambda_x\}) \geq m(\lambda),
\]
where \(\sum_{x \in X} p(x) \varepsilon_x(\lambda) = \varepsilon(\lambda)\) and \(\varepsilon_x(\lambda) := 1 - F(\hat\Gamma^\lambda(\rho^\lambda_x), \Phi^{\otimes m(\lambda)})\).
\end{theorem}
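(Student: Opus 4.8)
The plan is to exploit the single structural feature of the computational setting that makes the statement work: a valid lower bound on $\hat E_D^{\varepsilon}$ is certified by \emph{one} efficient family of LOCC channels fixed independently of the input state. Hence the very protocol that distills $m(\lambda)$ EPR pairs from the mixture $\sigma^\lambda$ distills $m(\lambda)$ EPR pairs from every component $\rho^\lambda_x$ as well, only possibly with a different error, and those component errors average back to the original one.

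First I would invoke Definition \ref{def: comp distil vidick} on the hypothesis $\hat E_D^{\varepsilon}(\{\sigma^\lambda\}) \geq m(\lambda)$: there is an efficient family $\{\hat\Gamma^\lambda\}$ producing $2m(\lambda)$-qubit states with $p_{\mathrm{err}}(\hat\Gamma^\lambda,\sigma^\lambda) = 1 - F(\hat\Gamma^\lambda(\sigma^\lambda),\Phi^{\otimes m(\lambda)}) \leq \varepsilon(\lambda)$; replacing $\varepsilon(\lambda)$ by the error actually attained by this family (which only shrinks it, so the hypothesis is untouched) we may assume $\varepsilon(\lambda) = 1 - F(\hat\Gamma^\lambda(\sigma^\lambda),\Phi^{\otimes m(\lambda)})$. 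Next I would use linearity of the channel $\hat\Gamma^\lambda$ together with purity of $\Phi^{\otimes m(\lambda)}$ and the identity $F(\tau,\ketbra{\psi}{\psi}) = \bra{\psi}\tau\ket{\psi}$, which makes the fidelity against $\Phi^{\otimes m(\lambda)}$ an affine functional of its first argument, to obtain
\[
F\!\left(\hat\Gamma^\lambda(\sigma^\lambda),\Phi^{\otimes m(\lambda)}\right) = \sum_{x\in X} p(x)\, F\!\left(\hat\Gamma^\lambda(\rho^\lambda_x),\Phi^{\otimes m(\lambda)}\right).
\]
Writing $\varepsilon_x(\lambda) := 1 - F(\hat\Gamma^\lambda(\rho^\lambda_x),\Phi^{\otimes m(\lambda)}) \in [0,1]$, this is exactly $\sum_{x\in X} p(x)\varepsilon_x(\lambda) = \varepsilon(\lambda)$, the stated relation between the errors.

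Then, for each fixed $x \in X$, the same efficient family $\{\hat\Gamma^\lambda\}$ — which is efficient by assumption and can be reused verbatim at no cost — witnesses, again via Definition \ref{def: comp distil vidick}, that $m(\lambda)$ is a valid lower bound on $\hat E_D^{\varepsilon_x}(\{\rho^\lambda_x\})$, since $p_{\mathrm{err}}(\hat\Gamma^\lambda,\rho^\lambda_x) = \varepsilon_x(\lambda)$ by construction. Thus $\hat E_D^{\varepsilon_x}(\{\rho^\lambda_x\}) \geq m(\lambda)$ for every $x$, and multiplying by $p(x) \geq 0$ and summing over the finite set $X$ gives $\sum_{x\in X} p(x)\hat E_D^{\varepsilon_x}(\{\rho^\lambda_x\}) \geq m(\lambda)\sum_{x\in X} p(x) = m(\lambda)$, which is the claim.

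I do not expect a real obstacle: the argument is bookkeeping once the right viewpoint is fixed. The only delicate point — and the genuine content of the statement — is that a valid lower bound quantifies over \emph{input-independent} efficient LOCC families, so the certifying protocol transfers to each $\rho^\lambda_x$ with efficiency preserved and with errors that automatically lie in $[0,1]$ and average to $\varepsilon(\lambda)$. Had the definition instead allowed the protocol to depend on the input, one would need a genuine mixing step, for which concavity of the fidelity (Proposition \ref{prop: concav fidelity}) is the natural and sufficient tool; here it enters only implicitly, through the linearity of fidelity against a pure target.
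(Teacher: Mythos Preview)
Your proposal is correct and follows essentially the same route as the paper: you use the certifying efficient family $\{\hat\Gamma^\lambda\}$ for $\sigma^\lambda$, exploit linearity of the channel together with the affinity of $F(\cdot,\Phi^{\otimes m(\lambda)})$ against a pure target to split the error as $\sum_x p(x)\varepsilon_x(\lambda)=\varepsilon(\lambda)$, and then reuse the same efficient family to witness $\hat E_D^{\varepsilon_x}(\{\rho^\lambda_x\})\geq m(\lambda)$ for every $x$ before averaging. The paper's proof is identical in structure and content; your additional remark about replacing $\varepsilon(\lambda)$ by the attained error and the closing discussion of why only linearity (not Proposition~\ref{prop: concav fidelity}) is needed are clarifications rather than deviations.
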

\begin{proof}
Let \(\hat\Gamma^\lambda\) be an efficient LOCC channel such that \(m(\lambda)\) is a valid lower bound on \(\hat{E}_D^{\varepsilon}(\{\sigma^\lambda\})\). Since $\Phi^{\otimes m(\lambda)}$ is pure, by linearity of quantum operations and fidelity in Remark \ref{remark: fidelity of expectation}, we obtain
\begin{align*}
\varepsilon(\lambda)
&= 1 - \sum_{x \in X} p(x) F(\hat\Gamma^\lambda(\rho^\lambda_x), \Phi^{\otimes m(\lambda)}) \\
&= \sum_{x \in X} p(x) \left(1 - F(\hat\Gamma^\lambda(\rho^\lambda_x), \Phi^{\otimes m(\lambda)})\right).
\end{align*}
Let \(\varepsilon_x(\lambda) := 1 - F(\hat\Gamma^\lambda(\rho^\lambda_x), \Phi^{\otimes m(\lambda)})\). Since \(\hat\Gamma^\lambda\) is also a valid efficient map for each \(\rho^\lambda_x\), we conclude that
\[
\hat{E}_D^{\varepsilon_x}(\{\rho^\lambda_x\}) \geq m(\lambda),
\]
for all \(x \in X\), and therefore
\[
\sum_{x \in X} p(x) \hat{E}_D^{\varepsilon_x}(\{\rho^\lambda_x\}) \geq m(\lambda).
\]
\end{proof}

\begin{theorem}\label{prop: concavity one shot cost}
Let $\{\rho^\lambda_x\}_{x \in X, \lambda \in \mathbb{N}^+}$ be a family of bipartite quantum states, where $X$ is a finite set, and let $\sigma^\lambda := \sum_{x \in X} p(x) \rho^\lambda_x$ for each $\lambda$, with $p(x)$ a probability distribution. Then,
\[
\hat{E}_C^{\varepsilon}(\{\sigma^\lambda\}) \leq n(\lambda) \implies \sum_{x \in X} p(x) \hat{E}_C^{\varepsilon_x}(\{\rho^\lambda_x\}) \leq n(\lambda),
\]
where \(\sum_{x \in X} p(x) \varepsilon_x(\lambda) \geq \varepsilon(\lambda)\) and \(\varepsilon_x(\lambda) := 1 - F(\hat\Gamma^\lambda(\Phi^{\otimes n(\lambda)}), \rho^\lambda_x)\).
\end{theorem}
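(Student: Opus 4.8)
The plan is to mirror the argument of Theorem \ref{prop: convexity one shot dist}, but with the roles of the pure reference state and the target state interchanged: now the EPR input $\Phi^{\otimes n(\lambda)}$ is the fixed \emph{pure} argument fed into the dilution channel, while the target $\sigma^\lambda$ is the convex mixture. Consequently, the linearity-of-fidelity step used in the distillation case must be replaced by concavity of the fidelity in one of its arguments (Proposition \ref{prop: concav fidelity}).

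First I would take an efficient LOCC channel $\hat\Gamma^\lambda$ witnessing that $n(\lambda)$ is a valid upper bound on $\hat E_C^{\varepsilon}(\{\sigma^\lambda\})$, so that $p_{\text{err}}(\hat\Gamma^\lambda,\sigma^\lambda) = 1 - F(\hat\Gamma^\lambda(\Phi^{\otimes n(\lambda)}),\sigma^\lambda)\le \varepsilon(\lambda)$. Writing $\sigma^\lambda = \sum_{x\in X}p(x)\rho^\lambda_x$ and applying Proposition \ref{prop: concav fidelity} with the first argument $\hat\Gamma^\lambda(\Phi^{\otimes n(\lambda)})$ held fixed gives
\[
F\big(\hat\Gamma^\lambda(\Phi^{\otimes n(\lambda)}),\sigma^\lambda\big)\ \ge\ \sum_{x\in X}p(x)\,F\big(\hat\Gamma^\lambda(\Phi^{\otimes n(\lambda)}),\rho^\lambda_x\big).
\]
Subtracting from $1$ and setting $\varepsilon_x(\lambda):=1-F(\hat\Gamma^\lambda(\Phi^{\otimes n(\lambda)}),\rho^\lambda_x)$ yields $\varepsilon(\lambda)\le \sum_{x\in X}p(x)\varepsilon_x(\lambda)$, which is exactly the error relation asserted in the statement.

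Next I would observe that the very same efficient channel $\hat\Gamma^\lambda$ prepares each individual $\rho^\lambda_x$ from $\Phi^{\otimes n(\lambda)}$ with error at most $\varepsilon_x(\lambda)$; efficiency is inherited since it is a single family of circuits independent of $x$ and $X$ is finite. Hence $n(\lambda)$ is a valid upper bound on $\hat E_C^{\varepsilon_x}(\{\rho^\lambda_x\})$ for every $x\in X$, i.e. $\hat E_C^{\varepsilon_x}(\{\rho^\lambda_x\})\le n(\lambda)$. Averaging over $x$ with weights $p(x)$ then gives $\sum_{x\in X}p(x)\hat E_C^{\varepsilon_x}(\{\rho^\lambda_x\})\le n(\lambda)$, completing the argument.

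I do not anticipate a genuine obstacle: the only subtlety is that, unlike in the distillation case, the target rather than the reference is mixed, so one must invoke concavity of the fidelity (Proposition \ref{prop: concav fidelity}) in place of exact linearity, and this is precisely why the error identity $\sum_{x}p(x)\varepsilon_x=\varepsilon$ of Theorem \ref{prop: convexity one shot dist} weakens here to the inequality $\sum_{x}p(x)\varepsilon_x\ge\varepsilon$ recorded in the statement.
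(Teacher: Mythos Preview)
Your proposal is correct and matches the paper's proof essentially line for line: take the witnessing efficient LOCC channel $\hat\Gamma^\lambda$, apply concavity of the fidelity (Proposition~\ref{prop: concav fidelity}) in the mixed target argument to obtain $\varepsilon(\lambda)\le\sum_x p(x)\varepsilon_x(\lambda)$, observe that the same channel certifies $\hat E_C^{\varepsilon_x}(\{\rho^\lambda_x\})\le n(\lambda)$ for each $x$, and average. Your closing remark about why the equality of Theorem~\ref{prop: convexity one shot dist} weakens to an inequality here is exactly the distinction the paper's argument reflects.
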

\begin{proof}
Let $\hat{\Gamma}^\lambda$ be an efficient LOCC channel such that $n(\lambda)$ is a valid upper bound on $\hat{E}_C^\varepsilon(\{\sigma^\lambda\})$. By concavity of fidelity (Proposition \ref{prop: concav fidelity}),
\begin{align*}
\varepsilon(\lambda)
&= 1 - F\left(\hat{\Gamma}^\lambda(\Phi^{\otimes n(\lambda)}), \sum_{x \in X} p(x) \rho^\lambda_x\right) \\
&\leq \sum_{x \in X} p(x) \left(1 - F(\hat{\Gamma}^\lambda(\Phi^{\otimes n(\lambda)}), \rho^\lambda_x)\right).
\end{align*}
Let $\varepsilon_x(\lambda) := 1 - F(\hat{\Gamma}^\lambda(\Phi^{\otimes n(\lambda)}), \rho^\lambda_x)$. Since $\hat{\Gamma}^\lambda$ is a valid efficient LOCC channel for all $\rho^\lambda_x$, it follows that
\[
\hat{E}_C^{\varepsilon_x}(\{\rho^\lambda_x\}) \leq n(\lambda),
\]
for all $x \in X$, and therefore
\[
\sum_{x \in X} p(x) \hat{E}_C^{\varepsilon_x}(\{\rho^\lambda_x\}) \leq n(\lambda).
\]
\end{proof}

\subsection{Uniform Computational Case}\label{subsec: convexity and concavity for uni comp}
The lower bound convexity and the upper bound concavity of the computational one-shot distillable entanglement and cost can be extended to the uniform setting. The proofs are direct corollaries of Theorem \ref{prop: convexity one shot dist} and Theorem \ref{prop: concavity one shot cost}. We omit them as the proofs follow by exactly the same arguments.

\begin{corollary}\label{th: conv uniform distill}
    Let $\{k,\{\rho^{k}_x\}_{x\in X}\}_{k\in\{0,1\}^{\kappa(\lambda)}}$ families of bipartite quantum states encoded from a key $k\in\{0,1\}^{\kappa(\lambda)}$.
    Let $\sigma^k:=\sum_{x\in X}p(x)\rho^{k}_x$ for each $k$, where $p(x)$ is a probability distribution.
    Then,
    \begin{equation*}
        \hat E^{\varepsilon}_D(\sigma^k)\geq m(\lambda) \implies \sum_{x\in X}p(x)\hat E_D^{\varepsilon_x}(\rho^k_x)\geq m(\lambda),
    \end{equation*}
        where $\varepsilon(\lambda)$ is an error of distillation of the family $\{\sigma^k\}_{k\in\{0,1\}^{\kappa(\lambda)}}$ and 
        \begin{equation*}
        \sum_{x\in X}\,p(x)\epsilon_x(\lambda) = \epsilon(\lambda)\quad\text{with}\quad \epsilon_x(\lambda):=1-F(\hat{\Gamma}^{\lambda}((k,\rho^k_x)),\Phi^{\otimes m(\lambda)}).
    \end{equation*}
\end{corollary}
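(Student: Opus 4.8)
The plan is to reuse the argument of Theorem~\ref{prop: convexity one shot dist} essentially verbatim, the only modification being that every state now travels together with its key register, as prescribed by Definition~\ref{def: uniform computational distillable entanglement}. So suppose $m(\lambda)$ is a valid lower bound on $\hat E^{\varepsilon}_D(\{\sigma^k\})$, and let $\{\hat\Gamma^\lambda\}$ be the efficient family of LOCC channels witnessing this: for every $\lambda\in\mathbb N^+$ and every key $k\in\{0,1\}^{\kappa(\lambda)}$ we have $p_{\text{err}}(\hat\Gamma^\lambda,k,\sigma^k)=1-F(\hat\Gamma^\lambda(k,\sigma^k),\Phi^{\otimes m(\lambda)})\leq\varepsilon(\lambda)$.

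First I would fix a key $k$ and exploit linearity exactly as in the one-shot proof. Since $\sigma^k=\sum_{x\in X}p(x)\rho^k_x$ and $\hat\Gamma^\lambda$ acts linearly, and since the target $\Phi^{\otimes m(\lambda)}$ is pure so that $F(\cdot,\Phi^{\otimes m(\lambda)})=\bra{\phi^{\otimes m(\lambda)}}\cdot\ket{\phi^{\otimes m(\lambda)}}$ by Remark~\ref{remark: fidelity of expectation}, one obtains
\[
\varepsilon(\lambda)=1-\sum_{x\in X}p(x)\,F\big(\hat\Gamma^\lambda(k,\rho^k_x),\Phi^{\otimes m(\lambda)}\big)=\sum_{x\in X}p(x)\,\varepsilon_x(\lambda),
\]
with $\varepsilon_x(\lambda):=1-F(\hat\Gamma^\lambda(k,\rho^k_x),\Phi^{\otimes m(\lambda)})$. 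Then, since $\{\hat\Gamma^\lambda\}$ is the very same efficient family, for each fixed $x\in X$ it is also a legitimate distillation protocol for the keyed family $\{k,\rho^k_x\}$, extracting $m(\lambda)$ EPR pairs with error at most $\varepsilon_x(\lambda)$; hence $\hat E^{\varepsilon_x}_D(\{\rho^k_x\})\geq m(\lambda)$ for every $x$, and averaging over $x$ against the weights $p(x)$ yields $\sum_{x\in X}p(x)\hat E^{\varepsilon_x}_D(\{\rho^k_x\})\geq m(\lambda)$, which is the claim.

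The only point that differs from the one-shot case — and hence the main (mild) obstacle — is the uniformity bookkeeping over keys: the error bound must be honored by the single family $\{\hat\Gamma^\lambda\}$ for all $k$ at once, so one should read $\varepsilon_x(\lambda)$ as the error attained by $\hat\Gamma^\lambda$ on $\{k,\rho^k_x\}$ and check that the identity $\sum_{x\in X}p(x)\varepsilon_x(\lambda)=\varepsilon(\lambda)$ is consistent with the uniform constraint. This is immediate because the per-key decomposition displayed above holds with the same distribution $p(x)$ for every $k$, so no new idea beyond Theorem~\ref{prop: convexity one shot dist} is required — which is precisely why the paper states the proof can be omitted.
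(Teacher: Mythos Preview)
Your proposal is correct and follows exactly the approach the paper indicates: the paper explicitly omits the proof, stating it is a direct corollary of Theorem~\ref{prop: convexity one shot dist} obtained by the same argument, and you have faithfully reproduced that argument with the key register carried along. Your remark on the uniformity bookkeeping (that the per-key decomposition holds with the same distribution $p(x)$ for every $k$) is the only additional observation needed, and it is handled correctly.
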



    
 

\begin{corollary}\label{th: comp cost concav uniform}
Let $\{k,\{\rho^{k}_x\}_{x\in X}\}_{k\in\{0,1\}^{\kappa(\lambda)}}$ families of bipartite quantum states encoded from a key $k\in\{0,1\}^{\kappa(\lambda)}$.
Let $\sigma^k:=\sum_{x\in X}p(x)\rho^{k}_x$ for each $k$, where $p(x)$ is a probability distribution. Then,
\begin{equation*}
\hat E^{\varepsilon}_C(\sigma^k)\leq n(\lambda) \implies \sum_{x\in X}p(x)\hat E_C^{\varepsilon_x}(\rho^k_x)\leq n(\lambda),
\end{equation*}
where $\varepsilon(\lambda)$ is an error of dilution of the family $\{\sigma^k\}_{k\in\{0,1\}^{\kappa(\lambda)}}$ and
\begin{equation*}
    \sum_{x\in X}\,p(x)\epsilon_x(\lambda)\geq \epsilon(\lambda)\quad\text{with}\quad \epsilon_x:=1-F\Big(\hat{\Gamma}^{\lambda}(k,\Phi^{\otimes n(\lambda)}),\,\rho^{k}_x\Big).  
\end{equation*}
\end{corollary}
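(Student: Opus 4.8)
The plan is to repeat the argument of Theorem~\ref{prop: concavity one shot cost} almost verbatim, with the single twist that the efficient dilution channel now carries an extra key register and that the error guarantee must hold \emph{uniformly} over $k\in\{0,1\}^{\kappa(\lambda)}$. First I would unpack the hypothesis $\hat E_C^{\varepsilon}(\{\sigma^k\})\le n(\lambda)$: by Definition~\ref{def: uniform comp cost vidick} it produces an efficient family $\{\hat\Gamma^\lambda\}$ of LOCC channels, each acting on the key registers $A',B'$ together with $n(\lambda)$ input EPR pairs, such that for every $\lambda$ and every $k$,
\[
1-F\bigl(\hat\Gamma^\lambda(k,\Phi^{\otimes n(\lambda)}),\sigma^k\bigr)\le\varepsilon(\lambda).
\]
The key observation — exactly as in the one-shot case — is that this same family $\{\hat\Gamma^\lambda\}$ is already a legitimate efficient dilution protocol for each component family $\{k,\rho_x^k\}_{k}$: no new circuit is built, so efficiency (polynomial circuit size, Definition~\ref{def: definition efficient LOCC}) is preserved for free.

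Next I would fix $\lambda$ and $k$, substitute $\sigma^k=\sum_{x}p(x)\rho_x^k$, and apply concavity of the fidelity in its second argument (Proposition~\ref{prop: concav fidelity}) with $\rho:=\hat\Gamma^\lambda(k,\Phi^{\otimes n(\lambda)})$ held fixed:
\[
F\Bigl(\hat\Gamma^\lambda(k,\Phi^{\otimes n(\lambda)}),\sigma^k\Bigr)\ge\sum_{x\in X}p(x)\,F\bigl(\hat\Gamma^\lambda(k,\Phi^{\otimes n(\lambda)}),\rho_x^k\bigr).
\]
Setting $\epsilon_x(\lambda):=1-F(\hat\Gamma^\lambda(k,\Phi^{\otimes n(\lambda)}),\rho_x^k)$ (taking the worst case over $k$ to obtain a genuine uniform error for the family $\{k,\rho^k_x\}_{k}$), this rearranges to $\sum_{x}p(x)\epsilon_x(\lambda)\ge\varepsilon(\lambda)$. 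Since $\hat\Gamma^\lambda$ dilutes each $\rho_x^k$ from $n(\lambda)$ EPR pairs with error at most $\epsilon_x(\lambda)$ uniformly in $k$, Definition~\ref{def: uniform comp cost vidick} yields $\hat E_C^{\epsilon_x}(\{\rho_x^k\})\le n(\lambda)$ for every $x\in X$; averaging over $x$ with weights $p(x)$ gives the claim.

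I do not expect a real obstacle here, which is precisely why the paper records it as a corollary rather than a theorem. The only point that requires a line of care is the bookkeeping of the key dependence: $\varepsilon(\lambda)$, being the uniform (supremum-over-$k$) dilution error of $\{\sigma^k\}$, is still dominated by $\sum_x p(x)\epsilon_x(\lambda)$ because the supremum of a convex average is at most the convex average of the suprema, i.e. $\sup_k\sum_x p(x)a_{x,k}\le\sum_x p(x)\sup_k a_{x,k}$. Everything else — concavity of fidelity, the fact that purity of the target is \emph{not} needed since the mixture sits in the second slot, and the "same channel stays efficient" step — is identical to the one-shot proof of Theorem~\ref{prop: concavity one shot cost}.
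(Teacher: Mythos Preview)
Your proposal is correct and follows exactly the approach the paper intends: the paper explicitly omits the proof, stating that it ``follows by exactly the same arguments'' as Theorem~\ref{prop: concavity one shot cost}, and your sketch does precisely this—reusing the efficient dilution channel for each component family, invoking concavity of fidelity (Proposition~\ref{prop: concav fidelity}), and adding the one extra line of bookkeeping (the supremum over $k$) needed for the uniform error guarantee. The observation that $\sup_k\sum_x p(x)a_{x,k}\le\sum_x p(x)\sup_k a_{x,k}$ is exactly the right way to make the key-dependence precise, and nothing beyond the one-shot argument is required.
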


\section{Superadditivity and Subadditivity}\label{sec: subb and super distillation and cost}
In this section, we show the behavior of computational one-shot distillable entanglement and entanglement cost with respect to the tensor product. In Subsection \ref{subsec: subb and supp additive for one shot setting}, we prove that each computational distillable entanglement and entanglement cost are respectively lower bound superadditive and upper bound subadditive in the one-shot setting. In Subsection \ref{subsec: sup and sub for unif settign}, as a direct consequence, we show that the same properties hold in the uniform scenario.

\subsection{Computational One-Shot Case}\label{subsec: subb and supp additive for one shot setting}
In this subsection, we analyze the behavior of the computational one-shot distillable entanglement and entanglement cost with respect to the tensor product.

In the following theorem, we prove that the one-shot computational distillable entanglement is lower bound superadditive.
\begin{theorem}\label{prop: supperaditive one shot dist}
Let \( \{\rho_1^\lambda\}_{\lambda} \) and \( \{\rho_2^\lambda\}_{\lambda} \) be families of bipartite quantum states. Suppose
\( \hat{E}_D^{\varepsilon_i}(\{\rho_i^\lambda\}) \geq m_i(\lambda) \) for \( i \in \{1,2\} \). Then:
\[
\hat{E}_D^{\varepsilon}(\{\rho_1^\lambda \otimes \rho_2^\lambda\}) \geq m_1(\lambda) + m_2(\lambda),
\]
where \( \varepsilon(\lambda) \leq \varepsilon_1(\lambda) + \varepsilon_2(\lambda) \).
\end{theorem}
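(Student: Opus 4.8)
The plan is to build, from the two witnessing families of efficient protocols, a single efficient family of LOCC channels for the product family $\{\rho_1^\lambda\otimes\rho_2^\lambda\}$, and then to control its error via multiplicativity of the fidelity. Concretely, for each $i\in\{1,2\}$ and each $\lambda$, Definition \ref{def: comp distil vidick} supplies an efficient LOCC channel $\hat\Gamma_i^\lambda$, with circuit description of size $c_i(\lambda)\in\operatorname{Poly}(\lambda)$, producing $2m_i(\lambda)$-qubit states and satisfying $1-F(\hat\Gamma_i^\lambda(\rho_i^\lambda),\Phi^{\otimes m_i(\lambda)})\le\varepsilon_i(\lambda)$. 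I would then regard $\rho_1^\lambda\otimes\rho_2^\lambda$ as a bipartite state across the cut $(A_1A_2):(B_1B_2)$ — Alice holding both of her halves, Bob both of his — and set $\hat\Gamma^\lambda:=\hat\Gamma_1^\lambda\otimes\hat\Gamma_2^\lambda$, acting on the two pairs of subsystems independently.

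The two things to verify are that $\{\hat\Gamma^\lambda\}$ is a legitimate efficient family, and that its error is bounded as claimed. For the first: a tensor product of LOCC channels is again an LOCC channel, and a circuit description of $\hat\Gamma^\lambda$ is obtained by concatenating, on disjoint registers, the round structures of $\hat\Gamma_1^\lambda$ and $\hat\Gamma_2^\lambda$ in the sense of Definition \ref{def: definition of circuit representation of LOCC}; its total circuit size is at most $c_1(\lambda)+c_2(\lambda)\in\operatorname{Poly}(\lambda)$, so by Definition \ref{def: definition efficient LOCC} the family is efficient, and it outputs $2\big(m_1(\lambda)+m_2(\lambda)\big)$-qubit states. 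For the error: since $\hat\Gamma^\lambda(\rho_1^\lambda\otimes\rho_2^\lambda)=\hat\Gamma_1^\lambda(\rho_1^\lambda)\otimes\hat\Gamma_2^\lambda(\rho_2^\lambda)$ and $\Phi^{\otimes(m_1(\lambda)+m_2(\lambda))}=\Phi^{\otimes m_1(\lambda)}\otimes\Phi^{\otimes m_2(\lambda)}$, Proposition \ref{prop: fidelity factorisation} gives
\[
F\big(\hat\Gamma^\lambda(\rho_1^\lambda\otimes\rho_2^\lambda),\,\Phi^{\otimes(m_1(\lambda)+m_2(\lambda))}\big)=F_1(\lambda)\,F_2(\lambda),\qquad F_i(\lambda):=F\big(\hat\Gamma_i^\lambda(\rho_i^\lambda),\Phi^{\otimes m_i(\lambda)}\big)\ge 1-\varepsilon_i(\lambda).
\]
As $0\le F_i(\lambda)\le 1$, this yields $F_1(\lambda)F_2(\lambda)\ge(1-\varepsilon_1(\lambda))(1-\varepsilon_2(\lambda))\ge 1-\varepsilon_1(\lambda)-\varepsilon_2(\lambda)$, so defining $\varepsilon(\lambda):=p_{\mathrm{err}}(\hat\Gamma^\lambda,\rho_1^\lambda\otimes\rho_2^\lambda)$ we get $\varepsilon(\lambda)\le\varepsilon_1(\lambda)+\varepsilon_2(\lambda)$, and Definition \ref{def: comp distil vidick} then certifies that $m_1(\lambda)+m_2(\lambda)$ is a valid lower bound on $\hat E_D^{\varepsilon}(\{\rho_1^\lambda\otimes\rho_2^\lambda\})$.

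I expect the only genuinely non-routine point to be the bookkeeping in the first step — checking that the parallel composition $\hat\Gamma_1^\lambda\otimes\hat\Gamma_2^\lambda$ really admits a circuit description of the precise form demanded by Definition \ref{def: definition of circuit representation of LOCC} (handling the shared classical communication register, the round alternation between Alice and Bob, and the reindexing of output qubits) and that summing two polynomial circuit sizes stays polynomial. The fidelity estimate itself is immediate from Proposition \ref{prop: fidelity factorisation} together with the elementary inequality $1-(1-a)(1-b)\le a+b$ for $a,b\in[0,1]$, and requires no further work.
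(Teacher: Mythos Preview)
Your proposal is correct and follows essentially the same approach as the paper: form the tensor product $\hat\Gamma_1^\lambda\otimes\hat\Gamma_2^\lambda$ of the two witnessing efficient LOCC families and bound the resulting error via $1-(1-\varepsilon_1)(1-\varepsilon_2)\le\varepsilon_1+\varepsilon_2$. If anything, you are more careful than the paper in justifying that the tensor product remains an \emph{efficient} LOCC family in the sense of Definitions~\ref{def: definition of circuit representation of LOCC} and~\ref{def: definition efficient LOCC}.
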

\begin{proof}
Let $\{\hat{\Gamma}_i^{\lambda}\}_{i\in\{1,2\}}$ be the efficient family of LOCC channels achieving $\hat{E}^{\epsilon_i}_D(\{\rho_i^{\lambda}\})\geq m_i(\lambda)$. Now consider the efficient family of LOCC channels 
$\{\hat{\Gamma}^{\lambda}_{12}\}$ defined as: 
\begin{equation*}
\hat{\Gamma}^{\lambda}_{12}:=\hat{\Gamma}^{\lambda}_1\otimes \hat{\Gamma}^{\lambda}_2,
\end{equation*}
allowing to extract a certain amount of EPR pairs $\hat{E}^{\epsilon}_D(\rho_1^{\lambda}\otimes \rho_2^{\lambda})$ with probability error given by:
\begin{equation*}
    p_{\text{err}}(\hat{\Gamma}_{12}^{\lambda},\rho_1^{\lambda}\otimes \rho_2^{\lambda})=1-\bra{\phi^{\otimes m(\lambda)}}\hat{\Gamma}_{12}^{\lambda}(\rho^{\lambda}_1\otimes \rho^{\lambda}_2)\ket{\phi^{\otimes m(\lambda)}}=\epsilon(\lambda),
\end{equation*}
where $m(\lambda) \equiv m_1(\lambda) + m_2(\lambda)$.

By linearity, we obtain that: 
\begin{equation}
\epsilon(\lambda)=1-\bra{\phi^{\otimes m(\lambda)}}\hat{\Gamma}_{12}^{\lambda}(\rho^{\lambda}_1\otimes \rho^{\lambda}_2)\ket{\phi^{\otimes m(\lambda)}}\leq 1-(1-\epsilon_1(\lambda))(1-\epsilon_2(\lambda))\leq\epsilon_1(\lambda)+\epsilon_2(\lambda).
\end{equation}
where the upper bound holds from the assumption and the particular choice of the efficient LOCC channels $\hat{\Gamma}^{\lambda}_{12}=\hat{\Gamma}_1^{\lambda}\otimes\hat{\Gamma}^{\lambda}_2$.
\end{proof}

In the following, we  show the upper bound subadditivity property of the computational one-shot entanglement cost. 
\begin{theorem}\label{prop: subbativity of one shot cost}
Let $\left\{\rho_1^{\lambda}\right\}_{\lambda\in\mathbb N^+}$ and $\left\{\rho_2^{\lambda}\right\}_{\lambda\in\mathbb N^+}$ be families of bipartite quantum states. Suppose that $\hat E^{\varepsilon_i}_C(\rho^\lambda_i)\leq n_i(\lambda)$ for $i\in\{1,2\}$. Then:
\begin{equation*}
   \hat{E}_C^{\epsilon}(\{\rho_1^{\lambda}\otimes\rho_2^{\lambda}\})\leq n_1(\lambda) + n_2(\lambda),
\end{equation*}
where  $\epsilon(\lambda)\leq \epsilon_1(\lambda)+\epsilon_2(\lambda)$.
\end{theorem}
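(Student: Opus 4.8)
The plan is to mimic the superadditivity argument for the distillable entanglement (Theorem \ref{prop: supperaditive one shot dist}), but now working with the dilution direction and the fidelity factorization property instead of linearity in a pure state. First I would take the efficient LOCC channels $\{\hat{\Gamma}_i^\lambda\}_{i\in\{1,2\}}$ witnessing $\hat{E}_C^{\varepsilon_i}(\{\rho_i^\lambda\}) \leq n_i(\lambda)$; by definition each $\hat{\Gamma}_i^\lambda$ maps $\Phi^{\otimes n_i(\lambda)}$ to a state with $F(\hat{\Gamma}_i^\lambda(\Phi^{\otimes n_i(\lambda)}), \rho_i^\lambda) \geq 1-\varepsilon_i(\lambda)$. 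Then define the product channel $\hat{\Gamma}_{12}^\lambda := \hat{\Gamma}_1^\lambda \otimes \hat{\Gamma}_2^\lambda$, which is again an LOCC channel (tensor products of LOCC maps are LOCC) and is efficient since the sum of two polynomial-size circuit descriptions is still polynomial. This channel takes the input $\Phi^{\otimes n_1(\lambda)} \otimes \Phi^{\otimes n_2(\lambda)} = \Phi^{\otimes (n_1(\lambda)+n_2(\lambda))}$, so $n(\lambda) := n_1(\lambda) + n_2(\lambda)$ is the relevant number of EPR pairs.

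Next I would bound the error. Using Proposition \ref{prop: fidelity factorisation} (factorization of fidelity on tensor products),
\begin{align*}
F\left(\hat{\Gamma}_{12}^\lambda(\Phi^{\otimes n(\lambda)}), \rho_1^\lambda \otimes \rho_2^\lambda\right) &= F\left(\hat{\Gamma}_1^\lambda(\Phi^{\otimes n_1(\lambda)}) \otimes \hat{\Gamma}_2^\lambda(\Phi^{\otimes n_2(\lambda)}), \rho_1^\lambda \otimes \rho_2^\lambda\right) \\
&= F\left(\hat{\Gamma}_1^\lambda(\Phi^{\otimes n_1(\lambda)}), \rho_1^\lambda\right) \cdot F\left(\hat{\Gamma}_2^\lambda(\Phi^{\otimes n_2(\lambda)}), \rho_2^\lambda\right) \\
&\geq (1-\varepsilon_1(\lambda))(1-\varepsilon_2(\lambda)).
\end{align*}
Hence $\varepsilon(\lambda) := 1 - F(\hat{\Gamma}_{12}^\lambda(\Phi^{\otimes n(\lambda)}), \rho_1^\lambda \otimes \rho_2^\lambda) \leq 1 - (1-\varepsilon_1(\lambda))(1-\varepsilon_2(\lambda)) \leq \varepsilon_1(\lambda) + \varepsilon_2(\lambda)$, which gives exactly the claimed bound and shows $n_1(\lambda)+n_2(\lambda)$ is a valid upper bound on $\hat{E}_C^{\varepsilon}(\{\rho_1^\lambda \otimes \rho_2^\lambda\})$.

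The only genuine points requiring a word of justification are (i) that $\hat{\Gamma}_1^\lambda \otimes \hat{\Gamma}_2^\lambda$ is still an LOCC channel with respect to the bipartition $\tilde{A}_1\tilde{A}_2 : \tilde{B}_1\tilde{B}_2$ — this follows because one can simply run the two round-based protocols in sequence (or in parallel), each using its own communication register, and the composition of finitely many LOCC rounds is LOCC — and (ii) that the resulting circuit is efficient, i.e. has size $c_1(\lambda) + c_2(\lambda) \in \operatorname{Poly}(\lambda)$. Neither is a real obstacle; the main (mild) subtlety, as in the distillable case, is just bookkeeping the registers so that the product input state is literally $\Phi^{\otimes(n_1+n_2)}$ and the output bipartition matches that of $\rho_1^\lambda \otimes \rho_2^\lambda$. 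Everything else is the fidelity factorization of Proposition \ref{prop: fidelity factorisation} plus the elementary inequality $1-(1-a)(1-b) \leq a+b$ for $a,b \in [0,1]$.
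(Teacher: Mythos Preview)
Your proposal is correct and follows essentially the same approach as the paper: define $\hat{\Gamma}_{12}^\lambda := \hat{\Gamma}_1^\lambda \otimes \hat{\Gamma}_2^\lambda$, use the fidelity factorization of Proposition~\ref{prop: fidelity factorisation}, and conclude via $1-(1-\varepsilon_1)(1-\varepsilon_2)\leq \varepsilon_1+\varepsilon_2$. Your additional remarks on why the tensor product is still an efficient LOCC channel are more explicit than the paper's proof, but the argument is otherwise identical.
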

\begin{proof}
Consider the efficient family of LOCC channels $\{\hat{\Gamma}_{12}^{\lambda}\}$ defined as: 
\begin{equation*}
    \forall\lambda\in\mathbb N^+,\quad\hat{\Gamma}^{\lambda}_{12}:=\hat{\Gamma}^{\lambda}_1\otimes \hat{\Gamma}^{\lambda}_2,
\end{equation*}

allowing to achieve the one shot computational cost $\hat{E}_C^{\epsilon}(\rho_1^{\lambda}\otimes\rho_2^{\lambda})$ with probability error:
    \begin{equation*}
         p_{\text{err}}(\hat{\Gamma}_{12}^{\lambda},\rho_1^{\lambda}\otimes \rho_1^{\lambda})=1-F\left(\hat{\Gamma}^{\lambda}_{12}(\Phi_1^{\otimes n_1(\lambda)}\otimes \Phi_2^{\otimes n_2(\lambda)}),\rho_1^{\lambda}\otimes \rho_2^{\lambda}\right)=\epsilon(\lambda).
    \end{equation*}
By the tensor product property of the fidelity (Proposition \ref{prop: fidelity factorisation}), the following holds: 
\begin{equation*}
\epsilon(\lambda)=1-F\left(\hat{\Gamma}^{\lambda}_{12}(\Phi_1^{\otimes n_1(\lambda)}\otimes \Phi_2^{\otimes n_2(\lambda)}),\rho_1^{\lambda}\otimes \rho_2^{\lambda}\right)=
1-F\left(\hat{\Gamma}^{\lambda}_1(\Phi_1^{\otimes n_1(\lambda)}),\rho_1^{\lambda}\right)\,F\left(\hat{\Gamma}^{\lambda}_2(\Phi_2^{\otimes n_2(\lambda)}),\rho_2^{\lambda}\right), \end{equation*}
for the particular choice of family of LOCC channels $\{\hat{\Gamma}^{\lambda}_{12}\}_{\lambda\in\mathbb N^+}=\{\hat{\Gamma}^{\lambda}_1\otimes\hat{\Gamma}^{\lambda}_2\}_{\lambda\in\mathbb N^+}$. Therefore from the assumption that $\hat E^{\epsilon_i}_C(\rho^{\lambda}_i)\leq n_i(\lambda)$ achieved with probability error: 
\begin{equation*}
p_{\text{err}}(\hat{\Gamma}_i^{\lambda},\rho_i^{\lambda})=1-F\left(\hat{\Gamma}^{\lambda}_i(\Phi_i^{\otimes n_i(\lambda)}),\rho_i^{\lambda}\right)\leq\epsilon_i(\lambda),
\end{equation*}
we have: 
\begin{equation}
\epsilon(\lambda)\leq 1-(1-\epsilon_1(\lambda))(1-\epsilon_2(\lambda))\leq\epsilon_1(\lambda)+\epsilon_2(\lambda),
\end{equation}
obtained for the particular choice of LOCC channels $\hat{\Gamma}^{\lambda}_{12}=\hat{\Gamma}^{\lambda}_1\otimes \hat{\Gamma}^{\lambda}_2$.

\end{proof}

\subsection{ Uniform Computational Case}\label{subsec: sup and sub for unif settign}
We extend the lower bound superadditivity and upper bound subadditivity of computational distillable entanglement and cost to the  uniform scenario. The proofs are direct corollaries of Theorem \ref{prop: supperaditive one shot dist} and Theorem \ref{prop: subbativity of one shot cost} and we omit them as they follow exactly the same arguments.

\begin{corollary}\label{th: uniform distil superaditivity}
    Let $\{k_1,\rho_1^{k_1}\}_{k_1\in\{0,1\}^{\kappa(\lambda)}}$ and $\{k_2,\rho_2^{k_2}\}_{k_2\in\{0,1\}^{\kappa(\lambda)}}$ be families of encoded bipartite quantum states. Suppose that $\hat E^{\varepsilon_i}_D(\rho^{k_i}_i)\geq m_i(\lambda)$ for $i\in\{1,2\}$.
    Then: 
    \begin{equation*}
     \hat{E}_D^{\epsilon}\left(\rho_1^{k_1}\otimes\rho_2^{k_2}\right)\geq m_1(\lambda) + m_2(\lambda),
    \end{equation*}
    where  $\varepsilon(\lambda) \leq \varepsilon_1(\lambda) + \varepsilon_2(\lambda)$.
        \end{corollary}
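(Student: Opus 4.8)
The plan is to mirror the proof of Theorem~\ref{prop: supperaditive one shot dist} almost verbatim, since the uniform setting only adds the key register $k$ as a passive spectator. First I would invoke the hypothesis to obtain, for each $i\in\{1,2\}$, an efficient family of LOCC channels $\{\hat\Gamma_i^\lambda\}$ witnessing $\hat E_D^{\varepsilon_i}(\rho_i^{k_i})\ge m_i(\lambda)$; by Definition~\ref{def: uniform computational distillable entanglement} this means $1-F(\hat\Gamma_i^\lambda(k_i,\rho_i^{k_i}),\Phi^{\otimes m_i(\lambda)})\le\varepsilon_i(\lambda)$ for all $k_i\in\{0,1\}^{\kappa(\lambda)}$. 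I would then define the product family $\hat\Gamma_{12}^\lambda:=\hat\Gamma_1^\lambda\otimes\hat\Gamma_2^\lambda$, acting on the two key registers and the two state registers in the natural way, and observe that it is again efficient because the sum of two polynomial-size circuit descriptions is still polynomial-size (Definition~\ref{def: definition efficient LOCC}).

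Next I would compute the error of this product channel on $\rho_1^{k_1}\otimes\rho_2^{k_2}$ with target $\Phi^{\otimes m(\lambda)}$ where $m(\lambda)=m_1(\lambda)+m_2(\lambda)$. Using that $\Phi^{\otimes m(\lambda)}=\Phi^{\otimes m_1(\lambda)}\otimes\Phi^{\otimes m_2(\lambda)}$ is pure and that fidelity against a pure state is the overlap $\langle\phi^{\otimes m}|\cdot|\phi^{\otimes m}\rangle$ (Remark~\ref{remark: fidelity of expectation}), together with the tensor-product factorization of fidelity (Proposition~\ref{prop: fidelity factorisation}), I get
\begin{equation*}
F\bigl(\hat\Gamma_{12}^\lambda(k_1,k_2,\rho_1^{k_1}\otimes\rho_2^{k_2}),\Phi^{\otimes m(\lambda)}\bigr)=F\bigl(\hat\Gamma_1^\lambda(k_1,\rho_1^{k_1}),\Phi^{\otimes m_1(\lambda)}\bigr)\,F\bigl(\hat\Gamma_2^\lambda(k_2,\rho_2^{k_2}),\Phi^{\otimes m_2(\lambda)}\bigr).
\end{equation*}
Hence the error $\varepsilon(\lambda)=1-F(\cdots)\le 1-(1-\varepsilon_1(\lambda))(1-\varepsilon_2(\lambda))\le\varepsilon_1(\lambda)+\varepsilon_2(\lambda)$, and crucially this bound is uniform over $k_1$ and $k_2$ because the input bounds were uniform. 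This shows $m_1(\lambda)+m_2(\lambda)$ is a valid lower bound on $\hat E_D^{\varepsilon}(\rho_1^{k_1}\otimes\rho_2^{k_2})$ with the stated error, completing the argument.

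The only point requiring a little care — and the one I would flag as the main obstacle, though it is minor — is bookkeeping of the key registers: in the uniform formalism each $\hat\Gamma_i^\lambda$ reads a key register $A_i'B_i'$ encoding $k_i$, so the product channel must act on the combined registers $A_1'A_2'\,A_1A_2 : B_1B_2\,B_1'B_2'$, and one must check this is still a legitimate bipartite LOCC channel of the form demanded by Definition~\ref{def: uniform computational distillable entanglement} with the two keys $(k_1,k_2)$ playing the role of a single composite key. This is immediate since a tensor product of LOCC channels across the same $A{:}B$ cut is LOCC, and the composite key $(k_1,k_2)\in\{0,1\}^{2\kappa(\lambda)}$ still has length polynomial in $\lambda$. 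Since the structure is identical to the one-shot proof, I would simply state the result as a corollary and remark that the proof follows by the same argument, exactly as the paper does for the preceding corollaries.
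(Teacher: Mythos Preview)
Your proposal is correct and matches the paper's approach exactly: the paper omits the proof of this corollary, stating explicitly that it follows by the same argument as Theorem~\ref{prop: supperaditive one shot dist}, which is precisely what you have spelled out (including the only nontrivial point, the bookkeeping of the combined key $(k_1,k_2)$).
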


\begin{corollary}\label{th: subadditivity unif cost}
Let $\{k_1,\rho_1^{k_1}\}_{k_1\in\{0,1\}^{\kappa(\lambda)}}$ and $\{k_2,\rho_2^{k_2}\}_{k_2\in\{0,1\}^{\kappa(\lambda)}}$ be families of encoded bipartite quantum states. Suppose that $\hat E^{\varepsilon_i}_C(\rho^{k_i}_i)\leq n_i(\lambda)$ for $i\in\{1,2\}$.
Then: 
\begin{equation*}
\hat{E}_C^{\epsilon}\left(\rho_1^{k_1}\otimes\rho_2^{k_2}\right)\leq n_1(\lambda) + n_2(\lambda),    
\end{equation*}
where $\epsilon(\lambda)\leq \epsilon_1(\lambda)+\epsilon_2(\lambda)$.
\end{corollary}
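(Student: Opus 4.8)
The plan is to mirror exactly the one-shot argument of Theorem \ref{prop: subbativity of one shot cost}, now carried out uniformly over the keys. First I would unpack the hypotheses: for $i\in\{1,2\}$ there exists an efficient family of LOCC channels $\{\hat\Gamma^\lambda_i\}$ (acting on the appropriate register that also carries a copy of the key $k_i$) such that, for every $\lambda\in\mathbb N^+$ and every $k_i\in\{0,1\}^{\kappa(\lambda)}$,
\[
p_{\text{err}}(\hat\Gamma^\lambda_i,k_i,\rho_i^{k_i})=1-F\bigl(\hat\Gamma^\lambda_i(k_i,\Phi_i^{\otimes n_i(\lambda)}),\rho_i^{k_i}\bigr)\leq\varepsilon_i(\lambda).
\]
Then I would form the product family $\hat\Gamma^\lambda_{12}:=\hat\Gamma^\lambda_1\otimes\hat\Gamma^\lambda_2$, fed with the joint key $(k_1,k_2)$ and the input $\Phi_1^{\otimes n_1(\lambda)}\otimes\Phi_2^{\otimes n_2(\lambda)}$, noting this produces (an approximation of) $\rho_1^{k_1}\otimes\rho_2^{k_2}$ using $n_1(\lambda)+n_2(\lambda)$ EPR pairs.

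The key steps, in order, are: (1) efficiency — since $n_1,n_2\in\operatorname{Poly}(\lambda)$ and each $\hat\Gamma^\lambda_i$ has circuit size $c_i(\lambda)\in\operatorname{Poly}(\lambda)$, the tensor product $\hat\Gamma^\lambda_{12}$ has circuit size $c_1(\lambda)+c_2(\lambda)\in\operatorname{Poly}(\lambda)$, so $\{\hat\Gamma^\lambda_{12}\}$ is again efficient in the sense of Definition \ref{def: definition efficient LOCC}; (2) error bookkeeping — using the multiplicativity of fidelity under tensor products (Proposition \ref{prop: fidelity factorisation}),
\[
\varepsilon(\lambda)=1-F\bigl(\hat\Gamma^\lambda_1(k_1,\Phi_1^{\otimes n_1(\lambda)}),\rho_1^{k_1}\bigr)\,F\bigl(\hat\Gamma^\lambda_2(k_2,\Phi_2^{\otimes n_2(\lambda)}),\rho_2^{k_2}\bigr)\leq 1-(1-\varepsilon_1(\lambda))(1-\varepsilon_2(\lambda))\leq\varepsilon_1(\lambda)+\varepsilon_2(\lambda),
\]
and crucially this bound is uniform: it holds for every $\lambda$ and every pair $(k_1,k_2)\in\{0,1\}^{\kappa(\lambda)}\times\{0,1\}^{\kappa(\lambda)}$ because the per-key bounds $\varepsilon_i(\lambda)$ were uniform; (3) conclude that $n_1(\lambda)+n_2(\lambda)$ is a valid upper bound on $\hat E^\varepsilon_C(\rho_1^{k_1}\otimes\rho_2^{k_2})$ in the sense of Definition \ref{def: uniform comp cost vidick}.

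The only genuine subtlety — the "hard part" insofar as there is one — is a bookkeeping check rather than a new idea: one must verify that the combined protocol still fits the uniform template of Definition \ref{def: uniform comp cost vidick}, i.e. that a single efficient family indexed by $\lambda$ handles all keys $(k_1,k_2)$ simultaneously with one error function $\varepsilon(\lambda)$. This is immediate because $\hat\Gamma^\lambda_1$ and $\hat\Gamma^\lambda_2$ already do so for their respective key sets, the two key registers $A'$ for the product are just the concatenation of the individual ones, and the product circuit description is obtained by placing the two circuit representations side by side. Since this is exactly the argument of Theorem \ref{prop: subbativity of one shot cost} with the key dependence carried along verbatim, I would, as the paper does, simply state the corollary and omit the repeated proof.
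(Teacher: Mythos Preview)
Your proposal is correct and matches the paper's approach exactly: the paper omits the proof of Corollary \ref{th: subadditivity unif cost}, stating only that it follows from Theorem \ref{prop: subbativity of one shot cost} by the same arguments, and your sketch spells out precisely those arguments with the key dependence carried along. There is nothing to add.
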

In the statement of Corollary \ref{th: uniform distil superaditivity} and Corollary \ref{th: subadditivity unif cost} we make use of the implicit notations introduced in Subsection \ref{subsec: uniform computational entanglement} when working with efficient families of encoded LOCC channels $\{\Gamma^k\}$. We also use shorthand notation for computational uniform distillable entanglement and cost: $\hat{E}_{D(C)}^{\epsilon}\left(\rho_1^{k_1}\otimes\rho_2^{k_2}\right)$ instead of $\hat{E}_{D(C)}^{\epsilon}\left(\left\{k_1\,k_2,\rho_1^{k_1}\otimes \rho_2^{k_2}\right\}\right)$ and $\hat{E}_{D(C)}^{\epsilon_i}\left(\rho_i^{k_i}\right)$ instead of $\hat{E}_{D(C)}^{\epsilon_i}\left(\left\{k_i,\rho_i^{k_i}\right\}\right)$ for $i\in\{1,2\}$.  

\section{Unitary Invariance and LOCC Monotonicity}\label{sec: invariance local unitaries}
In this section we examine how computational constraints affect two classical desiderata for entanglement measures: LU invariance and LOCC monotonicity. We proceed in two steps:


In Subsection \ref{subsec: comp unitary measure}, we formalize efficient families of local unitaries and efficient LOCC channels, and the corresponding notions of LU invariance and LOCC monotonicity for computational measures. In Subsection \ref{subsec: invariance local unitaries}, we show that, in both one-shot and uniform settings, the computational distillable entanglement and the computational entanglement cost retain LU invariance only when the unitary family is efficient. In Subsection \ref{subsec: LOCC monotnicity}, we show, as corollary of LU invariance, that these computational entanglement measures are LOCC-monotone only under efficient LOCC protocols in both one-shot and uniform setting. Thus, unconstrained local operations can increase or decrease computational entanglement, while their efficient counterparts preserve the familiar monotonic behavior.

\subsection{Computational Invariance Under Local Unitaries and LOCC monotonicity}\label{subsec: comp unitary measure}
In this subsection, we will define the invariance under local unitaries and LOCC monotonicity in the computational sense.

First, we adapt the definition of efficient LOCC maps from Definition \ref{def: definition efficient LOCC} to \emph {efficient families of unitaries} as follows. 

\begin{definition}[Efficient local unitaries]\label{def: efficient local unitaries}
Let $\{U^{\lambda}_{AB}\}_{\lambda\in\mathbb{N}^+}$ be a family of bipartite local unitaries and the associated family of unitary channels $\{\tilde U^{\lambda}\}_{\lambda\in\mathbb N^+}$ where
\[
U^{\lambda}_{AB}:=U^{\lambda}_{A}\otimes U^{\lambda}_{B},\qquad
\tilde U^{\lambda}_{AB}(\rho):=U^{\lambda}_{AB}\rho\,U^{\lambda\dagger}_{AB}.
\]
We say the family of local unitaries $\{U^{\lambda}_{AB}\}_{\lambda\in\mathbb N^+}$ is \emph{efficient} if every $U^{\lambda}_{AB}$ admits a quantum-circuit description containing at most $\operatorname{poly}(\lambda)$ gates.
\end{definition}
 
 \begin{definition}
     Let $\{k,U^{k}_{AB}\}_{k\in\{0,1\}^{\kappa(\lambda)}}$ be a family of encoded local unitaries and $\{k,\tilde{U}^k_{AB}\}_{k\in\{0,1\}^{\kappa(\lambda)}}$ -- the associated family of unitary channel acting on the encoded family of states $\{k,\rho^k_{AB}\}_{k\in\{0,1\}^{\kappa(\lambda)}}$ 
     \begin{align*}
         \forall k\in\{0,1\}^{\kappa(\lambda)},\quad\tilde U^k_{AB}:\mathcal D(\mathcal H_{AB})&\to\mathcal D(\mathcal H_{AB})\\
         \tilde U^k_{AB}:\rho^k_{AB}&\to\tilde U^k_{AB}(\rho^k_{AB})=U^k_{AB}\,\rho^k_{AB}\,U^{k\dagger}_{AB},
     \end{align*}
     where for all $k\in\{0,1\}^{\kappa(\lambda)}$ we have $ U^k_{AB}:=U^k_A\otimes U^k_B
    $ .
    
    We say the family of the encoded unitaries $\{k,U^k_{AB}\}_{k\in\{0,1\}^{\kappa (\lambda)}}$ is efficient if for any key $k\in\{0,1\}^{\kappa(\lambda)}$, $U^k_{AB}$ admits a circuit representation with at most polynomial number of gates $c(\lambda)\in \operatorname{Poly}(\lambda)$.
 \end{definition}

In the following, we adapt Definition \ref{def: invariance under LU} for the computational entanglement measures being invariant under (efficient) local unitaries for the one-shot scenario and extend it to the uniform scenario setting.
\begin{definition}\label{def: comp distillable local unitary invariance}
Let the family of bipartite states be $\{\rho^{\lambda}_{AB}\}_{\lambda\in \mathbb N^+}$. Let $m(\lambda)$ and $n(\lambda)$ be a valid lower bound on the computational distillable entanglement and a valid upper bound on the computational entanglement cost respectively. We say that the computational one-shot distillable entanglement and cost are invariant under (efficient) local unitaries if the following holds: 
\begin{align*}
    \hat{E}^{\epsilon}_D(\{\rho_{AB}^{\lambda}\})\geq m(\lambda)&\iff \hat{E}^{\epsilon}_D(\{U_{AB}^{\lambda}\rho^{\lambda}U_{AB}^{\lambda\dagger}\}\geq m(\lambda),\\
  \hat{E}^{\epsilon}_C(\{\rho_{AB}^{\lambda}\})\leq n(\lambda)&\iff \hat{E}^{\epsilon}_C(\{U_{AB}^{\lambda}\rho^{\lambda}U_{AB}^{\lambda\dagger}\})\leq n(\lambda),
\end{align*}
    where $\{U^{\lambda}_{AB}\}_{\lambda\in \mathbb N^+}$ is a family of (efficient) local unitaries.
\end{definition}

 \begin{definition}
     Consider the family of states $\{k,\rho^k_{AB}\}_{k\in\{0,1\}^{\kappa(\lambda)}}$.  Let $m(\lambda)$ and $n(\lambda)$ be a valid lower bound on the uniform computational distillable entanglement and a valid upper bound on the uniform computational entanglement cost respectively. We say that the uniform computational distillable entanglement and entanglement cost are invariant under encoded local unitaries if the following holds: 
     \begin{align*}
         \hat E^{\epsilon}_D(\{k,\rho^k_{AB}\})\geq m(\lambda)&\iff\hat E^{\epsilon}_D(\{k,U^k_{AB}\,\rho^k_{AB}\,U^{k\dagger}_{AB}\})\geq m(\lambda)\\
         \hat E^{\epsilon}_C(\{k,\rho^k_{AB}\})\leq n(\lambda)&\iff\hat E^{\epsilon}_C(\{k,U^k_{AB}\,\rho^k_{AB}\,U^{k\dagger}_{AB}\})\leq n(\lambda),
     \end{align*}
     where $\{k,U^{k}_{AB}\}_{k\in\{0,1\}}$ is a family of encoded local unitaries.
 \end{definition}

We also adapt LOCC monotonicity in  Definition \ref{def: upper bound LOCC monotonicity} for the computational entanglement measures for the one-shot case and then extend it to the uniform setting.
\begin{definition}\label{def: locc monotone}
    Let the family of bipartite states $\{\rho^{\lambda}_{AB}\}_{\lambda\in \mathbb N^+}$. Let $m(\lambda)$ and $n(\lambda)$ be a valid lower bound on the computational distillable entanglement and a valid upper bound on the computational entanglement cost respectively. We say that the computational one-shot distillable entanglement and cost are LOCC monotones if the following holds: 
    \begin{align*}
        \hat E^\varepsilon_D(\{\Lambda^\lambda(\rho^\lambda_{AB})\}) \geq m(\lambda) &\implies \hat E_D^\varepsilon(\{\rho^\lambda_{AB}\})\geq m(\lambda)\\
        \hat E^\varepsilon_C(\{\rho^\lambda_{AB}\}) \leq n(\lambda)&\implies \hat E^\varepsilon_C(\{\Lambda^\lambda(\rho^\lambda_{AB})\}) \leq n(\lambda),
    \end{align*}
    where $\{\Lambda^\lambda\}_{\lambda\in\mathbb N_+}$ is the family of  LOCC channels.
\end{definition}
The definition above can be naturally extended to the uniform setting, which can be formulated as follows.
\begin{definition}
    \label{def: unif locc monotone}
    Let the family of encoded bipartite states $\{k,\rho^{k}_{AB}\}_{k\in \{0,1\}^{\kappa(\lambda)}}$. Let $m(\lambda)$ and $n(\lambda)$ be a valid lower bound on the uniform computational distillable entanglement and a valid upper bound on the uniform computational entanglement cost respectively. We say that the computational uniform distillable entanglement and cost are LOCC monotones if the following holds: 
    \begin{align*}
        \hat E^\varepsilon_D(\{k,\Lambda^k(\rho^k_{AB})\}) \geq m(\lambda) &\implies \hat E_D^\varepsilon(\{k,\rho^k_{AB}\})\geq m(\lambda)\\
        \hat E^\varepsilon_C(\{k,\rho^k_{AB}\}) \leq n(\lambda)&\implies E^\varepsilon_C(\{k,\Lambda^k(\rho^k_{AB})\}) \leq n(\lambda),
    \end{align*}
    where $\{k,\Lambda^k\}_{k\in\{0,1\}^{\kappa(\lambda)}}$ is a family of encoded  LOCC channels.
\end{definition}

\subsection{Invariance Under Local Unitaries}\label{subsec: invariance local unitaries}
In this subsection, we analyze the invariance under local unitaries of each of the computational entanglement measures.

The discussion unfolds in four stages.  
First, we recall the $\eta$–net machinery on the unitary group together with the iterative family of states $S^{\lambda}$ introduced by Arnon-Friedman, Brakerski and Vidick \cite{arnon2023computational}; this construction provides an explicit lower bound
\(\bigl|S^{\lambda}\bigr|\ge\bigl(1/\eta\bigr)^{\Omega(2^{2m(\lambda)})}\)
that is crucial for all non-invariance arguments.  
Second, we compare the cardinality of $S^{\lambda}$ with any \emph{efficient} family of LOCC channels or local unitaries and derive two counting lemmas that will be invoked repeatedly.  
Third, we prove that the computational one-shot \emph{entanglement cost} is LU-invariant whenever the LU family is efficient, and we show—using the counting lemmas and the set \(S^{\lambda}\)—that invariance can fail for generic, inefficient LUs.  
Fourth, we treat the computational one-shot \emph{distillable entanglement}, 
specifying precisely when invariance holds and supply counter-examples when it does not.  
Corollaries \ref{corr: uniform cost inv} and \ref{corr: unif distil inva} extend all results to the uniform (keyed) scenario in a straightforward component-wise manner.

We now recall the iterative construction and the necessary combinatorial tools.

First, we introduce an $\varepsilon$–net in the unitary group \(\mathbb{U}(d)\) and quote a bound on its minimal size.
\begin{definition}
    An $\eta$-net $\mathcal N_{\eta}$ in $(\mathbb U(d),\|\cdot\|_2)$, is a finite subset of $\mathbb U(d)$ that satisfies the following condition:
    \begin{equation*}
   \forall V\in\mathbb U(d): \exists U\in \mathcal N_{\eta},\,\|U-V\|_2\leq\eta.
    \end{equation*}
    
\end{definition}
For a general treatment of nets on compact metric spaces—and in particular on the unitary group—see \cite[Chap.\;5]{aubrun2017alice}.

\begin{proposition}\cite[Theorem 5.11]{aubrun2017alice}\label{prp: min card eta net}
    Let the $\eta$-net  $\mathcal{N}_{\eta}$ in the space of unitary matrices $(\mathbb U(d),\|\cdot\|_2)$. Then the minimal cardinality $N_{\eta}$ of the $\eta$-net satisfies: 
    \begin{equation*}
    \left(\frac{2cd^{1/2}}{\eta}\right)^{d^2}\leq N_{\eta}\leq\left(\frac{2Cd^{1/2}}{\eta}\right)^{d^2},
    \end{equation*}
    where $c$ and $C$ are universal positive constants
\end{proposition}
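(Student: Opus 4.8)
The plan is to prove both inequalities at once from a two‑sided estimate on the normalised Haar measure of metric balls, which is the standard route to metric entropy bounds. Write $\mu$ for the Haar probability measure on $\mathbb U(d)$ and, for $r>0$, set $v(r):=\mu(B_{\|\cdot\|_2}(U,r))$; by left‑invariance this is independent of $U\in\mathbb U(d)$. The elementary packing/covering sandwich then gives $v(\eta)^{-1}\le N_\eta\le v(\eta/2)^{-1}$: if $\mathcal N$ is an $\eta$‑net then $\mathbb U(d)=\bigcup_{U\in\mathcal N}B(U,\eta)$ forces $1\le|\mathcal N|\,v(\eta)$, while a maximal $\eta$‑separated $\mathcal N$ is an $\eta$‑net with pairwise‑disjoint balls $B(U,\eta/2)$, so $|\mathcal N|\,v(\eta/2)\le 1$. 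Next I would replace $\|\cdot\|_2$ by the geodesic distance $d_g$ of the bi‑invariant Riemannian metric on $\mathbb U(d)$ induced by $\langle X,Y\rangle=\Tr(X^\dagger Y)$ on $\mathfrak u(d)\cong\R^{d^2}$: simultaneously diagonalising $U^\dagger V$ reduces the comparison to $\tfrac2\pi|\theta|\le|2\sin(\theta/2)|\le|\theta|$ for $|\theta|\le\pi$, giving $\tfrac2\pi\,d_g(U,V)\le\|U-V\|_2\le d_g(U,V)$, so up to universal rescalings of $r$ it suffices to estimate $v_g(r):=\mu(B_{d_g}(U,r))$.

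Here two facts from comparison geometry do the work, using that a bi‑invariant metric on a compact group has nonnegative sectional — hence Ricci — curvature. First, Bishop–Gromov monotonicity of $r\mapsto\operatorname{vol}_{\mathrm{Riem}}(B_{d_g}(U,r))/r^{d^2}$ gives, taking $r$ up to the diameter, $v_g(r)\ge\bigl(r/\operatorname{diam}_{\mathrm{Riem}}\mathbb U(d)\bigr)^{d^2}$, and since $\operatorname{diam}_{\mathrm{Riem}}\mathbb U(d)=\Theta(\sqrt d)$ this is $\ge(c_0 r/\sqrt d)^{d^2}$; combined with the sandwich this already yields $N_\eta\le(Cd^{1/2}/\eta)^{d^2}$, with no further input. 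Second, Bishop's volume comparison gives $\operatorname{vol}_{\mathrm{Riem}}(B_{d_g}(U,r))\le\omega_{d^2}r^{d^2}$, where $\omega_{d^2}$ is the volume of the unit Euclidean ball in $\R^{d^2}$, with $\omega_{d^2}^{1/d^2}=\Theta(1/d)$ by Stirling; dividing by $\operatorname{Vol}_{\mathrm{Riem}}(\mathbb U(d))$ and invoking the closed form $\operatorname{Vol}_{\mathrm{Riem}}(\mathbb U(d))=(2\pi)^{d(d+1)/2}\big/\prod_{k=1}^{d-1}k!$ (Weyl integration / Macdonald), whose Stirling asymptotics $\prod_{k=1}^{d-1}k!=(\Theta(\sqrt d))^{d^2}$ yield $\operatorname{Vol}_{\mathrm{Riem}}(\mathbb U(d))=(\Theta(1/\sqrt d))^{d^2}$, one gets $v_g(r)\le(C_0 r/\sqrt d)^{d^2}$ and hence $N_\eta\ge(cd^{1/2}/\eta)^{d^2}$. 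Values of $\eta$ close to the diameter $2\sqrt d$ are then covered by monotonicity of $N_\eta$ and a harmless adjustment of $c,C$.

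The delicate point — and the reason the bound carries $d^{1/2}$ rather than $d$ — is exactly this normalisation: a radius‑$r$ ball inside the $d^2$‑dimensional space $\mathfrak u(d)$ has Euclidean volume $(\Theta(r/d))^{d^2}$, and only after dividing by $\operatorname{Vol}_{\mathrm{Riem}}(\mathbb U(d))=(\Theta(1/\sqrt d))^{d^2}$ does one recover the correct $(r/\sqrt d)^{d^2}$ growth. I would therefore treat the asymptotics of $\operatorname{Vol}_{\mathrm{Riem}}(\mathbb U(d))$ as the crux of the argument; it can be obtained self‑containedly from the Weyl integration formula (equivalently, the exponential map has Jacobian $J(H)=\prod_{j<k}\bigl(\tfrac{\sin((\theta_j-\theta_k)/2)}{(\theta_j-\theta_k)/2}\bigr)^2\in[(2/\pi)^{d^2},1]$ whenever $\|H\|_2\le\pi/\sqrt2$, so the measure of a small geodesic ball equals the Euclidean one up to a universal‑constant$^{d^2}$ factor once $\operatorname{Vol}_{\mathrm{Riem}}(\mathbb U(d))$ is known), but it is cleanest to quote it. I also note that producing the matching \emph{lower} bound on $N_\eta$ is precisely equivalent to exhibiting a packing of $\mathbb U(d)$ of near‑maximal cardinality, and this is exactly the content that the volume computation supplies.
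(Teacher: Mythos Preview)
The paper does not supply its own proof of this proposition; it is quoted from \cite[Theorem~5.11]{aubrun2017alice} and used as a black box in the counting arguments of Section~\ref{sec: invariance local unitaries}. There is therefore nothing in the paper to compare your proposal against.

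On its merits, your sketch is correct and follows the standard route. The Haar-volume sandwich $v(\eta)^{-1}\le N_\eta\le v(\eta/2)^{-1}$, the bi-Lipschitz comparison of $\|\cdot\|_2$ with the geodesic distance, and the two-sided estimate $v_g(r)=(\Theta(r/\sqrt d))^{d^2}$ are exactly the ingredients needed. Your use of Bishop--Gromov for the lower bound on $v_g$ and Bishop's inequality together with the Macdonald volume formula for the upper bound is one clean organisation; the Aubrun--Szarek presentation packages both directions into a single volumic-regularity statement obtained directly from the Jacobian of the exponential map (your parenthetical alternative), which avoids naming comparison geometry but is equivalent in content. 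As you correctly isolate, the non-obvious input in either presentation is the asymptotic $\operatorname{Vol}_{\mathrm{Riem}}(\mathbb U(d))=(\Theta(1/\sqrt d))^{d^2}$: this is what converts the naive $(r/d)^{d^2}$ Euclidean-ball scaling in $\mathfrak u(d)\cong\R^{d^2}$ into the correct $(r/\sqrt d)^{d^2}$ and produces the $d^{1/2}$ in the statement.
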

Now, we recall the iterative construction of a special set of states $S^\lambda$ that was used in the proof of Lemma 4.1 from \cite{arnon2023computational}. The process starts with $S^\lambda_0=\emptyset$. Then, denote by $S^{\lambda}_i$ the following set of the states in the $i^{\text{th}}$ ($i>0$) round of the iterative process given by 
\begin{equation*}
    S^\lambda_i := \left\{\Phi_U^{\lambda}\,|\,
 U \in\mathcal U^{(i)} \right\},
 \end{equation*}
 where $\Phi_U^{\lambda}:=\mathbb I\otimes U\ketbra{\phi^{\otimes m(\lambda)}}{\phi^{\otimes m(\lambda)}}\mathbb I\otimes U^{\dagger}$ and $\mathcal U^{(i)}\subseteq\mathbb U(2^{m(\lambda)})$ in turn is defined as:
\begin{equation*}
    \mathcal U^{(i)}:= \left\{\sigma_X(a)\sigma_Z(b)V \,\big|\, V\in\mathbb U(2^{m(\lambda)}):\sqrt{F\left(\psi^{\lambda}, \Phi_V^{\lambda}\right)} \leq 1-\eta,\, \forall  \psi^{\lambda}\in S^\lambda_{i-1} \right\}.
\end{equation*}
\noindent
Above the operator $\sigma_X(a)$ defined as $\sigma_X(a) := \bigotimes_{i=1}^{n}\sigma_X^{a_i},\,a\in \{0,1\}^{m(\lambda)} $ similarly for $ \sigma_Z(b)$, where $\sigma_X$ and $\sigma_Z$ - Pauli X and Z operators and $\psi^{\lambda}:=\ketbra{\psi^{\lambda}}{\psi^{\lambda}}$.
 Define the sets $S^\lambda$ and $\mathcal U^{\lambda}$ as:
\begin{equation*}
    S^\lambda := \displaystyle{\bigcup^{i_{\eta}}_{i=1}}\,S^\lambda_i,
\quad \mathcal{U}^{\lambda}:=\displaystyle{\bigcup^{i_{\eta}}_{i=1}}\,\mathcal U^{(i)}.
\end{equation*}
In the following, we argue that the iterative construction of $S^{\lambda}$ should be finite ($i_{\eta}<\infty$) and the number of states in $S^{\lambda}$ admits a lower bound given by: 
\begin{equation*}
    |S^{\lambda}|\geq \left(\frac{1}{\eta}\right)^{\Omega(2^{2m(\lambda)})}.
\end{equation*}
First, one can observe that after each iteration in the construction of $S^{\lambda}$ the following holds:
\begin{equation*}
    \exists V\in\mathbb U(2^{m(\lambda)}):\sqrt{F(\psi^{\lambda},\Phi_V^{\lambda})}\leq1-\eta,\quad\forall\psi^{\lambda}\in S^\lambda_{i-1}\iff\|U-V\|_2^2 \geq 2^{m(\lambda)+1}\eta,\quad \forall U\in\mathcal U^{(i-1)}.
\end{equation*}
The equivalence above is obtained by observing that
\begin{equation*}
\operatorname{Re}\left(\left\langle\phi^{\otimes m}\right|(\mathbb I \otimes U^{\dagger})\left(\mathbb I \otimes V\right)\left|\phi^{\otimes m}\right\rangle\right)=1-\frac{1}{2} \frac{1}{2^m}\|U-V\|_2^2,
\end{equation*}
and that $\operatorname{Re}x \leq |x|, \forall x\in\mathbb C$.

The iterative process stops only if for every $V \in \mathbb U(2^{m(\lambda)})$ the following holds:
\begin{equation*}
\exists U\in\mathcal U^\lambda,\quad \|U-V\|_2\leq2^{\frac{m(\lambda)+1}{2}}\eta^{\frac{1}{2}}=:\eta'.
\end{equation*}
By definition, the set $\mathcal{U}^\lambda$ is an $\eta'$-net in $\mathbb U(2^{m(\lambda)})$, therefore the number of iterations should be finite. 
Hence, 
the total number of states in $S^{\lambda}$
 is at least the cardinality of the $\eta^\prime$-net given in Proposition \ref{prp: min card eta net}: 
\begin{equation*}
    |S^{\lambda}|\geq \left(\frac{1}{\eta}\right)^{\Omega(2^{2m(\lambda)})}.
\end{equation*}

Before we give the main theorems of this section, we prove the following lemma.
\begin{lemma}\label{lemma:cardinality comparison LOCC}
    Let $S^\lambda$ be the family of states constructed above, and let $\hat{\Gamma}^\lambda$ be the set of efficient LOCC channels acting on $2n(\lambda)$ qubits respectively. Then $\exists\lambda_0\in\mathbb N^+$ such that for all $\lambda\geq\lambda_0$ :
    \begin{equation*}
         \frac{|\hat{\Gamma}^\lambda|}{|S^\lambda|} \leq \left(\frac{1}{\eta}\right)^{-\Omega(2^{2n(\lambda)})}.
    \end{equation*}
\end{lemma}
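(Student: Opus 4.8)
The plan is to compare two cardinalities directly: the number of efficient LOCC channels on a fixed number of qubits, and the number of states $|S^\lambda|$, which we have already lower bounded by $(1/\eta)^{\Omega(2^{2n(\lambda)})}$. The key observation is that an efficient family of LOCC channels is, by Definition~\ref{def: definition efficient LOCC}, described by circuits of size at most $c(\lambda)$ for some polynomial $c(\lambda)\in\operatorname{Poly}(\lambda)$. A circuit of size $c$ over a fixed gate set, acting on a fixed number of wires (here controlled by $n(\lambda)$ together with the polynomially many ancilla and communication qubits guaranteed by the circuit representation), can be specified by choosing, for each of the $c$ gate slots, one gate from a finite gate set and the wires it acts on; this gives a crude upper bound $|\hat\Gamma^\lambda| \leq 2^{\operatorname{poly}(\lambda)}$, since $c(\lambda)$ is polynomial and the number of wire choices per gate is also polynomial in $\lambda$.

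The steps, in order, are as follows. First I would fix the gate set and invoke Definition~\ref{def: definition efficient LOCC} to write the circuit size bound $c(\lambda)\leq p(\lambda)$ for an explicit polynomial $p$. Second, I would count: the total number of distinct circuits of size at most $c(\lambda)$ on $w(\lambda)=\operatorname{poly}(\lambda)$ wires is at most $\bigl(|\mathcal{G}|\cdot w(\lambda)^{O(1)}\bigr)^{c(\lambda)} = 2^{O(c(\lambda)\log w(\lambda))} = 2^{\operatorname{poly}(\lambda)}$, which upper bounds $|\hat\Gamma^\lambda|$. Third, I would recall from the construction preceding the lemma that $|S^\lambda| \geq (1/\eta)^{\Omega(2^{2n(\lambda)})}$. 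Fourth, I would take the ratio and observe that the denominator grows doubly exponentially in $n(\lambda)$ while the numerator grows only singly exponentially in $\operatorname{poly}(\lambda)$; hence for all sufficiently large $\lambda$ (say $\lambda\geq\lambda_0$) the ratio is bounded above by $(1/\eta)^{-\Omega(2^{2n(\lambda)})}$, absorbing the $2^{\operatorname{poly}(\lambda)}$ numerator into the $\Omega$ in the exponent since $\operatorname{poly}(\lambda) = o(2^{2n(\lambda)})$ when $n(\lambda)$ is a polynomial that is eventually at least, say, linear — and more carefully, one only needs $2^{\operatorname{poly}(\lambda)} \leq (1/\eta)^{\frac{1}{2}\Omega(2^{2n(\lambda)})}$ eventually, which holds whenever $n(\lambda)\to\infty$ and $\eta$ is a fixed constant in $(0,1)$.

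The main obstacle I anticipate is bookkeeping rather than conceptual: making the wire count and ancilla/communication register sizes precise enough that the ``number of circuits of size $c$'' bound is rigorous, and making sure the comparison $\operatorname{poly}(\lambda)$ versus $2^{2n(\lambda)}$ is legitimate — this requires $n(\lambda)$ to grow (if $n(\lambda)$ were bounded, $2^{2n(\lambda)}$ would be a constant and the claim could fail). In the intended application $n(\lambda)$ is a nonconstant polynomial lower bound on a computational entanglement quantity, so $2^{2n(\lambda)}$ dominates any polynomial; I would state this mild hypothesis explicitly (or note it is implicit in the regime of interest). A secondary subtlety is that $\eta$ may itself be taken to depend on $\lambda$ in later applications; as long as $\eta = \eta(\lambda)$ does not shrink faster than, e.g., inverse-exponentially in $n(\lambda)$, the doubly-exponential exponent still dominates, but for this lemma I would treat $\eta$ as the fixed net parameter used in the construction of $S^\lambda$ and simply carry it through.
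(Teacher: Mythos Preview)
Your proposal is correct and follows essentially the same route as the paper: bound $|\hat\Gamma^\lambda|\le 2^{\operatorname{poly}(\lambda)}$ by a crude circuit-counting argument, recall $|S^\lambda|\ge(1/\eta)^{\Omega(2^{2n(\lambda)})}$, and absorb the singly-exponential numerator into the doubly-exponential denominator for $\lambda$ large enough. The paper's own proof is in fact terser than yours---it simply asserts the counting bound and rewrites $2^{\operatorname{Poly}(\lambda)}=(1/\eta)^{(\log_\eta 2)\operatorname{Poly}(\lambda)}$ before absorbing into the $\Omega$---so the caveats you raise about $n(\lambda)\to\infty$ and the wire-count bookkeeping are more careful than what the paper actually writes down.
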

\begin{proof}
We compare the growth rates of the cardinality of the efficient LOCC family and the constructed state set $S^\lambda$.

By assumption, the circuit representations of an efficient LOCC channels have size growing at most polynomially with $\lambda$. Hence, from a counting argument, the total number of quantum circuits with at most $\operatorname{Poly}(\lambda)$  gates is itself bounded by 
\begin{equation*}
    |\hat{\Gamma}^{\lambda}|\leq 2^{\text{Poly}(\lambda)}.
\end{equation*}
Moreover, recall that 
    \begin{equation*}
    |S^\lambda| \geq \left(\frac{1}{\eta}\right)^{\Omega(2^{2n(\lambda)})}.
    \end{equation*}
Putting these estimates together gives

\begin{align*}
    \frac{|\hat{\Gamma}^\lambda|}{|S^\lambda|} \leq \left(\frac{1}{\eta}\right)^{-\Omega(2^{2n(\lambda)})} \left(\frac{1}{2}\right)^{-\text{Poly}(\lambda)} = \left(\frac{1}{\eta}\right)^{-\Omega(2^{2n(\lambda)})} \left(\frac{1}{\eta}\right)^{-(\log_\eta 2)\text{Poly}(\lambda)},
\end{align*}
therefore there exists $\lambda_0\in\mathbb N^+$ such that for all $\lambda\geq\lambda_0$,
\begin{equation*}
    \frac{|\hat{\Gamma}^\lambda|}{|S^\lambda|} \leq
    \left(\frac{1}{\eta}\right)^{-\Omega(2^{2n(\lambda)})}.
\end{equation*}
\end{proof}


We first show, in Theorem \ref{th:comp one shot cost inva uni}, that the computational one-shot entanglement cost is invariant under efficient families of local  unitaries while for generic families of unitaries it is not invariant for all choices of $\epsilon\in[0,1]$. We then show, in Theorem \ref{th: local dist unita invar epsilon diff zero}, the similar results for the computational one-shot distillable entanglement, concerning invariance under generic and efficient families of local unitaries.

\begin{theorem}\label{th:comp one shot cost inva uni} 
Let $\{\rho^{\lambda}_{AB}\}_{\lambda \in \mathbb{N}^+}$ be a family of bipartite states, and let $\{U^\lambda_{AB}\}_{\lambda \in \mathbb{N}^+}$ be an efficient family of local unitaries. Then the computational one-shot entanglement cost is invariant under such unitary transformations:
\begin{equation*}
    \hat E^{\epsilon}_C(\{\rho^{\lambda}_{AB}\}) \leq n(\lambda) 
    \iff 
    \hat E^{\epsilon}_C(\{U^{\lambda}_{AB}\rho^{\lambda}_{AB}U^{\lambda\dagger}_{AB}\}) \leq n(\lambda).
\end{equation*}
In contrast, for general (not necessarily efficient) families of local unitaries, this invariance does not hold.
\end{theorem}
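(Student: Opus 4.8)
The plan is to split the statement into the two directions of the equivalence (for efficient unitaries) and then a separate counterexample construction (for generic unitaries).

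\emph{Invariance under efficient local unitaries.} For the forward direction, suppose $n(\lambda)$ is a valid upper bound on $\hat E^{\epsilon}_C(\{\rho^{\lambda}_{AB}\})$, witnessed by an efficient family of LOCC channels $\{\hat\Gamma^\lambda\}$ with $1-F(\hat\Gamma^\lambda(\Phi^{\otimes n(\lambda)}),\rho^\lambda_{AB})\le\epsilon(\lambda)$. I would define the new dilution protocol $\hat\Gamma'^\lambda := \tilde U^\lambda_{AB}\circ\hat\Gamma^\lambda$, i.e. run the old dilution and then apply the local unitaries $U^\lambda_A$ and $U^\lambda_B$ on Alice's and Bob's outputs. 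Since $U^\lambda_{AB}=U^\lambda_A\otimes U^\lambda_B$ is local it can be appended to the circuit representation of $\hat\Gamma^\lambda$ as an LOCC (in fact purely local) post-processing step, and since $\{U^\lambda_{AB}\}$ is efficient, the combined circuit still has $\operatorname{poly}(\lambda)$ gates, so $\{\hat\Gamma'^\lambda\}$ is an efficient family of LOCC channels. By unitary invariance of the fidelity (Proposition \ref{prop: unitary inv fidelity}),
\[
F\bigl(\hat\Gamma'^\lambda(\Phi^{\otimes n(\lambda)}),\,U^\lambda_{AB}\rho^\lambda_{AB}U^{\lambda\dagger}_{AB}\bigr)
= F\bigl(\hat\Gamma^\lambda(\Phi^{\otimes n(\lambda)}),\,\rho^\lambda_{AB}\bigr)\ge 1-\epsilon(\lambda),
\]
so $n(\lambda)$ is a valid upper bound on $\hat E^{\epsilon}_C(\{U^\lambda_{AB}\rho^\lambda_{AB}U^{\lambda\dagger}_{AB}\})$. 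The reverse direction is symmetric: an efficient family is closed under taking adjoints ($U^{\lambda\dagger}_{AB}$ has the same circuit size up to reversing and daggering each gate), so apply the same argument with $\{U^{\lambda\dagger}_{AB}\}$ to the family $\{U^\lambda_{AB}\rho^\lambda_{AB}U^{\lambda\dagger}_{AB}\}$, recovering the original family.

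\emph{Failure for generic families of local unitaries.} Here I would build on the set $S^\lambda$ and the counting Lemma \ref{lemma:cardinality comparison LOCC}. Fix a small constant $\epsilon$ and take $n(\lambda)=m(\lambda)$; then every state $\Phi^\lambda_U = \mathbb I\otimes U\,\ketbra{\phi^{\otimes m(\lambda)}}{\phi^{\otimes m(\lambda)}}\,\mathbb I\otimes U^\dagger$ in $S^\lambda$ is, by construction, obtained from the maximally entangled state $\Phi^{\otimes m(\lambda)}$ by the local unitary $\mathbb I\otimes U$, and the latter has $\hat E^{\epsilon}_C$ upper bounded by $m(\lambda)$ trivially (the identity channel on $\Phi^{\otimes m(\lambda)}$ is an efficient dilution with zero error). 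So if $\hat E^{\epsilon}_C$ were invariant under all families of local unitaries, then \emph{every} element of $S^\lambda$ would admit an efficient dilution from $m(\lambda)$ EPR pairs with error at most $\epsilon(\lambda)$. But each efficient family of LOCC channels $\{\hat\Gamma^\lambda\}$ of size $c(\lambda)\in\operatorname{Poly}(\lambda)$, acting on $2n(\lambda)=2m(\lambda)$ qubits, can $\epsilon$-dilute only a bounded number of distinct target states — at most $|\hat\Gamma^\lambda|\le 2^{\operatorname{Poly}(\lambda)}$ of them, up to the resolution allowed by the $\eta$-net — whereas $|S^\lambda|\ge(1/\eta)^{\Omega(2^{2m(\lambda)})}$. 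By Lemma \ref{lemma:cardinality comparison LOCC}, for $\lambda$ large the ratio $|\hat\Gamma^\lambda|/|S^\lambda|$ is strictly less than $1$, so no single efficient family can cover $S^\lambda$; pick a diagonal element of $S^\lambda$ not covered, which is the image of $\Phi^{\otimes m(\lambda)}$ under a (necessarily inefficient) local unitary and for which $\hat E^{\epsilon}_C> m(\lambda)$, contradicting invariance.

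\emph{Main obstacle.} The delicate point is the non-invariance half: making precise that a fixed efficient LOCC family can $\epsilon$-approximate only "few" targets, and that the relevant notion of "few" is governed by the $\eta$-net resolution used to build $S^\lambda$. One must argue that two distinct circuits (or the same circuit on discretized inputs) producing outputs within fidelity $1-\epsilon$ of two states $\Phi^\lambda_U,\Phi^\lambda_V\in S^\lambda$ force $U,V$ to be $\eta$-close in $\|\cdot\|_2$ (via the identity $\operatorname{Re}\langle\phi^{\otimes m}|(\mathbb I\otimes U^\dagger)(\mathbb I\otimes V)|\phi^{\otimes m}\rangle = 1-\tfrac{1}{2}\tfrac{1}{2^m}\|U-V\|_2^2$ recalled above), so that the number of $S^\lambda$-states any efficient family can reach is at most $|\hat\Gamma^\lambda|$ times a net-covering factor that is still dominated by $|S^\lambda|$ for large $\lambda$ by Lemma \ref{lemma:cardinality comparison LOCC}. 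The efficient-case direction, by contrast, is essentially bookkeeping with circuit sizes and Proposition \ref{prop: unitary inv fidelity}.
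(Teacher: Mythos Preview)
Your efficient-unitary argument is correct and matches the paper exactly: compose the witnessing dilution with the efficient local unitary, invoke Proposition~\ref{prop: unitary inv fidelity}, and obtain the converse by applying the daggered family.

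For the non-invariance half, your overall strategy (every state in $S^\lambda$ is a local-unitary image of $\Phi^{\otimes m(\lambda)}$, the latter is trivially dilutable, then count via Lemma~\ref{lemma:cardinality comparison LOCC}) is the paper's strategy too, but your ``main obstacle'' paragraph misidentifies the key step. It is not about \emph{two distinct circuits} forcing $U,V$ close; the pigeonhole runs the other way. Since $|\hat\Gamma^\lambda|<|S^\lambda|$, if every $\psi^\lambda\in S^\lambda$ were $\epsilon$-dilutable then some \emph{single} efficient channel $\hat\Gamma^\lambda$ would have to dilute two distinct states $\psi_a^\lambda,\psi_b^\lambda\in S^\lambda$. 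Its unique output $\sigma^\lambda:=\hat\Gamma^\lambda(\Phi^{\otimes n(\lambda)})$ then lies in $\mathcal B_{\epsilon(\lambda)}(\psi_a^\lambda)\cap\mathcal B_{\epsilon(\lambda)}(\psi_b^\lambda)$, and the triangle inequality together with Fuchs--van de Graaf (Proposition~\ref{prop: fidelity and one norm}) give
\[
\eta\;\le\;\tfrac12\|\psi_a^\lambda-\psi_b^\lambda\|_1\;\le\;\tfrac12\|\psi_a^\lambda-\sigma^\lambda\|_1+\tfrac12\|\psi_b^\lambda-\sigma^\lambda\|_1\;\le\;2\epsilon(\lambda),
\]
contradicting the choice $\eta>2\epsilon(\lambda)$. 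No ``net-covering factor'' or per-circuit target count is needed beyond this: the $\eta$-separation built into $S^\lambda$ already ensures each channel output can be $\epsilon$-close to at most one element of $S^\lambda$, which is exactly the contrapositive of the paper's pigeonhole.
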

\begin{proof}
We first prove the positive direction: invariance under efficient local unitaries. Assume that $\hat E^{\epsilon}_C(\{\rho^{\lambda}_{AB}\}) \leq n(\lambda)$. Consider the transformed family $\{\rho'^\lambda_{AB} := U^\lambda_{AB} \rho^\lambda_{AB} U^{\lambda\dagger}_{AB}\}$. Using the unitary invariance of fidelity (Proposition \ref{prop: unitary inv fidelity}) and the efficient implementability of $U^\lambda_{AB}$, one can define an efficient LOCC family $\{\tilde{\Gamma}^\lambda\}$ by conjugation:
\[
\Gamma'^\lambda(\cdot) := U^\lambda_{AB} \hat{\Gamma}^\lambda(\cdot) U^{\lambda\dagger}_{AB}.
\]
Then, for each $\lambda$ we have:
\[
F\left( \Gamma'^\lambda(\Phi^{\otimes n(\lambda)}), \rho'^\lambda_{AB} \right) = F\left( \hat{\Gamma}^\lambda(\Phi^{\otimes n(\lambda)}), \rho^\lambda_{AB} \right) \geq 1 - \epsilon(\lambda),
\]
showing that the transformed family is also dilutable at cost $n(\lambda)$ via an efficient LOCC. The converse direction follows by symmetry, proving the equivalence.

We now show by contradiction that this invariance fails for generic (non-efficient) families of local unitaries. Denote by $\{\psi^{\lambda}_a\}_a$ the finite set of all possible states in $S^{\lambda}$ and assume that $S^\lambda\equiv\{\psi^\lambda_a\}_a$ is $n(\lambda)$ dilutable, i.e $\hat E^{\epsilon}_C(\{\psi^\lambda_a\}) \leq n(\lambda)$. From the construction of $S^{\lambda}$ we have that for two distinct states from $S^{\lambda}$ we have  $|\langle\psi^\lambda_a|\psi^\lambda_b\rangle|\leq 1-\eta$ with $a\neq b$.

Define the set of states around a given state $\psi_i\in S^{\lambda}$ by $\mathcal{B}_{\varepsilon(\lambda)}(\psi^\lambda_i)$ where $\mathcal{B}_\varepsilon (\rho_{AB})$ is given by: 
\begin{equation*}
    \mathcal{B}_{\varepsilon}(\rho_{AB}) :=
    \left\{ \sigma_{AB}\in\mathcal D(\mathcal{H}_{AB}) \,|\, F\left(\sigma_{AB},\rho_{AB}\right)\geq1-\varepsilon^2\right\}.
\end{equation*}
  
From the  Lemma \ref{lemma:cardinality comparison LOCC}, the total number of states is $S^{\lambda}$ is exponentially larger than the total number of efficient LOCC channels:
    \begin{equation*}
            \frac{|\hat{\Gamma}^\lambda|}{|S^\lambda|} \leq \left(\frac{1}{\eta}\right)^{-\Omega(2^{2n(\lambda)})}.
        \end{equation*}
Therefore, there exists an efficient LOCC channel $\hat{\Gamma}^\lambda$ which dilutes two distinct pure states $\psi_a^\lambda$ and $\psi_b^\lambda$: 
\begin{equation*}
a\neq b,\quad\hat{\Gamma}^{\lambda}(\Phi_{AB}^{\otimes n(\lambda)})=:\sigma^\lambda \in \mathcal{B}_{\epsilon(\lambda)}(\psi^{\lambda}_a)\cap\mathcal{B}_{\epsilon(\lambda)}(\psi^{\lambda}_b).
\end{equation*}
Applying the triangle and the Fuchs–van de Graaf inequality in Proposition \ref{prop: fidelity and one norm}, we obtain:
    \begin{equation*}
        \frac{1}{2}\left\|\psi^{\lambda}_a -\psi^{\lambda}_b\right\|_1\leq \frac{1}{2}\left\| \psi^{\lambda}_a-\sigma^{\lambda}\right\|_1+\frac{1}{2}\left\| \psi^{\lambda}_b -\sigma^\lambda\right\|_1\leq2\varepsilon(\lambda),
    \end{equation*}
    where we have used that for $i\in\{a,b\}$: 
\begin{equation*}
            \frac{1}{2}\left\|\psi_i^{\lambda}- \sigma^\lambda\right\|_1\leq \sqrt{1-F(\sigma^\lambda, \psi_i^{\lambda})} \leq \varepsilon(\lambda).
\end{equation*}
On the other hand, the distance between $\psi^{\lambda}_a$ and $\psi^{\lambda}_b$ is lower bounded by construction: 
    \begin{equation*}
        \frac{1}{2}\left\|\psi^{\lambda}_a- \psi^{\lambda}_b\right\|_1\geq1-\sqrt{F\left(\psi^{\lambda}_a,\psi^{\lambda}_b\right)}\geq\eta.
    \end{equation*}
    By combining the upper and the lower bound
    \begin{equation*}
        \eta\leq\frac{1}{2}\left\|\psi_a^{\lambda}-\psi_b^{\lambda}\right\|_1\leq2\epsilon(\lambda),
    \end{equation*}
    which should hold for all $\eta$ and $\epsilon$ in $(0,1)$.

    Choosing $\eta\in(0,1)$ to be greater than $2\varepsilon(\lambda)$ yields a contradiction with the assumption that $S^\lambda$ is dilutable with precision $\varepsilon(\lambda)$, hence the proof of the theorem.

\end{proof}
\begin{corollary}\label{corr: uniform cost inv}
    Let the family of encoded states be $\{k,\rho^k_{AB}\}_{k\in\{0,1\}^{\kappa(\lambda)}}$. Consider an efficient family of encoded local unitaries $\{k,U^k_{AB}:=U^k_A\otimes U^k_B\}_{k\in\{0,1\}^{\kappa(\lambda)}}$. Then the uniform entanglement cost remains invariant: 
    \begin{equation*}
        \hat E^{\epsilon}_C(\{k,\rho^k_{AB}\})\leq n(\lambda)\iff\hat E^{\epsilon}_C(\{k,U^k_{AB}\,\rho^{k}_{AB}\,U^{k\dagger}_{AB}\})\leq n(\lambda),
    \end{equation*}
    where $n(\lambda)$ is a valid upper bound on $\hat{E}_C^\varepsilon(\{\rho^\lambda_{AB}\})$. For generic families of local unitaries, the uniform computational entanglement cost is not invariant.
\end{corollary}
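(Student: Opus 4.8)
The plan is to transport the proof of Theorem~\ref{th:comp one shot cost inva uni} to the keyed setting essentially verbatim: in the uniform picture the efficient LOCC family carries the whole argument and the key $k$ is merely passed along, so the corollary should reduce to bookkeeping on top of the one-shot statement.

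For the forward implication I would start from $\hat E^{\epsilon}_C(\{k,\rho^k_{AB}\})\leq n(\lambda)$, fix an efficient family of encoded LOCC channels $\{\hat\Gamma^\lambda\}$ witnessing it (so that $F(\hat\Gamma^\lambda(k,\Phi^{\otimes n(\lambda)}),\rho^k_{AB})\geq 1-\epsilon(\lambda)$ for every $\lambda$ and every $k\in\{0,1\}^{\kappa(\lambda)}$), and then construct a new family $\{\hat{\Gamma}'^{\lambda}\}$ that, on input $\ketbra{k_{A'}}{k_{A'}}\otimes\Phi^{\otimes n(\lambda)}\otimes\ketbra{k_{B'}}{k_{B'}}$, first duplicates the two classical key registers (copying computational-basis strings is a free local operation), runs $\hat\Gamma^\lambda$ on the working registers, and finally applies the encoded local unitary $U^k_{AB}=U^k_A\otimes U^k_B$ controlled on the retained key copies. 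Since $\{k,U^k_{AB}\}$ is efficient each $U^k_{AB}$ has a $\operatorname{poly}(\lambda)$-size circuit, so concatenating it with the $\operatorname{poly}(\lambda)$-size circuit of $\hat\Gamma^\lambda$ and the $\kappa(\lambda)$ copying gates keeps the total circuit size polynomial; hence $\{\hat{\Gamma}'^{\lambda}\}$ is an efficient family of encoded LOCC channels (local unitaries are LOCC). Invoking the unitary invariance of the fidelity (Proposition~\ref{prop: unitary inv fidelity}) for each fixed $k$ then gives
\[
F\!\left(\hat{\Gamma}'^{\lambda}(k,\Phi^{\otimes n(\lambda)}),\,U^k_{AB}\rho^k_{AB}U^{k\dagger}_{AB}\right)=F\!\left(\hat\Gamma^\lambda(k,\Phi^{\otimes n(\lambda)}),\,\rho^k_{AB}\right)\geq 1-\epsilon(\lambda),
\]
so $n(\lambda)$ remains a valid upper bound for the rotated family. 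The converse implication follows by the symmetric construction using $\{k,U^{k\dagger}_{AB}\}$, which is again efficient because the inverse of a $\operatorname{poly}(\lambda)$-size circuit has $\operatorname{poly}(\lambda)$ size; together these give the stated equivalence.

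For the non-invariance part I would observe that a one-shot family is the special case of an encoded family with trivial key register and that an efficient one-shot LOCC family is in particular an efficient encoded one, so the counterexample built in Theorem~\ref{th:comp one shot cost inva uni} from the $\eta$-separated set $S^\lambda$ and its associated inefficient local unitaries already lives in the uniform setting. Alternatively, the same cardinality argument applies unchanged: Lemma~\ref{lemma:cardinality comparison LOCC} bounds $|\hat\Gamma^\lambda|/|S^\lambda|$ by $(1/\eta)^{-\Omega(2^{2n(\lambda)})}$, so a single efficient encoded LOCC channel would be forced to dilute two distinct $\psi^\lambda_a,\psi^\lambda_b\in S^\lambda$ to within $\epsilon(\lambda)$, and then the triangle inequality together with the Fuchs--van de Graaf inequalities (Proposition~\ref{prop: fidelity and one norm}) yields $\eta\leq\tfrac12\|\psi^\lambda_a-\psi^\lambda_b\|_1\leq 2\epsilon(\lambda)$, contradicting any choice $\eta>2\epsilon(\lambda)$.

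The only genuinely new point relative to the one-shot theorem --- and the place where a little care is needed --- is turning the composition ``run $\hat\Gamma^\lambda$, then apply $U^k_{AB}$'' into a legitimate \emph{encoded} LOCC channel: the controlled unitary must still be able to read $k$ even though $\hat\Gamma^\lambda$ may have overwritten the key registers, which is exactly why they are duplicated at the start, and one must check that efficiency is closed under this composition and under circuit inversion. All of this is elementary, so I do not expect an essential obstacle; the corollary is a componentwise lift of Theorem~\ref{th:comp one shot cost inva uni}.
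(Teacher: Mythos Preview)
Your proposal is correct and matches the paper's approach: the paper presents Corollary~\ref{corr: uniform cost inv} without a separate proof, explicitly stating that it extends Theorem~\ref{th:comp one shot cost inva uni} to the keyed setting ``in a straightforward component-wise manner,'' which is precisely the bookkeeping lift you carry out. Your additional care about duplicating the key registers before running $\hat\Gamma^\lambda$ so that the controlled $U^k_{AB}$ can still read $k$ is a nice detail the paper leaves implicit.
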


We now move on to the case of computational distillable entanglement. We first observe
that it is also invariant under the action of efficient families of local unitaries in the sense of Definition \ref{def: comp distillable local unitary invariance}. The argument is
analogous to the one in Theorem  \ref{th:comp one shot cost inva uni}. In the following we argue that computational distillable
entanglement is invariant under the action of efficient families of local unitaries. However, it can
change under the action of non-efficient families of local unitaries.

\begin{theorem}\label{th: local dist unita invar epsilon diff zero}
Let the family of bipartite states  $\{\rho^{\lambda}_{AB}\}_{\lambda\in\mathbb N^+}$ and the family of local unitaries be $\{U^{\lambda}_{AB}\}$. 
Then
     \begin{equation*}
         \hat{E}^{\epsilon}_D(\{\rho^{\lambda}_{AB}\})\geq m(\lambda)\iff\hat{E}^{\epsilon}_D(\{U^{\lambda}_{AB}\rho^{\lambda}_{AB}U^{\lambda\dagger}_{AB}\}\geq m(\lambda),
     \end{equation*}
     only holds if the family $\{U^{\lambda}_{AB}\}_{\lambda\in\mathbb N_+}$ is an efficient family of local unitaries. If $\epsilon:\mathbb N_+\rightarrow[0,1)$ is such that $m(\lambda)\sqrt{\varepsilon(\lambda)}\rightarrow0$ as $ \lambda\rightarrow+\infty$, then for generic families of local unitaries the  computational distillable entanglement is not invariant.
 \end{theorem}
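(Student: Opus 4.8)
The plan is to follow the two-part structure of the proof of Theorem~\ref{th:comp one shot cost inva uni}: prove invariance when the unitary family is efficient by a conjugation argument, and prove the failure of invariance for generic families by a cardinality argument built on the set $S^{\lambda}$.

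For the efficient direction, suppose $\hat E^{\epsilon}_D(\{\rho^{\lambda}_{AB}\})\ge m(\lambda)$, witnessed by an efficient family $\{\hat\Gamma^{\lambda}\}$ with $F\big(\hat\Gamma^{\lambda}(\rho^{\lambda}_{AB}),\Phi^{\otimes m(\lambda)}\big)\ge 1-\epsilon(\lambda)$. I would take the protocol that first undoes the local unitary and then runs $\hat\Gamma^{\lambda}$, i.e.
\[
\Gamma'^{\lambda}(\cdot):=\hat\Gamma^{\lambda}\big(U^{\lambda\dagger}_{AB}(\cdot)\,U^{\lambda}_{AB}\big).
\]
Since $U^{\lambda}_{AB}=U^{\lambda}_A\otimes U^{\lambda}_B$ is local, $\Gamma'^{\lambda}$ is again an LOCC channel, and since $U^{\lambda}_{AB}$---hence also $U^{\lambda\dagger}_{AB}$, by reversing the circuit---admits a $\operatorname{poly}(\lambda)$-size circuit, prepending it to the circuit of $\hat\Gamma^{\lambda}$ keeps the total size polynomial, so $\{\Gamma'^{\lambda}\}$ is efficient. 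Because $\Gamma'^{\lambda}\big(U^{\lambda}_{AB}\rho^{\lambda}_{AB}U^{\lambda\dagger}_{AB}\big)=\hat\Gamma^{\lambda}(\rho^{\lambda}_{AB})$ identically, the distillation fidelity is unchanged, hence $\hat E^{\epsilon}_D(\{U^{\lambda}_{AB}\rho^{\lambda}_{AB}U^{\lambda\dagger}_{AB}\})\ge m(\lambda)$; the converse follows by running the same argument with the efficient family $\{U^{\lambda\dagger}_{AB}\}$.

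For the generic direction I would argue by contradiction, reusing the construction $S^{\lambda}$ with the distillation parameter $m(\lambda)$ in place of $n(\lambda)$. The family $\{\Phi^{\otimes m(\lambda)}\}$ is trivially efficiently $m(\lambda)$-distillable with zero error via the identity protocol, and each $\Phi_U^{\lambda}\in S^{\lambda}$ equals the generic local-unitary image $(\mathbb I\otimes U)\Phi^{\otimes m(\lambda)}(\mathbb I\otimes U^{\dagger})$; so if $\hat E^{\epsilon}_D$ were invariant under every family of local unitaries, each $\Phi_U^{\lambda}$ would be efficiently $m(\lambda)$-distillable with error $\epsilon(\lambda)$. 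By Lemma~\ref{lemma:cardinality comparison LOCC} (applied with $n\equiv m$), for all large $\lambda$ there are strictly more states in $S^{\lambda}$ than efficient LOCC channels on $2m(\lambda)$ qubits, so some single efficient channel $\hat\Gamma^{\lambda}$ distills two distinct states $\psi^{\lambda}_a\neq\psi^{\lambda}_b\in S^{\lambda}$ to error $\le\epsilon(\lambda)$. To turn this into a contradiction I would use the LOCC constraint on $\hat\Gamma^{\lambda}$: its image $\hat\Gamma^{\lambda}(\mathbb I/2^{2m(\lambda)})$ is separable and therefore has fidelity at most $2^{-m(\lambda)}$ with $\Phi^{\otimes m(\lambda)}$, which forces the effect $M^{\lambda}$ determined by $\operatorname{Tr}[M^{\lambda}\sigma]=F(\hat\Gamma^{\lambda}(\sigma),\Phi^{\otimes m(\lambda)})$ (so $0\le M^{\lambda}\le\mathbb I$) to satisfy $\operatorname{Tr}M^{\lambda}\le 2^{m(\lambda)}$. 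Hence the span of the eigenvectors of $M^{\lambda}$ with eigenvalue at least $1-\sqrt{\epsilon(\lambda)}$ has dimension $O(2^{m(\lambda)})$, and $\langle\psi^{\lambda}_i|M^{\lambda}|\psi^{\lambda}_i\rangle\ge 1-\epsilon(\lambda)$ forces both $|\psi^{\lambda}_a\rangle$ and $|\psi^{\lambda}_b\rangle$ to lie within $O(\epsilon(\lambda)^{1/4})$ of that subspace. A packing estimate of the type in Proposition~\ref{prp: min card eta net} then bounds by $(1/\eta)^{O(2^{m(\lambda)})}$ the number of states with pairwise overlap at most $1-\eta$ that can approximately lie in a single such subspace, and multiplying by the $2^{\operatorname{Poly}(\lambda)}$ efficient channels still leaves strictly fewer states than $|S^{\lambda}|\ge(1/\eta)^{\Omega(2^{2m(\lambda)})}$, a contradiction. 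Therefore some $\Phi_U^{\lambda}\in S^{\lambda}$ is not efficiently $m(\lambda)$-distillable, and the necessarily inefficient family $\{U^{\lambda}\}$ producing it witnesses the failure of invariance.

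The step I expect to be the main obstacle is precisely this last contradiction. In the entanglement-cost case of Theorem~\ref{th:comp one shot cost inva uni}, two far-apart target states sharing one dilution protocol collide at the protocol's unique output, so the triangle inequality together with the Fuchs--van de Graaf bound (Proposition~\ref{prop: fidelity and one norm}) closes the argument at once; for distillation the collision is instead at the common target $\Phi^{\otimes m(\lambda)}$ while the inputs are distinct, so no such shortcut exists. Rigidity must instead be extracted from the fact that an LOCC channel cannot raise the fidelity between the (separable) maximally mixed state and a maximally entangled state above $2^{-m(\lambda)}$, converted into the dimension bound on the top eigenspace of $M^{\lambda}$, after which the approximation errors must be controlled against the fixed separation $\eta$. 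This is exactly where the hypothesis $m(\lambda)\sqrt{\varepsilon(\lambda)}\to 0$ enters: it forces $\varepsilon(\lambda)\to 0$, so that $\varepsilon(\lambda)^{1/4}\ll\eta$ eventually, and it dovetails---via Proposition~\ref{prop: distillation and squashed}---with the fact that a state of vanishing squashed entanglement cannot be $m(\lambda)$-distillable at error $\varepsilon(\lambda)$ for large $\lambda$, which can be invoked to dispose of degenerate regimes.
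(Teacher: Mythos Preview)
Your efficient direction coincides with the paper's. For the non-invariance part your route is genuinely different. The paper does not carry out the global packing count you sketch; instead, once the pigeonhole step produces a single efficient channel $\hat\Gamma^\lambda$ distilling two distinct $\psi_1^\lambda,\psi_2^\lambda\in S^\lambda$, it passes to their uniform mixture $\psi^\lambda=\tfrac12\psi_1^\lambda+\tfrac12\psi_2^\lambda$, which is then also $m(\lambda)$-distillable with error $\varepsilon(\lambda)$ and hence satisfies $E_D^\varepsilon(\psi^\lambda)\ge m(\lambda)$ for the information-theoretic quantity. Proposition~\ref{prop: distillation and squashed} now gives $m(\lambda)\le (1-\sqrt{\varepsilon(\lambda)})^{-1}\bigl(E_{sq}(\psi^\lambda)+g_2(\sqrt{\varepsilon(\lambda)})\bigr)$, and taking the trivial extension in Definition~\ref{def:squashed entanglement} yields $E_{sq}(\psi^\lambda)\le m(\lambda)-\tfrac12 H(AB)_{\psi^\lambda}$; since $|\langle\psi_1^\lambda|\psi_2^\lambda\rangle|\le 1-\eta$, the entropy $H(AB)_{\psi^\lambda}$ is bounded below by an explicit $H^*[\eta]>0$. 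Rearranging forces $H^*[\eta]\le 2\bigl(g_2(\sqrt{\varepsilon(\lambda)})+m(\lambda)\sqrt{\varepsilon(\lambda)}\bigr)$, and the hypothesis $m(\lambda)\sqrt{\varepsilon(\lambda)}\to 0$ is precisely what drives the right-hand side to zero, giving the contradiction for a suitable $\eta$. Your effect-$M^\lambda$/low-rank packing argument is a legitimate alternative and is more elementary in that it avoids squashed entanglement altogether, but it needs a sphere-packing bound in $\mathbb C^d$ rather than Proposition~\ref{prp: min card eta net} verbatim, and it exploits the hypothesis only through the weaker consequence $\varepsilon(\lambda)\to 0$; the paper's entropic argument, by contrast, uses nothing beyond the two-state collision and matches the stated hypothesis $m(\lambda)\sqrt{\varepsilon(\lambda)}\to 0$ on the nose.
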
 
 \begin{proof}
For the generic families of unitaries, we use a proof by contradiction. Let us assume that the states from $S^\lambda$ can be distilled to $m(\lambda)$ EPR pairs. Recall that the cardinality of $S^\lambda$ is greater than the cardinality of the set of efficient LOCC channels on $2m(\lambda)$ qubits:
    \begin{equation*}
            \frac{|\hat{\Gamma}^\lambda|}{|S^\lambda|} \leq \left(\frac{1}{\eta}\right)^{-\Omega(2^{2m(\lambda)})}.
        \end{equation*}
This implies that there exists an efficient LOCC channel that distills two states from $S^\lambda$:
     \begin{equation*} \hat{\Gamma}^\lambda(\psi^\lambda_1)\in\mathcal B_{\varepsilon(\lambda)}(\Phi^{\otimes m(\lambda)})\quad\text{and}\quad\hat{\Gamma}^\lambda(\psi^\lambda_2)\in\mathcal B_{\varepsilon(\lambda)}(\Phi^{\otimes m(\lambda)}).
     \end{equation*}
     Therefore 
 the map $\hat{\Gamma}^\lambda$ distills with precision $\varepsilon(\lambda)$ uniform mixture of those states:
     \begin{equation*}
         \hat{\Gamma}^\lambda(\psi^\lambda) \in \mathcal{B}_{\varepsilon(\lambda)}(\Phi^{\otimes m(\lambda)})\quad\text{with}\quad\psi^{\lambda}:=\frac{1}{2}\psi^{\lambda}_1+\frac{1}{2}\psi^{\lambda}_2.
     \end{equation*}    
     By assumption, the valid lower bounds on the 
 number of EPR pairs one can extract from $\psi^{\lambda}$ is $m(\lambda)$, therefore, it should also be a valid lower bound for the informational one-shot distillable entanglement, hence: 
 \begin{equation*}
     \hat E^\varepsilon_D(\{\psi^\lambda\}) \geq m(\lambda) \implies E^\varepsilon_D(\psi^\lambda)\geq m(\lambda),\,\forall\lambda\in\mathbb N_+.
 \end{equation*}
 We have the following chain of inequalities:
     \begin{align}
         m(\lambda)&\leq E^\varepsilon_D(\psi^\lambda)\\
         &\leq\frac{1}{1-\sqrt{\varepsilon(\lambda)}}\left(E_{sq}(\psi^\lambda) + g_2\left(\sqrt{\varepsilon(\lambda)}\right)\right) \\
         &\leq \frac{1}{1-\sqrt{\varepsilon(\lambda)}}\left(m(\lambda)- \frac{1}{2}H(AB)_{\psi^\lambda}+g_2\left(\sqrt{\varepsilon(\lambda)}\right)\right),\label{eq:mlupperbound}
     \end{align}
where the second inequality follows from Proposition \ref{prop: distillation and squashed}. The last inequality follows by taking the trivial extension in Definition \ref{def:squashed entanglement}. 
 
Computation shows that $H(AB)_{\psi^\lambda} \equiv H(AB)_{\psi^\lambda}[x]$ is given by:
\begin{equation*}
        H(AB)_{\psi^\lambda}[x] := 1-\frac{1}{2}(1-x)\operatorname{log}_2(1-x)-\frac{1}{2}(1+x)\operatorname{log}_2(1+x),
\end{equation*}
which is strictly decreasing on the interval $x\in(0,1)$ with $x:=|\langle\psi_1^{\lambda}|\psi_2^{\lambda}\rangle|\in[0,1)$.
   
Recall that the overlap between two states in $S^\lambda$ is upper bounded by $1-\eta$ by construction. 
Hence we have the following lower bound
\begin{equation*}
        H(AB)_{\psi^\lambda}[x]\geq1-\frac{1}{2}\eta\operatorname{log}_2\eta-\frac{1}{2}(2-\eta)\operatorname{log}_2(2-\eta)\equiv H^*(AB)[\eta].
\end{equation*}
Therefore we can further bound from above $m(\lambda)$:
     \begin{equation*}
     m(\lambda)\leq \frac{1}{1-\sqrt{\varepsilon(\lambda)}}\left(m(\lambda) -\frac{1}{2}H^*(AB)[\eta]+g_2\left(\sqrt{\varepsilon(\lambda)}\right)\right),
     \end{equation*}
where with the right choice of $\eta$, the right-hand side of the inequality above can be strictly smaller than $m(\lambda)$, hence the contradiction.
Explicitly, it suffices to choose $\eta\in[0,1)$ such that the following inequality holds:
     \begin{equation*}
          H^*(AB)[\eta] > 2\left(g_2\left(\sqrt{\varepsilon(\lambda)}\right) + m(\lambda)\sqrt{\varepsilon(\lambda)}\right).
     \end{equation*}
From the definition of $g_2(\delta)$ one can see that $g_2(\delta) \rightarrow0,\,\delta\rightarrow +0$. Therefore, since by assumption, $m(\lambda)\sqrt{\varepsilon(\lambda)}\rightarrow0,\,\lambda\rightarrow+\infty$, one can pick $\eta\in[0,1)$ such that the above inequality is satisfied, hence the proof of the theorem.
 \end{proof}
 \begin{corollary}\label{corr: unif distil inva}
         Let the family of encoded states $\{k,\rho^k_{AB}\}_{k\in\{0,1\}^{\kappa(\lambda)}}$. Suppose that $m(\lambda)$ is a valid lower bound on $\hat{E}_D^\varepsilon(\{\rho^k_{AB}\})$. Consider the efficient family of encoded local unitaries $\{k,U^k_{AB}\}_{k\in\{0,1\}^{\kappa(\lambda)}}$. Then the uniform distillable entanglement remains invariant: 
    \begin{equation*}
        \hat E^{\epsilon}_D(\{k,\rho^k_{AB}\})\geq m(\lambda)\iff\hat E^{\epsilon}_D(\{k,U^k_{AB}\,\rho^{k}_{AB}\,U^{k\dagger}_{AB}\})\geq m(\lambda).
    \end{equation*}
    If $\epsilon:\mathbb N_+\rightarrow[0,1)$ is such that $m(\lambda)\sqrt{\varepsilon(\lambda)}\rightarrow0, \lambda\rightarrow+\infty$, then for  generic families of local unitaries the uniform computational distillation is not invariant.
\end{corollary}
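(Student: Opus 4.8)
The plan is to transfer the one-shot statement of Theorem~\ref{th: local dist unita invar epsilon diff zero} to the keyed setting, handling the invariance direction key-by-key and the non-invariance direction by reduction to the one-shot counterexample. For the positive (efficient) direction I would reuse the conjugation trick from Theorem~\ref{th:comp one shot cost inva uni}: starting from an efficient family of encoded LOCC channels $\{\hat\Gamma^\lambda\}$ witnessing $\hat E^{\epsilon}_D(\{k,\rho^k_{AB}\}) \geq m(\lambda)$, I would prepend to each $\hat\Gamma^\lambda$ the operation ``read the key $k$ from the registers $A'$, $B'$ and apply $U^{k\dagger}_{AB}$ to the input''. Because $\{k,U^k_{AB}\}$ is efficient, each such block has a $\operatorname{poly}(\lambda)$-gate circuit, so the composed family $\{\Gamma'^{\lambda}\}$ is still efficient; by unitary invariance of fidelity (Proposition~\ref{prop: unitary inv fidelity}) one obtains, for every key $k \in \{0,1\}^{\kappa(\lambda)}$,
\[
F\bigl(\Gamma'^{\lambda}(k, U^k_{AB}\rho^k_{AB}U^{k\dagger}_{AB}), \Phi^{\otimes m(\lambda)}\bigr) = F\bigl(\hat\Gamma^\lambda(k, \rho^k_{AB}), \Phi^{\otimes m(\lambda)}\bigr) \geq 1 - \epsilon(\lambda),
\]
so $m(\lambda)$ remains a valid lower bound after conjugation. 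The reverse implication is the same argument applied to $\{k,U^{k\dagger}_{AB}\}$, giving the ``$\iff$''.

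For the non-invariance part I would exploit that the uniform definition strictly generalizes the one-shot one: a state family that is constant in the key, handled by an encoded LOCC channel ignoring the key registers, satisfies $\hat E^{\epsilon}_D(\{k,\rho^\lambda\}) = \hat E^{\epsilon}_D(\{\rho^\lambda\})$, and likewise a constant-in-$k$ family of local unitaries is a legitimate generic encoded family. Hence the one-shot witness of Theorem~\ref{th: local dist unita invar epsilon diff zero} is already a uniform counterexample: take $\rho^\lambda := \Phi^{\otimes m(\lambda)}$, distillable at rate $m(\lambda)$ with zero error, and a generic local unitary $\mathbb I \otimes U^\lambda$ sending it to a state of $S^\lambda$ which, by the cardinality bound $|S^\lambda| \geq (1/\eta)^{\Omega(2^{2m(\lambda)})}$, Lemma~\ref{lemma:cardinality comparison LOCC}, and the squashed-entanglement chain of inequalities (Proposition~\ref{prop: distillation and squashed}), cannot be distilled at rate $m(\lambda)$ by any efficient LOCC once $m(\lambda)\sqrt{\epsilon(\lambda)} \to 0$. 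If one prefers a counterexample that genuinely uses the key, one instead encodes the distinct states of $S^\lambda$ under distinct keys and reruns the pigeonhole-plus-squashed-entanglement argument of Theorem~\ref{th: local dist unita invar epsilon diff zero} verbatim.

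The step I expect to be the main obstacle is verifying efficiency of the conjugation in the first direction: one must be sure that ``conditioned on the classical key $k$, undo $U^k_{AB}$'' admits a polynomial-size circuit. This is exactly what the definition of an efficient encoded unitary family buys — each $U^k_{AB}$ has a $\operatorname{poly}(\lambda)$-gate circuit, and the key already lives in the classical communication/ancilla register format that Definition~\ref{def: definition of circuit representation of LOCC} manipulates, so prepending the controlled-$U^{k\dagger}$ block keeps the total gate count polynomial. Everything else is bookkeeping: checking that the error estimate is uniform over all $k$, and that the reduction in the non-invariance direction carries over the hypothesis $m(\lambda)\sqrt{\epsilon(\lambda)}\to 0$ unchanged.
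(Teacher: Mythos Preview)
Your proposal is correct and matches the paper's approach: the paper gives no explicit proof for this corollary, stating only that it ``extends all results to the uniform (keyed) scenario in a straightforward component-wise manner,'' which is precisely your key-by-key conjugation argument for the efficient direction and reduction-to-one-shot for the non-invariance direction. Your write-up is in fact more detailed than what the paper supplies.
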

\subsection{LOCC Monotonicity}\label{subsec: LOCC monotnicity}
In this subsection, we analyze the monotonicity properties of computational entanglement measures under LOCC channels in the sense of Definition \ref{def: locc monotone}.
\begin{theorem}\label{thm: locc not monotone comp ent cost}
    Let $\{\rho_{AB}^\lambda\}_{\lambda\in\mathbb N_+}$ be a family of states  and $\{\Lambda^\lambda\}_{\lambda\in\mathbb N_+}$ -- an efficient family of LOCC channels, the following holds
    \begin{equation*}
       \hat E^{\varepsilon}_C(\{\rho_{AB}^\lambda\}) \leq n(\lambda)\implies\hat E^{\varepsilon}_C(\{\Lambda^\lambda(\rho^\lambda_{AB})\}) \leq n(\lambda).
    \end{equation*}
    However, the one-shot computational cost is not LOCC monotonous under the generic families of LOCC maps.
\end{theorem}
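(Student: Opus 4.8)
The plan is to split the statement into its two halves and handle them separately. For the positive direction (efficient LOCC monotonicity), I would argue by composition: suppose $n(\lambda)$ is a valid upper bound on $\hat E_C^\varepsilon(\{\rho_{AB}^\lambda\})$, witnessed by an efficient family of dilution channels $\{\hat\Gamma^\lambda\}$ with $F(\hat\Gamma^\lambda(\Phi^{\otimes n(\lambda)}),\rho^\lambda_{AB})\geq 1-\varepsilon(\lambda)$. Then define the new family of dilution channels $\{\Lambda^\lambda\circ\hat\Gamma^\lambda\}$. Since both $\{\Lambda^\lambda\}$ and $\{\hat\Gamma^\lambda\}$ have circuit descriptions of at most polynomial size, their composition also has a circuit description whose size is the sum of the two polynomials, hence still polynomial — so the composed family is efficient. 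Data-processing for fidelity (Proposition \ref{prop: data-processing fidelity}) applied to the channel $\Lambda^\lambda$ gives
\[
F\bigl((\Lambda^\lambda\circ\hat\Gamma^\lambda)(\Phi^{\otimes n(\lambda)}),\,\Lambda^\lambda(\rho^\lambda_{AB})\bigr)\geq F\bigl(\hat\Gamma^\lambda(\Phi^{\otimes n(\lambda)}),\,\rho^\lambda_{AB}\bigr)\geq 1-\varepsilon(\lambda),
\]
so $n(\lambda)$ is a valid upper bound on $\hat E_C^\varepsilon(\{\Lambda^\lambda(\rho^\lambda_{AB})\})$, which is exactly the claimed implication.

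For the negative direction (failure under generic LOCC), the plan is to reuse the set $S^\lambda$ and the counting argument from the proof of Theorem \ref{th:comp one shot cost inva uni}. I would take the target family to be a fixed separable (or otherwise cheaply-dilutable) family $\{\tau^\lambda\}$, which manifestly has $\hat E_C^\varepsilon(\{\tau^\lambda\})\leq n(\lambda)$ with a trivial efficient protocol, and then exhibit a (necessarily inefficient) family of LOCC channels $\{\Lambda^\lambda\}$ that maps $\tau^\lambda$ onto a state in $S^\lambda$. Since the whole of $S^\lambda$ cannot be diluted at cost $n(\lambda)$ by any efficient family — this is precisely the contradiction derived in Theorem \ref{th:comp one shot cost inva uni}, using $|\hat\Gamma^\lambda|/|S^\lambda|\le(1/\eta)^{-\Omega(2^{2n(\lambda)})}$ from Lemma \ref{lemma:cardinality comparison LOCC} together with the $\varepsilon$-ball separation $\eta\le\tfrac12\|\psi_a^\lambda-\psi_b^\lambda\|_1\le 2\varepsilon(\lambda)$ — the image family $\{\Lambda^\lambda(\tau^\lambda)\}$ fails to satisfy $\hat E_C^\varepsilon(\cdot)\le n(\lambda)$. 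Hence monotonicity under arbitrary LOCC fails while the hypothesis $\hat E_C^\varepsilon(\{\tau^\lambda\})\le n(\lambda)$ holds, giving the desired counterexample.

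The main obstacle I anticipate is the negative direction, and specifically making precise the claim ``there exists a family of LOCC channels mapping a cheap family onto $S^\lambda$.'' One clean way is to note that each $\psi^\lambda_a\in S^\lambda$ is pure and efficiently-describable only relative to an unbounded index $a$, so one can let Alice and Bob locally prepare $\psi^\lambda_a$ from scratch (discarding the input and creating the fixed pure state) — a legitimate LOCC operation of unbounded circuit size — thereby realizing any desired target in $S^\lambda$. The subtlety is ensuring this is phrased as a genuine LOCC channel on the correct bipartition and that the resulting family's non-dilutability really follows from the earlier lemma rather than from an independent argument; I would lean on the already-established contradiction in Theorem \ref{th:comp one shot cost inva uni} verbatim, since the construction of $S^\lambda$ and the counting bound are identical. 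The positive direction is routine once one observes circuit sizes add under composition and invokes data processing.
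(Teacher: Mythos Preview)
Your positive direction is correct and matches the paper's argument exactly: compose $\Lambda^\lambda\circ\hat\Gamma^\lambda$, observe efficiency is preserved under composition, and invoke data processing for fidelity.

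Your negative direction, however, has a genuine gap. The states $\psi^\lambda_a\in S^\lambda$ are of the form $(I\otimes U)\Phi^{\otimes m(\lambda)}(I\otimes U^\dagger)$ and are therefore \emph{maximally entangled}. Your proposed ``one clean way'' --- having Alice and Bob discard their input and locally prepare $\psi^\lambda_a$ from scratch --- is impossible: LOCC cannot create entanglement, so no LOCC channel (efficient or not) maps a separable $\tau^\lambda$ onto any element of $S^\lambda$. The same obstruction rules out any cheaply-dilutable $\tau^\lambda$ whose entanglement is strictly below $m(\lambda)$.

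The fix is simpler than your proposal and is exactly what the paper does: local unitary channels are already LOCC channels, so the failure of LU invariance established in Theorem~\ref{th:comp one shot cost inva uni} \emph{is} a counterexample to LOCC monotonicity. Concretely, take $\tau^\lambda:=\Phi^{\otimes n(\lambda)}$ (trivially dilutable with the identity channel, so $\hat E_C^\varepsilon(\{\tau^\lambda\})\le n(\lambda)$) and let $\Lambda^\lambda(\cdot):=(I\otimes U^\lambda)(\cdot)(I\otimes U^{\lambda\dagger})$ with $U^\lambda\in\mathcal U^\lambda$. Then $\Lambda^\lambda(\tau^\lambda)\in S^\lambda$, and Theorem~\ref{th:comp one shot cost inva uni} already shows this family cannot be diluted at cost $n(\lambda)$ with precision $\varepsilon(\lambda)$ for suitable $\eta>2\varepsilon(\lambda)$. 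The paper simply cites Theorem~\ref{th:comp one shot cost inva uni} as a corollary; once you choose the right $\tau^\lambda$ and recognize that the needed $\Lambda^\lambda$ is the local unitary itself, no new argument is required.
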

\begin{proof}

For efficient families of LOCC channels, consider the efficient family of LOCC maps given by $\{\Lambda^{\lambda}\}$, and define the family of LOCC maps $\{\tilde \Gamma^\lambda\}_{\lambda\in\mathbb N_+}:=\{ \Lambda^\lambda\circ \hat \Gamma^\lambda\}_{\lambda\in\mathbb N_+}$, where $\{\hat \Gamma^\lambda\}_{\lambda\in\mathbb N_+}$ is the family that achieves $\hat E^\varepsilon_C(\{\rho_{AB}^\lambda\}) \leq n(\lambda)$. The family $\{\tilde{\Gamma}^{\lambda}\}$ is efficient since $\Lambda^\lambda\circ \hat \Gamma^\lambda$ admits efficient circuit representation for all values of $\lambda\in\mathbb N_+$. 

Using the family $\{\tilde{\Gamma}^\lambda\}$ we obtain the following chain of inequalities:
\begin{align*}
    \forall\,\lambda\in\mathbb N_+,\quad p_{\text{err}}(\tilde{\Gamma}^{\lambda}, \Lambda^\lambda(\rho_{AB}^\lambda)) &= 1 - F(\tilde{\Gamma}^{\lambda}(\Phi^{\otimes n(\lambda)}), \Lambda^\lambda(\rho_{AB}^\lambda)) \\
    &=1 - F(\Lambda^\lambda\circ \hat \Gamma^\lambda(\Phi^{\otimes n(\lambda)}), \Lambda^\lambda(\rho_{AB}^\lambda)) \\
    &\leq 1- F(\hat\Gamma^\lambda(\Phi^{\otimes n(\lambda)}), \rho^\lambda_{AB})=:p_{\text{err}}(\hat{\Gamma}^{\lambda},\rho^{\lambda}_{AB}),
\end{align*}
    
where the last inequality is obtained by using the data-processing inequality from Proposition \ref{prop: data-processing fidelity}
\begin{equation*}
    F(\Lambda^\lambda\circ \hat \Gamma^\lambda(\Phi^{\otimes n(\lambda)}), \Lambda^\lambda(\rho_{AB}^\lambda)) \geq F(\hat\Gamma^\lambda(\Phi^{\otimes n(\lambda)}), \rho^\lambda_{AB}).
\end{equation*}

From the assumptions where one needs $n(\lambda)$ EPR pairs to recover the family $\{\rho^{\lambda}_{AB}\}$, we get:
\begin{equation*}
    p_{\text{err}}(\tilde{\Gamma}^{\lambda}, \Lambda^\lambda(\rho_{AB}^\lambda))\leq p_{\text{err}}(\hat{\Gamma}^{\lambda},\rho^{\lambda}_{AB})\leq \epsilon(\lambda),
\end{equation*}
therefore by Definition \ref{def: comp entang cost vidick} we have 
\begin{equation*}
    \hat E^\varepsilon_C(\{\rho^\lambda_{AB}\}) \leq n(\lambda)\implies \hat E^\varepsilon_C(\{\Lambda^\lambda(\rho^\lambda_{AB})\}) \leq n(\lambda).
\end{equation*}

The result for the non-efficient families of LOCC follows as a simple corollary of Theorem \ref{th:comp one shot cost inva uni}.

\end{proof}
\begin{corollary}\label{cor: unif locc monotone cost}
     Let the family of encoded states be $\{k,\rho^k_{AB}\}_{k\in\{0,1\}^{\kappa(\lambda)}}$ and $\{k,  \Lambda^k\}_{k\in\{0,1\}^{\kappa(\lambda)}}$ -- an efficient family of encoded  LOCC channels. 
     Then, 
    \begin{equation*}
     \hat E^\varepsilon_C(\{k,\rho^k_{AB}\}) \leq n(\lambda)\implies  \hat E_C^\varepsilon(\{k,\Lambda^k(\rho^k_{AB})\}) \leq n(\lambda) .
    \end{equation*}
    However, the same relation does not hold for non-efficient families of encoded LOCC channels. 
\end{corollary}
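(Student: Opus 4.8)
The plan is to mirror the argument of Theorem \ref{thm: locc not monotone comp ent cost}, carried out component-wise over the keys $k\in\{0,1\}^{\kappa(\lambda)}$. First I would unpack the hypothesis: if $\hat E^{\varepsilon}_C(\{k,\rho^k_{AB}\}) \leq n(\lambda)$, then by Definition \ref{def: uniform comp cost vidick} there is an efficient family of encoded LOCC channels $\{\hat\Gamma^\lambda\}$ such that $p_{\text{err}}(\hat\Gamma^\lambda,k,\Phi^{\otimes n(\lambda)}) = 1 - F(\hat\Gamma^\lambda(k,\Phi^{\otimes n(\lambda)}),\rho^k_{AB}) \leq \epsilon(\lambda)$ for every $k$. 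I would then define the composed family $\{\tilde\Gamma^\lambda\}_{\lambda\in\mathbb N^+} := \{\Lambda^\lambda\circ\hat\Gamma^\lambda\}_{\lambda\in\mathbb N^+}$, understood in the encoded sense so that on key $k$ it acts as $\Lambda^k\circ\hat\Gamma^\lambda(k,\cdot)$. Since both $\{\Lambda^k\}$ and $\{\hat\Gamma^\lambda\}$ admit circuit representations of size $\operatorname{Poly}(\lambda)$, their composition does as well, so $\{\tilde\Gamma^\lambda\}$ is an efficient family of encoded LOCC channels.

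The key estimate is the data-processing inequality for fidelity (Proposition \ref{prop: data-processing fidelity}) applied with the channel $\Lambda^k$, which yields, for each $\lambda$ and each $k$,
\begin{align*}
p_{\text{err}}(\tilde\Gamma^\lambda,k,\Phi^{\otimes n(\lambda)})
&= 1 - F\!\left(\Lambda^k\!\circ\hat\Gamma^\lambda(k,\Phi^{\otimes n(\lambda)}),\,\Lambda^k(\rho^k_{AB})\right)\\
&\leq 1 - F\!\left(\hat\Gamma^\lambda(k,\Phi^{\otimes n(\lambda)}),\,\rho^k_{AB}\right)
= p_{\text{err}}(\hat\Gamma^\lambda,k,\rho^k_{AB}) \leq \epsilon(\lambda).
\end{align*}
Because this holds uniformly over all $k\in\{0,1\}^{\kappa(\lambda)}$, the function $n(\lambda)$ is a valid upper bound on $\hat E^\varepsilon_C(\{k,\Lambda^k(\rho^k_{AB})\})$ via the efficient family $\{\tilde\Gamma^\lambda\}$, which is precisely the claimed implication.

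For the negative statement, I would observe that an encoded family of local unitaries $\{k,U^k_{AB}\}$ is in particular an efficient-or-not encoded family of LOCC channels, so if uniform computational entanglement cost were LOCC monotone under arbitrary encoded LOCC families it would be invariant under arbitrary encoded families of local unitaries, contradicting the non-invariance part of Corollary \ref{corr: uniform cost inv}. The main bookkeeping obstacle is not analytic but notational: one must check that the composition $\Lambda^k\circ\hat\Gamma^\lambda$ respects the encoded register structure of Subsection \ref{subsec: uniform computational entanglement} — i.e.\ that the key register is carried through consistently and that efficiency of the composed circuit family holds for the same growth parameter $\lambda$ — after which the chain of inequalities above is identical to the one-shot case.
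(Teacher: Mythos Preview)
Your proposal is correct and follows exactly the approach the paper intends: the corollary is stated without proof because it is the component-wise extension of Theorem \ref{thm: locc not monotone comp ent cost} to the keyed setting, and your argument---composing the dilution map with $\Lambda^k$, invoking the data-processing inequality uniformly in $k$, and deducing the negative part from Corollary \ref{corr: uniform cost inv}---is precisely that extension.
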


We conclude this section by discussing LOCC monotonicity property for the computational distillable entanglement. 

\begin{theorem}\label{thm: locc not monotone comp dist ent}

    Let $\{\rho_{AB}^\lambda\}_{\lambda\in\mathbb N_+}$ be a family of bipartite states and $\{\Lambda^\lambda\}_{\lambda\in\mathbb N_+}$ -- an efficient family of LOCC channels. Then,
    \begin{equation*}
     \hat E_D^\varepsilon(\{\Lambda^\lambda(\rho^\lambda_{AB})\}) \geq m(\lambda) \implies \hat E^\varepsilon_D(\{\rho^\lambda_{AB}\}) \geq m(\lambda).
    \end{equation*}
   However, if  $\epsilon:\mathbb N_+\rightarrow[0,1)$ is such that $m(\lambda)\sqrt{\varepsilon(\lambda)}\rightarrow0, \lambda\rightarrow+\infty$, this does not hold for generic families of LOCC channels.
    
\end{theorem}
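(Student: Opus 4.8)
The plan is to prove the two halves of the statement separately: the implication for \emph{efficient} families of LOCC channels, and its failure for \emph{generic} families.

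For the efficient direction I would argue as in the proof of Theorem \ref{thm: locc not monotone comp ent cost}, but here the argument is even shorter since no data-processing step is required. Assume $\hat E_D^\varepsilon(\{\Lambda^\lambda(\rho^\lambda_{AB})\}) \geq m(\lambda)$ and let $\{\hat\Gamma^\lambda\}$ be an efficient family of LOCC channels witnessing this, so $1 - F(\hat\Gamma^\lambda(\Lambda^\lambda(\rho^\lambda_{AB})), \Phi^{\otimes m(\lambda)}) \leq \varepsilon(\lambda)$ for every $\lambda$. I would then set $\tilde\Gamma^\lambda := \hat\Gamma^\lambda \circ \Lambda^\lambda$. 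Since LOCC channels compose to LOCC channels and both $\Lambda^\lambda$ and $\hat\Gamma^\lambda$ admit circuit descriptions of size $\operatorname{Poly}(\lambda)$, so does their composition; hence $\{\tilde\Gamma^\lambda\}$ is an efficient family. Because $\tilde\Gamma^\lambda(\rho^\lambda_{AB}) = \hat\Gamma^\lambda(\Lambda^\lambda(\rho^\lambda_{AB}))$, the distillation error of $\tilde\Gamma^\lambda$ on $\rho^\lambda_{AB}$ is literally the same quantity, so $p_{\mathrm{err}}(\tilde\Gamma^\lambda,\rho^\lambda_{AB}) \leq \varepsilon(\lambda)$ and by Definition \ref{def: comp distil vidick} $m(\lambda)$ is a valid lower bound on $\hat E_D^\varepsilon(\{\rho^\lambda_{AB}\})$, which is the claimed implication.

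For the generic direction I would derive the failure from Theorem \ref{th: local dist unita invar epsilon diff zero}. The key observation is that conjugation by a bipartite local unitary $U^\lambda_{AB} = U^\lambda_A \otimes U^\lambda_B$, and likewise its inverse, is an LOCC channel — a purely local operation with no communication. Thus if $\hat E_D^\varepsilon$ were LOCC monotone for all families of LOCC channels in the sense of Definition \ref{def: locc monotone}, then applying the monotonicity both to the conjugation channel and to its inverse — using that the two compose to the identity — would force the equivalence $\hat E_D^\varepsilon(\{\rho^\lambda_{AB}\}) \geq m(\lambda) \iff \hat E_D^\varepsilon(\{U^\lambda_{AB}\rho^\lambda_{AB}U^{\lambda\dagger}_{AB}\}) \geq m(\lambda)$ for every family $\{U^\lambda_{AB}\}$, i.e. invariance under arbitrary families of local unitaries. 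Under the standing hypothesis $m(\lambda)\sqrt{\varepsilon(\lambda)}\to 0$ this is exactly what Theorem \ref{th: local dist unita invar epsilon diff zero} rules out. To make the counterexample explicit I would unpack the mechanism of that theorem: by the cardinality comparison of Lemma \ref{lemma:cardinality comparison LOCC} together with the squashed-entanglement bound of Proposition \ref{prop: distillation and squashed}, for all $\lambda \geq \lambda_0$ there is a state $\psi^\lambda_\ast = \mathbb I \otimes U^\lambda\,\ketbra{\phi^{\otimes m(\lambda)}}{\phi^{\otimes m(\lambda)}}\,\mathbb I \otimes U^{\lambda\dagger}$ in $S^\lambda$ that \emph{cannot} be distilled to $m(\lambda)$ EPR pairs by any efficient LOCC family. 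Taking $\rho^\lambda_{AB} := \psi^\lambda_\ast$ (and $\rho^\lambda_{AB}:=\Phi^{\otimes m(\lambda)}$ for the finitely many $\lambda < \lambda_0$) and letting $\Lambda^\lambda$ be conjugation by the local unitary $\mathbb I \otimes U^{\lambda\dagger}$, we get $\Lambda^\lambda(\rho^\lambda_{AB}) = \Phi^{\otimes m(\lambda)}$, so $\hat E_D^\varepsilon(\{\Lambda^\lambda(\rho^\lambda_{AB})\}) \geq m(\lambda)$ trivially, whereas $m(\lambda)$ is not a valid lower bound on $\hat E_D^\varepsilon(\{\rho^\lambda_{AB}\})$; this $\Lambda^\lambda$ is necessarily non-efficient, since otherwise the first part of the theorem would contradict the choice of $\psi^\lambda_\ast$. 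Hence the implication fails for this generic family.

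The efficient direction is routine. The real content of the generic direction is inherited from Theorem \ref{th: local dist unita invar epsilon diff zero}, so the only points demanding care are (i) verifying that local-unitary conjugation and its inverse both qualify as generic LOCC channels, so that two-sided LOCC monotonicity would indeed entail LU invariance, and (ii) turning the set-based statement of that theorem into a genuine one-shot family $\{\psi^\lambda_\ast\}$ for which $m(\lambda)$ provably fails to be a \emph{valid} lower bound — which entails treating the finitely many instances $\lambda < \lambda_0$ separately, since validity of a lower bound is required for all $\lambda \in \mathbb N^+$.
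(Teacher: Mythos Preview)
Your proposal is correct and follows essentially the same approach as the paper: compose $\hat\Gamma^\lambda\circ\Lambda^\lambda$ for the efficient direction, and for the generic direction take $\rho^\lambda\in S^\lambda$ of the form $\mathbb I\otimes U\,\Phi^{\otimes m(\lambda)}\,\mathbb I\otimes U^\dagger$, apply the inverse local-unitary conjugation as $\Lambda^\lambda$ to land on $\Phi^{\otimes m(\lambda)}$, and invoke Theorem~\ref{th: local dist unita invar epsilon diff zero} for the contradiction. Your write-up is in fact slightly more careful than the paper's --- you isolate a specific undistillable $\psi^\lambda_\ast$ rather than speaking of ``states $\rho^\lambda\in S^\lambda$'' generically, you handle the finitely many $\lambda<\lambda_0$, and you preface the construction with the clean abstract observation that two-sided LOCC monotonicity would force LU invariance --- but the underlying mechanism is identical.
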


\begin{proof}
    

    For the efficient family of LOCC channels, we use arguments similar to the ones used in Theorem \ref{thm: locc not monotone comp ent cost}. Let $\{\Lambda^{\lambda}\}_{\lambda\in\mathbb N^+}$ be an efficient family of LOCC maps and assume we can distill $m(\lambda)$ EPR pairs from $\{\Lambda^{\lambda}(\rho^{\lambda}_{AB})\}$, it is obvious for $\epsilon(\lambda)\in [0,1]$ that
    
    \begin{equation*}
        \forall\, \lambda\in\mathbb N_+,\quad p_{\text{err}}(\hat \Gamma^\lambda, \Lambda^\lambda(\rho^\lambda_{AB}))=1-F(\hat{\Gamma}^{\lambda}\circ\Lambda^{\lambda}(\rho^{\lambda}),\Phi^{\otimes m(\lambda)})=:p_{\text{err}}( \tilde{\Gamma}^\lambda, \rho^\lambda_{AB})\leq \varepsilon(\lambda).
    \end{equation*}
   Therefore, one can distill $m(\lambda)$ EPR pairs from $\{\rho^{\lambda}\}$ with the efficient family of LOCC maps $\{\tilde{\Gamma}^{\lambda}\}:=\{\hat{\Gamma}^{\lambda}\circ\Lambda^{\lambda}\}$, hence the proof of the implication.

    For the non-efficient family of LOCC, we construct a counterexample.
Consider  states $\rho^{\lambda}\in S^\lambda$ and the LOCC maps $\Lambda^{\lambda}$ of the form
    \begin{equation*}
        \rho^{\lambda}:=I\otimes U\Phi^{\otimes m(\lambda)}I\otimes U^{\dagger},\quad \Lambda^{\lambda}(\cdot):=I\otimes U^{\dagger}(\cdot)I\otimes U,
    \end{equation*}
    for any $U\in\mathcal{U}^{\lambda}$ and for every value of $\lambda\in\mathbb N_+$.
    One can distill $\{\Lambda^{\lambda}(\rho^{\lambda})\}=\{\Phi^{\otimes m(\lambda)}\}$ with the use of the trivial LOCC $\{\Gamma^{\lambda}\}=\{\mathbb I^\lambda\}$, implying that also the family of states $\{\rho^{\lambda}\}$ is distillable to $m(\lambda)$ EPR pairs with precision $\epsilon(\lambda)$. This contradicts Theorem \ref{th: local dist unita invar epsilon diff zero}, hence the contradiction. 
\end{proof}

\begin{corollary}\label{cor: unifrom locc monotone dist}
     Let the family of encoded states be $\{k,\rho^k_{AB}\}_{k\in\{0,1\}^{\kappa(\lambda)}}$ and $\{k,  \Lambda^k\}_{k\in\{0,1\}^{\kappa(\lambda)}}$ -- an efficient family of encoded  LOCC channels.  
     Then,
    \begin{equation*}
     \hat E_D^\varepsilon(\{k,\Lambda^k(\rho^k_{AB})\}) \geq m(\lambda) \implies \hat E^\varepsilon_D(\{k,\rho^k_{AB}\}) \geq m(\lambda).
    \end{equation*}
    However, if  $\epsilon:\mathbb N_+\rightarrow[0,1)$ is such that $m(\lambda)\sqrt{\varepsilon(\lambda)}\rightarrow0, \lambda\rightarrow+\infty$, the same relation does not hold for the non-efficient families of encoded LOCC channels. 
\end{corollary}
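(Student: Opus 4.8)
The plan is to lift the two-part argument of Theorem \ref{thm: locc not monotone comp dist ent} to the keyed setting of Definition \ref{def: unif locc monotone}: the positive implication by composing an efficient dilution-side protocol with the efficient LOCC family, and the negative part by reusing the $S^{\lambda}$ construction that already drives Corollary \ref{corr: unif distil inva}.

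For the implication, suppose $m(\lambda)$ is a valid lower bound on $\hat E_D^{\varepsilon}(\{k,\Lambda^k(\rho^k_{AB})\})$, witnessed by an efficient family of encoded LOCC channels $\{\hat\Gamma^{\lambda}\}$, and let $\Lambda^{\lambda}$ denote the keyed circuit implementing $\{k,\Lambda^k\}$. I would form the composed family $\{\tilde\Gamma^{\lambda}\}:=\{\hat\Gamma^{\lambda}\circ\Lambda^{\lambda}\}$. Since $\{\Lambda^{\lambda}\}$ and $\{\hat\Gamma^{\lambda}\}$ both have circuit descriptions of size $\operatorname{Poly}(\lambda)$, and the key register is classical and hence can be duplicated with $O(\kappa(\lambda))$ extra gates so that both circuits receive $k$, the family $\{\tilde\Gamma^{\lambda}\}$ is again an efficient family of encoded LOCC channels. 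For every $\lambda$ and every $k\in\{0,1\}^{\kappa(\lambda)}$,
\[
p_{\text{err}}(\tilde\Gamma^{\lambda},k,\rho^k_{AB}) = 1-F\bigl(\hat\Gamma^{\lambda}\circ\Lambda^{\lambda}(k,\rho^k_{AB}),\Phi^{\otimes m(\lambda)}\bigr) = p_{\text{err}}(\hat\Gamma^{\lambda},k,\Lambda^k(\rho^k_{AB})) \leq \varepsilon(\lambda),
\]
so uniformity (the same $\varepsilon(\lambda)$ for all $k$) is preserved and, by Definition \ref{def: uniform computational distillable entanglement}, $m(\lambda)$ is a valid lower bound on $\hat E_D^{\varepsilon}(\{k,\rho^k_{AB}\})$.

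For the failure under generic keyed LOCC channels I would argue by contradiction, reusing the counterexample behind Corollary \ref{corr: unif distil inva}. For each key $k\in\{0,1\}^{\kappa(\lambda)}$ pick a unitary $U^k\in\mathcal U^{\lambda}$ (possible since $|\mathcal U^{\lambda}|=|S^{\lambda}|\geq(1/\eta)^{\Omega(2^{2m(\lambda)})}$) and set
\[
\rho^k := (\mathbb I\otimes U^k)\,\Phi^{\otimes m(\lambda)}\,(\mathbb I\otimes U^{k\dagger}), \qquad \Lambda^k(\cdot) := (\mathbb I\otimes U^{k\dagger})(\cdot)(\mathbb I\otimes U^k).
\]
Then $\{k,\Lambda^k(\rho^k)\}=\{k,\Phi^{\otimes m(\lambda)}\}$ is distilled to $m(\lambda)$ EPR pairs with error $0\leq\varepsilon(\lambda)$ by the trivial identity family of encoded LOCC channels, which is efficient. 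If uniform LOCC monotonicity held for the (non-efficient) keyed family $\{k,\Lambda^k\}$, we would obtain $\hat E_D^{\varepsilon}(\{k,\rho^k\})\geq m(\lambda)$; but $\{k,\rho^k\}$ is obtained from $\{k,\Phi^{\otimes m(\lambda)}\}$ by the generic encoded local unitaries $\{k,\mathbb I\otimes U^k\}$, so this would make the uniform distillable entanglement invariant under a generic family of encoded local unitaries, contradicting Corollary \ref{corr: unif distil inva} under the hypothesis $m(\lambda)\sqrt{\varepsilon(\lambda)}\to 0$.

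The substantive obstacle is not in this corollary but in the machinery it invokes: the counting estimate of Lemma \ref{lemma:cardinality comparison LOCC}, which forces some efficient LOCC channel to distill a nontrivial mixture of two elements of $S^{\lambda}$, together with the resulting violation of the squashed-entanglement bound (Proposition \ref{prop: distillation and squashed}) once $m(\lambda)\sqrt{\varepsilon(\lambda)}\to 0$. Since that argument is already in place for Corollary \ref{corr: unif distil inva}, the only remaining work here is bookkeeping — checking that classical key registers can be copied so that $\hat\Gamma^{\lambda}\circ\Lambda^{\lambda}$ stays polynomial-size and keyed, and that the identity channel qualifies as an efficient encoded LOCC family in the counterexample — both of which are routine.
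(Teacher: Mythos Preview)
Your proposal is correct and mirrors exactly what the paper intends: the corollary is stated without proof as the uniform lift of Theorem \ref{thm: locc not monotone comp dist ent}, and your argument—composing the efficient keyed families for the positive direction and reusing the $S^{\lambda}$ counterexample (via Corollary \ref{corr: unif distil inva}) for the negative one—is precisely that lift. The extra bookkeeping you note about copying classical key registers is a reasonable gloss that the paper leaves implicit.
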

\section{Conclusion}
Here, we studied the mathematical structure of the computational entanglement measures introduced in \cite{arnon2023computational}. To this end, we developed a general framework to analyze entanglement measures given by function lower or upper bounds. This required extending notions such as convexity/concavity and superadditivity/subadditivity to their asymmetric counterparts: lower bound convexity and superadditivity, and upper bound concavity and subadditivity. We also formulated suitable extensions of local unitary invariance and LOCC monotonicity in this setting.

Applying this framework, we showed that both the computational distillable entanglement and the entanglement cost are lower bound convex and superadditive, and upper bound concave and subadditive, in the one-shot regime. These properties also extend to the uniform setting. Furthermore, we established that these computational measures are generally not invariant under arbitrary families of local unitaries, but are invariant under efficient families. Similarly, LOCC monotonicity holds only when restricted to efficient families of LOCC channels.

We believe that these properties may be useful in different areas of quantum information theory where computational constraints are relevant. As we have shown in Section \ref{sec: invariance local unitaries}, the computational entanglement measures are LOCC monotones, in the sense of Definition \ref{def: unif locc monotone}, and remain invariant only under efficient families of local unitaries. 

Several natural questions remain. In particular, it would be interesting to investigate other properties such as monogamy. Another direction is to generalize from the one-shot to the $k$-shot regime, in the spirit of the pure-state framework introduced in \cite{leone2025entanglement}, and to characterize the properties of such measures.
\bibliographystyle{alpha}
\bibliography{main}
\end{document}